\newtheorem{theorem}{Theorem}
\newtheorem{lemma}[theorem]{Lemma}
\newtheorem{proposition}[theorem]{Proposition}
\newtheorem{conjecture}[theorem]{Conjecture}
\newtheorem{corollary}[theorem]{Corollary}
\newtheorem{definition}[theorem]{Definition}
\theoremstyle{remark}
\newtheorem{remark}[theorem]{Remark}
\def\la{{\lambda}}
\def\cal L{{\mathcal L}}
\def\cd{\circledast}
\def\deg{ {\rm {deg}}}
\let\L\langle
\let\R\rangle
\let\g\gamma
\let\s\sigma
\let\y\infty
\def\cd{{\circledast}}
\def\lrw{\leftrightarrow}
\def \uw { \uparrow}\def \dw { \downarrow}
\def\s{\sigma}
\def\Lt{{\tilde \La}}
\let\n\noindent
\let\De\Delta
\def\K{{\mathcal K}}
\def\B{{\mathcal B}}
\let\la\lambda
\let\La\Lambda
\let\a\alpha
\let\Om\Omega
\let\ta\theta
\let\rw\rightarrow
\let\Rw\Rightarrow
\newcommand{\tcercle}[1]{\ensuremath{\setlength{\unitlength}{1ex}\begin{picture}(2.8,2.8)\put(1.4,1.4){\circle{2.8}\makebox(-5.6,0){#1}}\end{picture}}}
\let\n\noindent
\def\K{{\mathcal K}}
\def\B{{\mathcal B}}
\let\la\lambda
\let\La\Lambda
\let\Om\Omega
\let\ta\theta
\let\rw\rightarrow
\newcommand{\LL}{\ensuremath{\langle\!\langle}}
\newcommand{\RR}{\ensuremath{\rangle\!\rangle}}
\def\cd{{\circledast}}
\def\lrw{\leftrightarrow}
\def\B{\mathcal{B}}
\def\S{\mathcal{S}}
\def\Ke{\mathsf{K}}
\def\beq{\begin{equation}}
\def\eeq{\end{equation}}
\def\inv{\text{inv}}
\begin{document}

\title[Macdonald polynomials in superspace]{Macdonald polynomials in superspace as eigenfunctions of commuting
operators}
\thanks{This article is dedicated to Professor Adriano Garsia on the occasion of his 84th birthday.  The authors are extremely grateful to Alain Lascoux for spending time and effort to provide them a detailed outline of the proof of 
Proposition~\ref{quasip}.}

\thanks{This work was supported by FQRNT, NSERC,  FONDECYT \#1090034 and \#1090016,   and   by CONICYT's  anillo ACT56.}

\author{O. Blondeau-Fournier}
\address{D\'epartement de physique, de g\'enie physique et
d'optique, Universit\'e Laval,  Qu\'ebec, Canada,  G1V 0A6.}
\email{olivier.b-fournier.1@ulaval.ca}
\author{P. Desrosiers}

\address{Instituto de Matem\'atica y F\'{\i}sica, Universidad de
Talca, Casilla 747, Talca, Chile.}
\email{patrick@inst-mat.utalca.cl}

\author{L. Lapointe}
\address{Instituto de Matem\'atica y F\'{\i}sica, Universidad de
Talca, Casilla 747, Talca, Chile.}
\email{lapointe@inst-mat.utalca.cl }
\author{P. Mathieu}
\address{D\'epartement de physique, de g\'enie physique et
d'optique, Universit\'e Laval,  Qu\'ebec, Canada,  G1V 0A6.}
\email{pmathieu@phy.ulaval.ca}

\begin{abstract}
A generalization of the Macdonald polynomials depending upon both commuting and anticommuting variables has been introduced recently. The construction relies on certain orthogonality and triangularity relations. Although many 
 superpolynomials were constructed as solutions of a highly over-determined system, the existence issue was left open. This is resolved here: we demonstrate that the underlying construction has a (unique) solution. The proof uses, as a starting point, the definition of the Macdonald superpolynomials in terms of  the Macdonald non-symmetric polynomials via a non-standard (anti)symmetrization and a suitable dressing by anticommuting monomials. This relationship naturally suggests the form of two families of commuting operators that have the defined superpolynomials as their common eigenfunctions. These eigenfunctions are then shown to 
   be triangular and orthogonal. Up to a normalization, these two conditions  uniquely characterize these superpolynomials. Moreover, the Macdonald superpolynomials are found to be orthogonal with respect to a second (constant-term-type) scalar product, and its norm is evaluated. 
   The latter is shown to match (up to a $q$-power) the conjectured norm with respect to the original scalar product.
   Finally, we recall the super-version of the Macdonald positivity conjecture  and present two new conjectures which both provide a remarkable relationship between the new $(q,t)$-Kostka coefficients and the usual ones.
\end{abstract}

\date{December 2012}

\maketitle
\newpage 
\tableofcontents
\normalsize
\newpage

\section{Introduction}

\subsection{Macdonald superpolynomials and related positivity conjectures}

A candidate for the superspace extension of the Macdonald polynomials -- Macdonald superpolynomials for short -- has been obtained in \cite{BDLM}. 
Such an extension involves the anticommuting variables $\ta_1,\theta_2,\dots$ (with 
$\ta_i^2=0$), as well as the usual commuting variables $x_1,x_2,\dots$
The superspace approach turns out to be a very restrictive framework: each variable $x_i$ is considered to be  paired with an anticommuting variable $\ta_i$, so that  symmetric superpolynomials are required to be invariant under the interchange of pairs $(x_i,\ta_i)\lrw(x_j,\ta_j)$ \cite{DLMnpb}.

The construction in \cite{BDLM} is presented  as a conjecture (a point developed shortly). But the first exploration of the resulting superpolynomials 
revealed a very rich structure. As expected,  this two-parameter $(q,t)$ family of superpolynomials contains, in the appropriate limit ($q=t^\a, \, t\rw 1$), the Jack superpolynomials \cite{DLMcmp2}. But, what was totally unexpected a priori, is that  it contains two versions of the Hall-Littlewood superpolynomials (for $q=0$ and $q\rw\y$). Moreover, from each version of the latter, we can define a natural extension of the Schur polynomials (for $t=0$ and $t\rw\y$ respectively). Both types of Schurs have a positive integral decomposition into the monomial superpolynomials (a property that is not verified for the Jack superpolynomials at $\a=1$). But what is more remarkable is that, using the $q=t=0$ Schurs, we could conjecture a generalization of the Macdonald positivity conjecture (reviewed below).

The present article, although completely independent, is a continuation of our previous work \cite{BDLM}. It addresses the following issues: the existence  of the Macdonald superpolynomials, their relation with the non-symmetric Macdonald polynomials, and their characterization as an eigenvalue problem. As an aside, new conjectures for the Kostka coefficients are presented.

\subsection{The main result: an existence proof}
The conjectural
construction of \cite{BDLM} 
is proved here to be a valid  characterization of the Macdonald polynomials
in superspace.  To clarify this point, we first recall the definition 
of the ordinary Macdonald polynomials \cite{Mac}.

The Macdonald polynomials $P_\la=P_\la(x;q,t)$, in the variables 
$x=x_1,x_2,\dots$, are characterized by the two conditions:
 \begin{equation}\label{co12}\begin{array}{lll} 1)& P_{\lambda} =
m_{\lambda} + \text{lower terms},\\
2)&\LL  P_{\la}| P_{\mu} \RR_{q,t} =0\quad\text{if}\quad \la\ne\mu \, .
\end{array}\end{equation}
The triangular decomposition refers to the usual dominance order on partitions (see \eqref{ordre} below) and the $m_\la$'s are the monomial symmetric
functions. 
The  orthogonality relation is defined in the power-sum basis $p_\la=p_{\la_1}\cdots p_{\la_\ell}$, with { $p_r=\sum_{i\geq 1} x^r_i$ }, as
\begin{equation}\label{ortoqt}  \LL {p_\la}|
{p_\mu}\RR_{q,t}= z_\la\, \delta_{\la\mu}\,
\prod_{i=1}^{\ell(\la)}\frac{1-q^{\la_i}}{1-t^{\la_i}}\, , \qquad
z_{\la}=\prod_{i \geq 1} i^{n_{\la}(i)} {n_{\la}(i)!}\, ,
\end{equation}
$n_{\la}(i)$  being the number of parts in $\la$ equal to $i$. 
Since the dominance ordering is partial (for degrees $\geq 6$), 
the orthogonality constraint leads to an overdetermined system at each degree $\geq 6$. It is 
therefore necessary to show that there exists such a family 
of polynomials.  
This is generally done through an eigenvalue-problem characterization.

The brute-force approach followed in \cite{BDLM} 
was to look for a suitable deformation of the scalar product \eqref{ortoqt} that allows for nontrivial solutions to these systems. In this way, a
 candidate scalar product was identified and  a large number of Macdonald superpolynomials were constructed. The correctness of the construction was corroborated  by various conjectural properties that provide natural extensions to superspace of classical results on Macdonald polynomials.
However, establishing the existence of the Macdonald superpolynomials remained an open problem,  the corresponding eigenvalue problem being still missing.

The existence issue is resolved here: it is demonstrated that the superspace extension of the criteria \eqref{co12} has a solution. This is our main result (the notation that follows is explained in full detail in Section \ref{sps}).
\begin{theorem}\label{theo1}
Given a superpartition $\La=(\La^a;\La^s)$ of fermionic degree $m$,
there is a unique symmetric superpolynomial  $P_\La=P_\La(x,\ta;q,t)$,
with $x=(x_1,x_2,\dots)$ and $\theta=(\theta_1,\theta_2,\dots)$, such that: 
\begin{equation}\label{mac1}
\begin{array}{lll} 1)& P_{\Lambda} =
m_{\Lambda} + \text{lower terms},\\
2)&\LL  P_{\La}| P_{\Om} \RR_{q,t} =0\quad\text{if}\quad \La\ne\Om,
\end{array}\end{equation}
where $m_\La$ is a monomial superpolynomial and where lower terms refer to 
the dominance  ordering on superpartitions (see (\ref{eqorder1})). The scalar product is 
defined by 
\begin{equation}\label{newsp}
\LL {p_\La}|{p_\Om}\RR_{q,t}=(-1)^{\binom{m}2}\,z_\La(q,t)\delta_{\La\Omega}\, ,
\end{equation}where\begin{equation}\label{newz}
\qquad z_\La(q,t)
= z_{\La^s} 
\, q^{|\La^a|} \prod_{i=1}^{\ell(\La^s)}
\frac{1-q^{\La^s_i}}{1-t^{\La^s_i}}\, .\, \end{equation}  
\end{theorem}

\subsection{A key relationship: the connection with Macdonald non-symme\-tric polynomials}
The existence proof proceeds indirectly, via an alternative definition of the Macdonald superpolynomials, this one relying on a suitable (anti)symmetri\-zation of the non-symme\-tric Macdonald polynomials. Such a construction is akin to that of the Jack superpolynomials in terms of the non-symmetric polynomials worked out in \cite[Sect. 9]{DLMcmp2}. However, the present construction turns out to be trickier than in the Jack case. Indeed,  Macdonald polynomials with prescribed symmetry -- a priori, expected building blocks -- are obtained by $t$-(anti)symmetrization of some subset of variables, where the role of symmetric group is played by the Hecke algebra \cite{BDF,Ba}.  However, anticommuting variable cannot be $t$-antisymmetrized, and
the construction has to incorporate both the usual antisymmetrization and the $t$-symmetrization, the two operations being applied to distinct set of variables. This is made explicit in eq. \eqref{PvsE}. 

This definition of the Macdonald superpolynomials in terms of non-symmetric Macdonald polynomials is a crucial step toward the characterization of the former in terms of an eigenvalue problem. Recall that these non-symmetric polynomials are eigenfunctions of the Cherednik operators \cite{Che}. The mere symmetrization procedure that defines the Macdonald superpolynomials indicates how to dress symmetric combinations of the Cherednik operators in order to generate operators whose eigenfunctions are the Macdonald superpolynomials. Because we have two sets of variables (commuting and anticommuting), we need two families of commuting operators to obtain a non-degenerate eigenvalue characterization (see e.g. \cite{DLMcmp2} for the Jack case). We thereby construct two generating functions of commuting operators. These are the natural extension of the Sekiguchi operators  characterizing the Jack superpolynomials and introduced in \cite[Sect. 3]{DLM0}. To fully characterize the eigenvalue problem, it is sufficient to consider the simplest representative of each family. 

The proof of the existence of the superpolynomials defined by the two conditions in Theorem \ref{theo1} proceeds along standard arguments. But a crucial and difficult result that needs to be established is the self-adjoint property of the two eigenoperators. The outline of this proof was provided to us by Alain Lascoux \cite{Laspv}.

The relationship between Macdonald superpolynomials and the non-symme\-tric polynomials has a further direct consequence: it implies a second orthogonality relation, where in this case, the scalar product is a constant-term expression. We show that when the number of variables tends to infinity, the norm calculated from the constant-term scalar product is equal (up to a $q$-power) to the norm, 
conjectured in \cite{BDLM},
with respect to the scalar product defined in Theorem \ref{theo1}.

\subsection{Organization  of the article}
The outline of the article is the following. In Section \ref{sps}, we recall some basic definitions  related to superpolynomials and superpartitions. Section \ref{nonS} is also devoted to a review of known results, here pertaining to non-symmetric Macdonald polynomials. The definition of the Macdonald superpolynomials in terms of non-symmetric Macdonald polynomials is presented in Section \ref{smac}. In
Section \ref{2com}, we introduce two generating functions for commuting operators
and show that they have the Macdonald superpolynomials as their common eigenfunctions.
{This eigenfunction characterization allows us to demonstrate in Section \ref{tri} that the Macdonald superpolynomials have a triangular decomposition  in the monomial basis. In preparation for the orthogonality proof, the eigenvalue problem defining the Macdonald superpolynomials is simplified in Section  \ref{simp}, where it is shown to be sufficient  to consider two eigenoperators to get a non-degenerate characterization.}
The proof of the orthogonality with respect to the scalar product \eqref{newsp}, which solves the existence issue in Theorem~\ref{theo1},
is worked out in Section \ref{ortho}. 
{At first, the problem is reformulated in terms of action of the two  eigenoperators on the kernel. The long proof of the self-adjoint property of the Macdonald-type operators is worked out in Section \ref{self}. As a natural extension of these orthogonality results, a non-trivial duality relation is established in Section \ref{dual}. Given this, all the tools for the demonstration of the evaluation and norm conjectures of \cite{BDLM} are available, proof that would follow that in  \cite{DLMeva} for the Jack superpolynomials. Finally, a} second orthogonality relation is demonstrated in Section \ref{asp}.  The corresponding norm is evaluated, and shown to be equal, up to a power of $q$, to the conjectural norm of the Macdonald
superpolynomials with respect to the original scalar product.
The combinatorial identity on which this claim relies is demonstrated in the Appendix.

The last section (Section \ref{Kos}), devoted to the generalized Kostka coefficients, is somewhat off the streamline of this article but should be viewed in the context of the continuation of \cite{BDLM}.
The superspace version of the Macdonald  positivity conjecture 
is recalled in Section \ref{gkos}. Two symmetry properties of the generalized Kostka coefficients are presented. Although these are rather direct extensions of the usual  symmetry relations given in \cite{Mac}, one of these involve a new combinatorial number that is specific to superpartitions. This new data illustrates well the kind of novelties brought in by the introduction of anticommuting variables and the richness of the combinatorics of superpartitions.  Tables of Kostka coefficients are appended to this section.

Two new conjectures concerning the Kostka coefficients are given in Section \ref{2new}. Both results exhibit a different relationship between the $(q,t)$-Kostka coefficients in the $m=1$ sector with the ordinary (i.e., $m=0$) 
$(q,t)$-Kostka coefficients. In particular, Conjecture \ref{kos1} expresses the usual $(q,t)$-Kostka coefficients for partitions of degree $n$ as a 
sum of $(q,t)$-Kostka coefficients for superpartitions of degree $n-1$
and $m=1$. In other words, the super-version of the Macdonald positivity conjecture provides a refinement of the usual ones. Let us digress briefly 
and  point out that this result naturally poses the question: to which extent could we get information on the usual $(q,t)$-Kostka coefficients from the perspective of the Macdonald superpolynomials?   We intend to investigate this point elsewhere from a particular angle. Such a connexion, at this stage, could rightly be called
  ``Science fiction and Macdonald's superpolynomials'' \cite{BG}.


\section{Symmetric polynomials in superspace}\label{sps}

A polynomial {in} superspace, or equivalently, a superpolynomial, is
a polynomial in the usual $N$ variables $x_1,\ldots ,x_N$  and the $N$ anticommuting variables $\ta_1,\ldots,\ta_N$ over a certain field, which will be taken throughout this article to be $\mathbb Q(q,t)$.  
A superpolynomial $P(x,\theta)$,
with $x=(x_1,\ldots,x_N)$ and $\theta=(\theta_1,\ldots,\theta_N)$, is said to be symmetric if the following is satisfied
\cite{DLMnpb}:
\begin{equation}
\mathcal{K}_{\sigma}P(x,\theta)=P(x,\theta) \qquad {\rm for~all~} \sigma \in S_N 
\, ,
\end{equation}
where
\begin{equation}
\mathcal{K}_{\sigma}=\kappa_{\sigma}K_{\sigma},
\qquad\text{with}\quad\begin{cases}
&K_{\sigma}\,:\, (x_1,\dots,x_N) \mapsto (x_{\sigma(1)}, \dots,
x_{\sigma(N)})
\\
&\kappa_{\sigma}\,\;:\, (\theta_1,\dots,\theta_N) \mapsto (\theta_{\sigma(1)}, \dots,
\theta_{\sigma(N)}).
\end{cases}\end{equation}
The space of symmetric superpolynomials in $N$ variables
over the field $\mathbb Q(q,t)$ will
be denoted $\mathscr R_N$, and its inverse limit by $\mathscr R$ (loosely speaking, the number of variables is considered infinite in $\mathscr R$).

Before defining superpartitions, we  recall some definitions
related to partitions \cite{Mac}.
A partition $\lambda=(\lambda_1,\lambda_2,\dots)$ of degree $|\lambda|$
is a vector of non-negative integers such that
$\lambda_i \geq \lambda_{i+1}$ for $i=1,2,\dots$ and such that
$\sum_i \lambda_i=|\lambda|$.  The length $\ell(\lambda)$
of $\lambda$ is the number of non-zero entries of $\lambda$.
Each partition $\lambda$ has an associated Ferrers diagram
with $\lambda_i$ lattice squares in the $i^{th}$ row,
from the top to bottom. Any lattice square in the Ferrers diagram
is called a cell (or simply a square), where the cell $(i,j)$ is in the $i$th row and $j$th
column of the diagram.  
The conjugate $\lambda'$ of  a partition $\lambda$ is represented  by
the diagram
obtained by reflecting  $\lambda$ about the main diagonal.
Given a cell $s=(i,j)$ in $\lambda$, we let 
\begin{equation} \label{eqarms}
a_{\lambda}(s)=\lambda_i-j\, , \qquad {\rm and} \qquad l_{\lambda}(s)=\lambda_j'-i \,  .
\end{equation}
The quantities $a_{\lambda}(s)$ and $l_{\lambda}(s)$ 
are respectively called the arm-length and leg-length.
We say that the diagram $\mu$ is contained in $\la$, denoted
$\mu\subseteq \la$, if $\mu_i\leq \la_i$ for all $i$.  Finally,
$\la/\mu$ is a horizontal (resp. vertical) $n$-strip if $\mu \subseteq \lambda$, $|\lambda|-|\mu|=n$,
and the skew diagram $\la/\mu$ does not have two cells in the same column
(resp. row). 

Symmetric superpolynomials  are naturally indexed by superpartitions \cite{DLMnpb}. A superpartition $\Lambda$ of
degree $(n|m)$ and length $\ell$
  is a pair $(\Lambda^\circledast,\Lambda^*)$ of partitions
$\Lambda^\circledast$ and $\Lambda^*$ such
 that \cite{DLMeva}:
 \begin{enumerate} \item $\Lambda^* \subseteq \Lambda^\circledast$;
 \item the degree of $\Lambda^*$ is $n$;
 \item the length of $\Lambda^\circledast$ is $\ell$;
 \item the skew diagram $\Lambda^\circledast/\Lambda^*$
is both a horizontal and a vertical $m$-strip.\footnote{Some authors call such a diagram an $m$-rook strip.}
 \end{enumerate}
We refer to  $m$ and $n$ respectively as the fermionic degree 
and total degree 
of $\La$.
 Obviously, if
$\Lambda^\circledast= \Lambda^*=\lambda$,
then $\Lambda=(\lambda,\lambda)$ can be interpreted as the partition
$\lambda$.

We will also need another characterization of a superpartition.
 A superpartition $\La$ is 
a pair of partitions $(\La^a; \La^s)=(\La_{1},\ldots,\La_m;\La_{m+1},\ldots,\La_\ell)$, 
 where $\La^a$ is a partition with $m$ 
distinct parts (one of them possibly  equal to zero),
and $\La^s$ is an ordinary partition.   The correspondence 
between $(\Lambda^\circledast,\Lambda^*)$ and 
$(\Lambda^a; \Lambda^s)$ is given explicitly as follows:
given 
$(\Lambda^\circledast,\Lambda^*)$, the parts of $\Lambda^a$ correspond to the
parts of $\Lambda^*$ such that $\Lambda^{\circledast}_i\neq 
\Lambda^*_i$, while the parts of $\Lambda^s$ correspond to the
parts of $\Lambda^*$ such that $\Lambda^{\circledast}_i=\Lambda^*_i$.

The conjugate of a superpartition 
$\Lambda=(\Lambda^\circledast,\Lambda^*)$ is $\Lambda'=((\Lambda^\circledast)',(\Lambda^*)')$.
A diagrammatic representation of $\La$ is given by 
the Ferrers diagram of $\La^*$ with circles added in the cells corresponding
to $\Lambda^{\circledast}/\Lambda^*$.
For instance, if $\La=(\Lambda^a;\Lambda^s)
=(3,1,0;2,1)$,  we have $\Lambda^\circledast=(4,2,2,1,1)$ and $\Lambda^*
=(3,2,1,1)$, so that 
{\small
\begin{equation*} \label{exdia}
     \La^\cd:\quad{\tableau[scY]{&&&\\&\\&\\\\ \\ }} \quad
         \La^*:\quad{\tableau[scY]{&&\\&\\ \\ \\ }} \quad
 \Longrightarrow\quad      \La:\quad {\tableau[scY]{&&&\bl\tcercle{}\\&\\&\bl\tcercle{}\\ \\
    \bl\tcercle{}}} \quad    \La':\quad {\tableau[scY]{&&&&\bl\tcercle{}\\&&\bl\tcercle{}\\ \\
    \bl\tcercle{}}},
\end{equation*}}
\hspace{-0.3cm} where the last diagram illustrates the conjugation operation that corresponds, as usual, to replacing rows by columns.

The extension of the dominance ordering
to superpartitions  is \cite{DLMeva}:
\begin{equation} \label{eqorder1}
 \Omega\leq\Lambda \quad \text{iff}\quad
 \deg(\La)=\deg(\Om) ,
 \quad \Omega^* \leq \Lambda^*\quad \text{and}\quad
\Omega^{\circledast} \leq  \Lambda^{\circledast}.
\end{equation}
Note that comparing two superpartitions amounts to comparing  two pairs of ordinary partitions,
($\Omega^*$, $\La^*$) and
($\Omega^{\circledast} $, $ \La^{\circledast}$), 
with respect to the usual  dominance ordering: 
\begin{equation}\label{ordre}
   \mu \leq \la\quad\text{ iff }\quad |\mu|=|\la|\quad\text{
and }\quad \mu_1 + \cdots + \mu_i \leq \lambda_1 + \cdots + \lambda_i\quad \forall i \, . \end{equation}

Two simple bases  of the space of 
symmetric polynomials in superspace (with commuting indeterminates $x_1,\ldots ,x_N$ and anticommuting  inderterminates $\ta_1,\ldots,\ta_N$) will be particularly relevant to our work:\begin{enumerate}\item 
the extension of the monomial symmetric 
functions, $m_\La=m_\La(x,\theta)$, defined by
\begin{equation}
m_\La={\sum_{\sigma \in S_N} }' \mathcal{K}_\sigma \left(\theta_1\cdots\theta_m x_1^{\La_1}\cdots x_\ell^{\La_\ell}\right),
\end{equation}
where the sum is over the  permutations of $\{1,\ldots,N\}$ that produce distinct terms;
\item 
 the generalization of the power-sum 
symmetric
functions, $p_\La=p_\La(x,\theta)$, defined by
\begin{equation}\label{spower}
p_\La=\tilde{p}_{\La_1}\cdots\tilde{p}_{\La_m}p_{\La_{m+1}}\cdots p_{\La_{\ell}}\, ,\end{equation}
{where}\begin{equation} \tilde{p}_k=\sum_{i=1}^N\theta_ix_i^k\qquad\text{and}\qquad p_r=
\sum_{i=1}^Nx_i^r \, ,
\end{equation}  
with $k\geq 0$ and $r \geq 1$.
\end{enumerate}


\section{The non-symmetric Macdonald polynomials}\label{nonS}

The ordinary Macdonald polynomials can be defined by the conditions (1) and (2) in \eqref{co12}.
But they could alternatively be defined directly in terms of the so-called non-symmetric Macdonald    polynomials by a suitable symmetrization process \cite{Mac1,Che} (see also \cite{Mac2,Mar}). As will be shown in the following section, this can also be done for their superspace extension.  But since this result uses a fair amount of notations and definitions, it is convenient to summarize these here. 

The non-symmetric Macdonald polynomials are defined in terms of an eigenvalue problem formulated in terms of the Cherednik operators \cite{Che}. They are constructed from the 
operators $T_i$ 
defined as
\begin{equation}T_i=t+\frac{tx_i-x_{i+1}}{x_i-x_{i+1}}(K_{i,i+1}-1),\quad i=1,\ldots,N-1,\end{equation}
and
\begin{equation}
 {T_0=t+\frac{qtx_N-x_1}{qx_N-x_1}(K_{1,N}\tau_1\tau_N^{-1}-1)}\, ,
\end{equation}
where we recall that $ {K_{i,j}}$ exchanges the variables $x_i$ and ${x_{j}}$.
Note that for $t=1$, $T_i$ reduces to $K_{i,i+1}$. 
The $T_i$'s satisfy the  {affine} Hecke algebra relations  {($0\leq i\leq N-1$)}:
\begin{align}\label{Hekoalg}&(T_i-t)(T_i+1)=0\nonumber\\
&T_iT_{i+1}T_i=T_{i+1}T_iT_{i+1}\nonumber\\
&T_iT_j=T_jT_i \, ,\quad i-j \neq \pm 1 \mod N
\end{align}
where the indices are taken modulo $N$.
To define the Cherednik operators, we also need to introduce the
  $q$-shift operators 
  \begin{equation} \tau_i:\begin{cases}x_i\mapsto qx_i,\\ x_j\mapsto x_j\;
{\rm~if~} j\neq i, \end{cases}
\end{equation}
and the operator $\omega$ defined as:  
\begin{equation}\omega=K_{N-1,N}\cdots K_{1,2} \, \tau_1.
\end{equation}
We note that $\omega T_i=T_{i-1}\omega$ for $i=2,\dots,N-1$.

We are now in position to define the Cherednick operators:
\begin{equation}Y_i=t^{-N+i}T_i\cdots T_{N-1}\omega T_1^{-1}\cdots T_{i-1}^{-1},\end{equation}
where 
$ T_j^{-1}$ (also denoted  $\bar T_j$ below) is
\begin{equation}\label{Tinv}
 T_j^{-1}=t^{-1}-1+t^{-1}T_j,
 \end{equation} 
which follows from the quadratic relation \eqref{Hekoalg} of the  Hecke algebra.
 These  operators satisfy the following  relations  \cite{Che, Kiri} 
: \begin{align} \label{tsym1}
T_i \, Y_i&= Y_{i+1}T_i+(t-1)Y_i\nonumber \\
T_i \, Y_{i+1}&= Y_{i}T_i-(t-1)Y_i\nonumber \\
T_i Y_j & = Y_j T_i \quad {\rm if~} j\neq i,i+1.
\end{align}
It can be easily deduced from these relations that
\begin{equation}\label{TYi}
(Y_i+Y_{i+1})T_i= T_i (Y_i+Y_{i+1}) \qquad {\rm and } \qquad (Y_i Y_{i+1}) T_i =
T_i (Y_i Y_{i+1}). 
\end{equation}
But more importantly, the $Y_i$'s commute among each others, $[Y_i,Y_j]=0$,
and can therefore be simultaneously diagonalized. Their eigenfunctions are the
 (monic) non-symmetric Macdonald polynomials (labeled by compositions).
To be more precise,  the non-symmetric Macdonald polynomial $E_\eta$ is 
the 
unique polynomial with rational coefficients in $q$ and $t$ 
that is triangularly related to the monomials (in the Bruhat ordering on compositions)
\begin{align}\label{defEtrian}
E_\eta=x^\eta+\sum_{\nu\prec\eta}b_{\eta\nu}x^\nu
\end{align}
and that satisfies, for all $i=1,\dots,N$, 
\begin{align}Y_i E_\eta=\bar \eta_iE_\eta,\qquad\text{where}\qquad  \bar\eta_i =q^{\eta_i}t^{-\bar l_\eta(i)} \label{eigenvalY}
\end{align}
with $\bar l_\eta(i)=\# \{k<i|\eta_k\geq \eta_i\}+\# \{k>i|\eta_k> \eta_i\}$.
 The Bruhat order on compositions is defined as follows:
 \begin{align}\nu\prec\eta\quad \text{ {iff}}\quad \nu^+<\eta^+\quad \text{or} \quad \nu^+=\eta^+\quad \text{and}\quad w_\eta < w_\nu,
 \end{align}
 where $\eta^+$ is the partition associated to $\eta$ and 
$w_{\eta}$ is the unique permutation of minimal length such 
that $\eta = w_{\eta} \eta^+$ ($w_{\eta}$ permutes the entries of $\eta^+$).
In the Bruhat order on the symmetric group, $w_\eta {<} w_\nu$ iff
$w_{\eta}$ can be obtained as a  {proper} subword  of $w_{\nu}$.

The following two properties of the non-symmetric Macdonald polynomials will be needed below. 
The first one expresses the stability of the polynomials $E_\eta$ with respect to the number of variables  (see e.g. \cite[eq. (3.2)]{Mar}):
\begin{equation} \label{property1}
E_\eta (x_1,\dots,x_{N-1},0) =
\left \{ 
\begin{array}{ll}
E_{\eta_-} (x_1,\dots,x_{N-1})
& {\rm if~} 
\eta_N = 0\, , \\
0 & {\rm if~} \eta_N \neq 0\, .
\end{array} \right.
 \end{equation}
where $\eta_{-}=(\eta_1,\ldots, \eta_{N-1}) $.   
The second one gives the action of the operators $T_i$ on $E_\eta$ (see e.g. \cite[eq. (8)]{Ba} and \cite{BF}): 
\begin{equation} \label{property2}
T_i E_{\eta} = \left\{ 
\begin{array}{ll}
\left(\frac{t-1}{1-\delta_{i,\eta}^{-1}}\right) E_\eta + t E_{s_i \eta} & {\rm if~} 
\eta_i < \eta_{i+1} \, ,  \\
t E_{\eta} &  {\rm if~} 
\eta_i = \eta_{i+1} \, ,\\
\left(\frac{t-1}{1-\delta_{i,\eta}^{-1}}\right) E_\eta + \frac{(1-t{\delta_{i,\eta}})(1-t^{-1}\delta_{i,\eta})}{(1-{\delta_{i,\eta}})^2} E_{s_i \eta} & {\rm if~} 
\eta_i > \eta_{i+1} \, ,
\end{array} \right. 
\end{equation}
where $\delta_{i,\eta}=\bar \eta_i/\bar \eta_{i+1}$ {and} $s_i \eta=(\eta_1,\dots,\eta_{i-1},\eta_{i+1},\eta_i,\eta_{i+2},\dots,\eta_N)$.

Finally, we introduce the $t$-symmetrization and $t$-antisymmetrization operators of variables $x_{1},\ldots,x_{N}$ { \cite{Mac1}}:
\begin{equation}\label{upm}
 U^+_N=\sum_{\sigma\in S_N}T_\sigma\qquad\text{and}\qquad  U^-_N=\sum_{\sigma\in S_N}(-t)^{-\ell(\sigma)}T_\sigma \end{equation}
where
\begin{equation}T_\sigma=T_{i_1}\cdots T_{i_\ell}
\qquad  \text{if}\quad \sigma=s_{i_1}\cdots s_{i_\ell}.\end{equation}
Note that for any polynomial $f$ in the variables $x_1,\ldots,x_N$, we have 
$K_{i,i+1} U^+_Nf=U^+_Nf$,  but $ K_{i,i+1} U^-_Nf\neq -U_N^-f$ since \cite[eq.(2.26)]{Mar}
\begin{equation}\label{UvsA}
U^-_N\,f=t^{-\binom{N}{2}}\frac{\Delta^t_{N}}{\Delta_{N}}A_N\,{f},
\end{equation} 
where  
\begin{equation}\label{defdel} 
A_N=\sum_{\sigma \in S_N}(-1)^{\ell(\sigma)}K_\sigma\, , \quad  \Delta^t_N=\prod_{1\leq i<j\leq N}(tx_i-x_j)\, ,\quad  
\Delta_{N}=\Delta^1_N\, . 
\end{equation}
Note that $A_N$ is the usual antisymmetrization operator.
Below, we will designate by $S_m$ and $S_{m^c}$ the group of permutations of the variables $x_1,\ldots,x_m$ and $x_{m+1},\ldots,x_N$ respectively. For instance, $U^-_{m}$ and $U^+_{m^c}$ are defined as in \eqref{upm} but with $S_N$ replaced by $S_{m}$ and $S_{m^c}$ respectively. Similarly, we will frequently use the notation $\Delta^t_m$ which is defined as in \eqref{defdel} but with $N$ replaced by $m$.


\section{Macdonald superpolynomials} \label{PvsEs}

\subsection{Definition of the Macdonald superpolynomials}\label{smac}
We are now in position to define the Macdonald superpolynomials
in terms of the non-symmetric Macdonald polynomials.  We will prove later that 
the Macdonald superpolynomials defined in the next definition
do in fact provide a solution to the existence problem in Theorem \ref{theo1}.

\begin{definition}  The Macdonald superpolynomials $P_{\Lambda}={  P}_\La(x,\theta;q,t)$ are defined as
 \begin{equation}\label{PvsE} 
 {  P}_\La=
\frac{(-1)^{\binom{m}{2}}}{f_{\La^s}(t)\, t^{\inv(\La^s)} }
\sum_{\sigma \in S_N/(S_m\times S_{m^c})}\mathcal{K}_\sigma \theta_1\cdots\theta_m A_{m}U^+_{m^c}
E_{\Lambda^R} \, ,\end{equation}
where 
\begin{equation}\label{cla}
f_{\La^s}(t)=\prod_{{j\geq0}} [n_{\La^s}(j)]_t ! \, ,
\end{equation} 
with $n_{\La^s}(j)$ being the number of occurrences of $j$ in $\La^s$ and $\Lambda^R$ stands for the concatenation of $\La^a$ and $\La^s$ read in reverse order: \begin{equation}\Lambda^R=(\La_m,\ldots,\La_1,\La_N,\ldots,\La_{m+1})\, .\end{equation}
\end{definition}

 {In \eqref{PvsE}, we extended} the usual concept of inversion on a permutation to
a partition:  ${\inv}(\La^s)$ is the number of inversions in $\La^s$, the latter number being equal to
\begin{equation} \inv(\la)=\# \{n\geq i>j\, |\, \la_i<\la_j\} \, , \end{equation}
where $n$ is the number of entries in $\lambda$ (including 0's).
For instance, we have $\inv(22100)=8$.  In \eqref{cla}, we also used the following standard notation:
$$[k]_t!=[1]_t[2]_t \cdots [k]_t  \qquad {\rm with} \qquad
[m]_t=(1-t^m)/(1-t) \, .
$$ 


We first show that the stability of $E_\eta$ with respect to the number of variables can be lifted to that of $  P_{\Lambda}$.

 \begin{proposition}\label{propostable}
Suppose that $N>m$.  Then
the Macdonald superpolynomials $  P_{\Lambda}$ are stable with respect the number of variables, that is,
\begin{multline}
{  P}_{\Lambda}(x_1,\dots,x_{N-1},0,\theta_1,\dots,\theta_{N-1},0) =\\
\left \{ 
\begin{array}{ll}
{  P}_{\Lambda_-}(x_1,\dots,x_{N-1},\theta_1,\dots,\theta_{N-1}) & {\rm if~} 
\Lambda_N = 0\, , \\
0 & {\rm if~} \Lambda_N \neq 0\, ,
\end{array} \right.
\end{multline}
where $\Lambda_-=(\Lambda_1,\dots, \La_m ; \La_{m+1}, \ldots, \Lambda_{N-1})$.
\end{proposition}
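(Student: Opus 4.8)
The plan is to track the specialization $x_N\mapsto 0,\ \ta_N\mapsto 0$ (call it $\rho$) directly on the definition \eqref{PvsE}, using that $\rho$ interacts transparently with the coset sum once one records in which block the index $N$ sits. Write $P_\La=c_\La\sum_{\sigma}\mathcal{K}_\sigma(\ta_1\cdots\ta_m H)$ with $H=A_mU^+_{m^c}E_{\La^R}$ and $c_\La$ the scalar prefactor appearing in \eqref{PvsE}, so that $\mathcal{K}_\sigma(\ta_1\cdots\ta_m H)=\ta_{\sigma(1)}\cdots\ta_{\sigma(m)}\,K_\sigma H$. Since $\rho$ sets $\ta_N=0$, every coset whose representative sends $N$ into the antisymmetric block ($N\in\{\sigma(1),\dots,\sigma(m)\}$) is killed outright. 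In each surviving coset $N$ lies in the symmetric block, so, using the freedom to permute inside $S_{m^c}$, I may choose the representative with $\sigma(N)=N$; then $\rho\big(\mathcal{K}_\sigma(\ta_1\cdots\ta_m H)\big)=\ta_{\sigma(1)}\cdots\ta_{\sigma(m)}\,K_{\bar\sigma}H_0$, where $H_0=H|_{x_N=0}$ and $\bar\sigma\in S_{N-1}$. These surviving cosets biject with $S_{N-1}/(S_m\times S_{\{m+1,\dots,N-1\}})$, giving $\rho(P_\La)=c_\La\sum_{\bar\sigma}\mathcal{K}_{\bar\sigma}(\ta_1\cdots\ta_m H_0)$, which already has exactly the shape of an $(N-1)$-variable Macdonald superpolynomial. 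The whole problem is thereby reduced to identifying $H_0$.

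To compute $H_0$, note first that $A_m$ involves only $x_1,\dots,x_m$ and hence commutes with $x_N=0$, so $H_0=A_m\big((U^+_{m^c}E_{\La^R})|_{x_N=0}\big)$. Next I factor the symmetrizer through the coset decomposition of $S_{m^c}$ over the subgroup fixing $N$,
\begin{equation}
U^+_{m^c}=U^+_{\{m+1,\dots,N-1\}}\,\Sigma,\qquad \Sigma=1+T_{N-1}+T_{N-1}T_{N-2}+\cdots+T_{N-1}\cdots T_{m+1}.
\end{equation}
The left factor involves only $T_{m+1},\dots,T_{N-2}$, none of which touches $x_N$, so it too commutes with $x_N=0$, leaving
\begin{equation}
H_0=A_m\,U^+_{\{m+1,\dots,N-1\}}\big((\Sigma E_{\La^R})|_{x_N=0}\big).
\end{equation}
Everything now hinges on the single non-symmetric quantity $(\Sigma E_{\La^R})|_{x_N=0}$.

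Here the properties \eqref{property1} and \eqref{property2} enter. By \eqref{property2} each $T_{N-1}\cdots T_k$ sends $E_{\La^R}$ to a combination of $E_\nu$ whose exponent $\nu$ is a rearrangement of $\La^R$ permuting only the symmetric-block values $\{\La_{m+1},\dots,\La_N\}$. If $\La_N\neq 0$ all these values are positive, so every such $\nu$ has $\nu_N>0$ and \eqref{property1} annihilates each term: $(\Sigma E_{\La^R})|_{x_N=0}=0$, so that $\rho(P_\La)=0$. If $\La_N=0$, only the $\nu$ with $\nu_N=0$ survive \eqref{property1}; applying $U^+_{\{m+1,\dots,N-1\}}$ then collapses all of these (which are rearrangements of $(\La_-)^R$ in the symmetric block, with fermionic block $(\La_m,\dots,\La_1)$ unchanged) to a single multiple of $U^+_{\{m+1,\dots,N-1\}}E_{(\La_-)^R}$, by the standard fact that $U^+E_\mu\propto U^+E_{\mu'}$ for rearrangements $\mu,\mu'$, itself a consequence of \eqref{property2} and the identity $U^+T_i=tU^+$. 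Hence $H_0=\kappa\,A_mU^+_{\{m+1,\dots,N-1\}}E_{(\La_-)^R}$ for a scalar $\kappa$, and $\rho(P_\La)=(c_\La\kappa/c_{\La_-})\,P_{\La_-}$, with $c_{\La_-}$ the prefactor of \eqref{PvsE} attached to $\La_-$.

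The main obstacle is pinning down $\kappa$ and checking that it cancels the normalizations to give $c_\La\kappa=c_{\La_-}$, i.e. $\kappa=c_{\La_-}/c_\La$. Writing $z=n_{\La^s}(0)$ for the number of zero parts of $\La^s$, an elementary count gives $f_{\La^s}(t)/f_{(\La_-)^s}(t)=[z]_t$ and $\inv(\La^s)-\inv((\La_-)^s)=(N-m)-z$, so the required value is $\kappa=[z]_t\,t^{(N-m)-z}$. Extracting exactly this constant demands tracking, through \eqref{property2}, the $t$-weights produced as $\Sigma$ carries each of the $z$ available zeros into the last slot: a zero passing a strictly larger entry contributes a factor $t$ via the $\eta_i<\eta_{i+1}$ branch, two equal entries contribute a factor $t$ via the $\eta_i=\eta_{i+1}$ branch, the $z$ possible starting positions assemble into $[z]_t$, and the $(N-m)-z$ positive parts into $t^{(N-m)-z}$. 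Verifying that the many terms generated by \eqref{property2} reassemble, after symmetrization, into precisely $[z]_t\,t^{(N-m)-z}$ is the only genuinely delicate point; the rest is the structural reduction above.
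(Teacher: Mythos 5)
Your proposal is correct and follows essentially the same route as the paper: the same factorization $U^+_{m^c}=U^+_{m^c_-}(1+T_{N-1}+\cdots+T_{N-1}\cdots T_{m+1})$, the same use of \eqref{property1}--\eqref{property2} to kill all terms with a nonzero entry in the last slot, and the same normalization bookkeeping $f_{\La^s}(t)/f_{(\La_-)^s}(t)=[k]_t$, $\inv(\La^s)-\inv((\La_-)^s)=N-m-k$. The "delicate point" you flag is precisely the paper's computation $\bigl[T_{N-1}\cdots T_{m+i}E_{\La^R}\bigr]_{x_N=0}=t^{N-m-i}E_{\La_-^R}$ for $i\le k$ (and $0$ for $i>k$), and your accounting of the $t$-weights (equal-entry branch versus swap term, with diagonal terms dying at $x_N=0$) is exactly how that identity is verified.
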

\begin{proof}
From the definition of ${  P}_\Lambda$ it is immediate that it suffices to show that
\begin{equation}\label{redU}
\left[ U^+_{m^c} E_{\Lambda^R} \right]_{x_N=0} =  
\left \{
\begin{array}{ll}
\frac{c_{\Lambda_-}}{c_{\Lambda}} U^+_{m^c_-} E_{\Lambda_-^R} & {\rm if~} \Lambda_N=0 \, ,\\
0 & {\rm if~} \Lambda_N \neq 0\, ,
\end{array} \right.
\end{equation}
where 
 $m^c_-$ refers to the reduced set  of variables $x_{m+1},\ldots,x_{N-1}$ and 
 \begin{equation}
 \Lambda_-^R=(\La_-)^R=(\La_m,\ldots,\La_1,\La_{N-1},\ldots,\Lambda_{m+1}).
 \end{equation}
We stress that although $\La_-$ is a superpartition, $\La_-^R$ is a composition.
The constant $c_{\Lambda}(t)$ is the normalization constant in \eqref{PvsE}:
\begin{equation} \label{clambda}
c_{\Lambda}(t)=\frac{(-1)^{\binom{m}{2}}}{f_{\La^s}(t)\, t^{\inv(\La^s)} }\, .
\end{equation}
Now $U^+_{m^c} $ can be factorized as follows:
\begin{equation} \label{tfacto}
U^+_{m^c}=U^+_{m^c_-} (1 + T_{N-1} + T_{N-1} T_{N-2} + \dots + T_{N-1} \cdots  T_{m+1}),
\end{equation}
so that
\begin{multline}
\left[ U^+_{m^c} E_{\Lambda^R} \right]_{x_N=0} =\\  
U^+_{m^c_-} \Bigl[ 
(1 + T_{N-1} + T_{N-1} T_{N-2} + \dots + T_{N-1} \cdots T_{m+1})
E_{\Lambda^R} \Bigr]_{x_N=0}.
\end{multline}
Now, if $\Lambda^s$ has exactly $k=n_{\Lambda^s}(0)$ 
zero entries, it is easy to see
from \eqref{property1} and \eqref{property2}
that
\begin{equation}
\Bigl[ T_{N-1} \cdots T_{m+i} E_{\Lambda^R} \Bigr]_{x_N=0} =
\left \{
\begin{array}{ll}
t^{N-m-i} E_{\Lambda_-^R} & {\rm if~} i\leq k \, ,\\
0 & {\rm if~} i> k\, . 
\end{array} \right.
\end{equation}
Hence
\begin{equation}
\left[ U^+_{m^c} E_{\Lambda^R} \right]_{x_N=0} 
= \left \{
\begin{array}{ll}
t^{N-m-k} [k]_t U^+_{m^c_-} E_{\Lambda_-^R} & {\rm if~}  k > 0\, , \\
0 & {\rm if~} k=0 \, .
\end{array} \right.
\end{equation}
If $k>0$, we have that $\inv(\Lambda^s_-)=\inv(\Lambda^{s}) -N+m+k$.  Therefore,
$t^{N-m-k} [k]_t=c_{\Lambda_-}/c_{\Lambda}$ and the proposition follows. 
\end{proof}

\subsection{Two families of commuting (eigen)operators}\label{2com}

We now introduce two families of operators generalizing those introduced for the Jack superpolynomials \cite[Sect. 3]{DLM0} and defined as
\begin{equation}D^*(u;q,t)= 
\sum_{m=0}^N \sum_{\sigma \in S_N/(S_m\times S_{m^c})}\mathcal{K}_\sigma\left(\frac{\Delta_{m}}{\Delta^t_{m}}\prod_{i=1}^N(1+uY_i)
\frac{\Delta^t_{m}}{\Delta_{m}}\pi_{1,\ldots,m}\right)
\end{equation}
and
\begin{multline}
D^\circledast(u;q,t)=\\\qquad \qquad
\sum_{m=0}^N  \sum_{\sigma \in S_N/(S_m\times S_{m^c})}\mathcal{K}_\sigma\left(\frac{\Delta_{m}}{\Delta^t_{m}}\prod_{i=1}^m(1+uqY_i)\prod_{i=m+1}^N(1+uY_i)
\frac{\Delta^t_{m}}{\Delta_{m}}\pi_{1,\ldots,m}\right) \, .
\end{multline}
The operator $\pi_{1,\dots,m}$ is the  projection operator defined as
\begin{equation}\pi_{1,\dots,m}=\prod_{i= 1}^m\theta_i\partial_{\theta_i}\prod_{j=m+1}^N \partial_{\theta_j}{\theta_j}\, . \end{equation} 
In this equation,  $\partial_{\theta_i}$ denotes  the standard  derivative with respect to the Grassmann variable $\theta_i$, which is a linear operator such that,  for all polynomials $f=f(x,\theta)$ and $i,j\in\{1,\ldots, N\}$, 
\begin{equation} \partial_{\theta_i}\left(x_j f\right) = x_j\partial_{\theta_i}\left(f\right)\,\qquad   \partial_{\theta_i}\left(\theta_j f\right)=\delta_{i,j}\,f-\theta_j\,\partial_{\theta_i}\left(f\right)\, ,\end{equation}
and \beq \partial_{\theta_i}\partial_{\theta_j}\left(f\right)=-\partial_{\theta_j}\partial_{\theta_i}\left(f\right)\quad \Longrightarrow\quad \partial_{\theta_i}^2(f)=0\,.\eeq 
It is easy to see that
 \begin{equation}\pi_{1,\dots,m} \ta_{i_1}\cdots\ta_{i_k}=\begin{cases}\theta_1 \cdots \theta_m 
&{\rm if~}\{i_1,\dots,i_k \}=\{1,\dots,m \}\, ,\\0&{\rm if~}\{i_1,\dots,i_k \}\neq \{1,\dots,m \}.
\end{cases}\end{equation}

 If we let $D^*_n$ (resp. $D^\circledast_n$) 
be the coefficient of $u^n$ in $D^*(u;q,t)$ (resp.  $D^\circledast(u;q,t)$), the operators $D^*(u;q,t)$ and $D^\circledast(u;q,t)$ can be seen as the generating series
of the operators $D^*_n$ and $D^\circledast_n$ respectively.
These operators, when restricted to act on $\mathscr R_N$,
 can be considered as the generalization to superspace of 
the Macdonald operators. 
 
As will be shown below,  the superpolynomials $ {  P}_{\Lambda} $ are eigenfunctions of both $D^*(u;q,t)$ and $D^\cd(u;q,t)$.  A rationale for the
rather intricate 
structure of the operators is the following:  in order for $  P_\La$, as given by \eqref{PvsE}, to be an eigenfunction of an operator built out of $\prod(1+uY_i)$,  the factor ${\Delta^t_{m}}/{\Delta_{m}}$ needs to be inserted
to the right  to transform $A_m$ into $U^-$ via \eqref{UvsA}, 
so that $U^-$ can be commuted through the factors $\prod(1+uY_i)$.   Finally, the term  ${\Delta_{m}}/{\Delta^t_{m}}$ is added to the left to retransform 
$U^-$ into $A_m$. We now state the eigenfunction characterization and plunge into the details of the proof.

\begin{proposition} \label{LemmaSeki}
We have 
\begin{multline}
D^*(u;q,t) \, {  P}_{\Lambda} = \varepsilon_{\Lambda^*} (u;q,t) \, 
{  P}_{\Lambda}
\\ \quad \qquad {\rm and} \quad D^\circledast(u;q,t) \, 
{  P}_{\Lambda} = \varepsilon_{\Lambda^\circledast} (u;q,t) \, {  P}_{\Lambda} \, ,
\end{multline}
where $\varepsilon_{\lambda}(u;q,t)$ is given, for
any partition $\lambda$, by
\begin{equation}\label{vapu}
\varepsilon_{\lambda}(u;q,t) = \prod_{i=1}^N \bigl(1+u \, q^{\lambda_i}
t^{1-i} \bigr)\,.
\end{equation}
\end{proposition}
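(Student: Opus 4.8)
The plan is to compute $D^*(u;q,t)P_\Lambda$ and $D^\circledast(u;q,t)P_\Lambda$ directly from the definition \eqref{PvsE}, exploiting that each $Y_i$ acts diagonally on $E_{\Lambda^R}$ by \eqref{eigenvalY}. First I would note that $P_\Lambda$ lies in the $m$-fermion sector, so in the double sum $\sum_{m'=0}^N\sum_\tau \mathcal{K}_\tau(\cdots\pi_{1,\ldots,m'})$ defining $D^*$ only the term $m'=m$ survives, the projector $\pi_{1,\ldots,m'}$ annihilating every other fermionic degree. Moreover, among the coset representatives $\sigma\in S_N/(S_m\times S_{m^c})$ occurring in \eqref{PvsE}, the $\theta$-content of the $\sigma$-summand is $\theta_{\sigma(1)}\cdots\theta_{\sigma(m)}$, so only the identity coset contributes to $\pi_{1,\ldots,m}P_\Lambda$; hence $\pi_{1,\ldots,m}P_\Lambda=c_\Lambda\,\theta_1\cdots\theta_m\,A_mU^+_{m^c}E_{\Lambda^R}$ with $c_\Lambda$ as in \eqref{clambda}. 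It then remains to evaluate the $x$-operator $\tfrac{\Delta_m}{\Delta^t_m}\prod_i(1+uY_i)\tfrac{\Delta^t_m}{\Delta_m}$ on $A_mU^+_{m^c}E_{\Lambda^R}$, after which reapplying $\sum_\tau\mathcal{K}_\tau$ regenerates $P_\Lambda$ up to a scalar eigenvalue.

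For that inner evaluation I would follow the rationale stated just before the proposition. Applying \eqref{UvsA} in the form $\tfrac{\Delta^t_m}{\Delta_m}A_m=t^{\binom{m}{2}}U^-_m$ turns the plain antisymmetrizer into the $t$-antisymmetrizer, so the middle becomes $t^{\binom{m}{2}}\prod_i(1+uY_i)\,U^-_mU^+_{m^c}E_{\Lambda^R}$. The crucial algebraic input is that $\prod_i(1+uY_i)=\sum_k u^k e_k(Y)$ is a symmetric function of $Y_1,\ldots,Y_N$ and therefore commutes with every $T_i$: by \eqref{tsym1} and \eqref{TYi} each $T_i$ fixes $Y_j$ for $j\neq i,i+1$ and commutes with both $Y_i+Y_{i+1}$ and $Y_iY_{i+1}$, hence with any symmetric polynomial in the $Y$'s (it suffices to check the power sums). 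Thus $\prod_i(1+uY_i)$ commutes past $U^-_mU^+_{m^c}$, and by \eqref{eigenvalY} it acts on $E_{\Lambda^R}$ as the scalar $\prod_i(1+u\bar\eta_i)$ with $\eta=\Lambda^R$. Reapplying \eqref{UvsA} to convert $U^-_m$ back into $A_m$, the factors $t^{\pm\binom{m}{2}}$ cancel and the bracket collapses to $\bigl(\prod_i(1+u\bar\eta_i)\bigr)A_mU^+_{m^c}E_{\Lambda^R}$; summing over $\tau$ then yields $D^*(u;q,t)P_\Lambda=\bigl(\prod_i(1+u\bar\eta_i)\bigr)P_\Lambda$. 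For $D^\circledast$ the factor $\prod_{i=1}^m(1+uqY_i)\prod_{i=m+1}^N(1+uY_i)$ is no longer fully symmetric, but it is symmetric within each block $\{1,\ldots,m\}$ and $\{m+1,\ldots,N\}$; since $U^-_m$ uses only $T_1,\ldots,T_{m-1}$ and $U^+_{m^c}$ only $T_{m+1},\ldots,T_{N-1}$ --- never the crossing transposition $T_m$ --- each block factor commutes with both symmetrizers, and the identical argument goes through with eigenvalue $\prod_{i=1}^m(1+uq\bar\eta_i)\prod_{i=m+1}^N(1+u\bar\eta_i)$.

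It remains to identify these eigenvalues with $\varepsilon_{\Lambda^*}(u;q,t)$ and $\varepsilon_{\Lambda^\circledast}(u;q,t)$ of \eqref{vapu}. Writing $\bar\eta_i=q^{\eta_i}t^{-\bar l_\eta(i)}$, each product depends only on the multiset of pairs $\{(\eta_i,\bar l_\eta(i))\}_{i=1}^N$, so I would prove the combinatorial lemma that this multiset equals $\{(\Lambda^*_i,\,i-1)\}_{i=1}^N$. The point is that $\bar l_\eta(i)$ is exactly the rank of position $i$ when $\eta$ is sorted into its associated partition $\eta^+=\Lambda^*$, ties among equal parts being broken by left-to-right position; consequently the $\bar l_\eta$-values carried by a fixed part value $v$ sweep out precisely the consecutive block of row indices that $v$ occupies in $\Lambda^*$. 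For $D^\circledast$ the very same rank argument applies after raising by one the $a$-parts, i.e.\ the first $m$ entries of $\eta=\Lambda^R$, which reproduces the sorted values and row indices of the circled diagram $\Lambda^\circledast$. I expect this eigenvalue identification to be the main obstacle: one must track carefully how the reversed $a$- and $s$-blocks of $\Lambda^R$ interleave under sorting, and how the increment of the $a$-parts reconstructs $\Lambda^\circledast$, whereas the algebraic commutation, though indispensable, is a routine consequence of \eqref{tsym1}--\eqref{TYi} once the centrality of symmetric functions of the $Y_i$ is recorded.
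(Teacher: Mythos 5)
Your proposal is correct and follows essentially the same route as the paper's proof: kill all but the $m'=m$ term and the identity coset with the projector, convert $A_m$ into $U_m^-$ via \eqref{UvsA}, commute the symmetric (or block-symmetric) product $\prod(1+uY_i)$ through the symmetrizers using \eqref{TYi}, and identify the resulting eigenvalue on $E_{\Lambda^R}$ with $\varepsilon_{\Lambda^*}$ (resp.\ $\varepsilon_{\Lambda^\circledast}$) by the sorting/rank interpretation of $\bar l_\eta(i)$, which is exactly the paper's bijection $i\mapsto j_i$. Your explicit justification that the block factors of the $D^\circledast$ product commute with $U_m^-$ and $U^+_{m^c}$ because $T_m$ never occurs is a welcome elaboration of a step the paper leaves implicit, but it is not a different method.
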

\begin{proof}
First observe that 
\begin{equation}
\pi_{1,\dots,l}\, {\mathcal K}_w \theta_1 \cdots \theta_m =
\left\{
\begin{array}{ll}
 {\mathcal K}_w \theta_1 \cdots \theta_m & {\rm if~} l=m {\rm ~and~} w 
\in S_m \times S_{m^c}\, , \\
0 & {\rm otherwise} .
\end{array}
\right.
\end{equation}
We thus have
\begin{multline}\label{SSpart1}
 D^*(u;q,t)  \,  {  P}_{\Lambda}
\\ \qquad = c_{\Lambda}(t)
\sum_{\sigma \in S_N/(S_m\times S_{m^c})} {\mathcal K}_{\sigma} 
\frac{\Delta_{m}}{\Delta^t_{m}}\prod_{i=1}^N(1+uY_i)
\frac{\Delta^t_{m}}{\Delta_{m}}
\theta_1\cdots \theta_m A_{m}U^+_{m^c}
E_{\Lambda^R}  \\
\qquad \qquad = c_{\Lambda}(t)
\sum_{\sigma \in S_N/(S_m\times S_{m^c})} {\mathcal K}_{\sigma} 
t^{\binom{m}{2}} \frac{\Delta_{m}}{\Delta^t_{m}} \prod_{i=1}^N(1+uY_i)
\theta_1\cdots \theta_m U^-_{m}U^+_{m^c}
E_{\Lambda^R}  .
\end{multline}
Relations \eqref{TYi} imply that the product $\prod_{i=1}^N(1+uY_i)$ can be moved beyond the factors $U$:
\begin{multline}
D^*(u;q,t)  \,  {  P}_{\Lambda}
\\ \qquad  \qquad = c_{\Lambda}(t)
\sum_{\sigma \in S_N/(S_m\times S_{m^c})} {\mathcal K}_{\sigma} t^{\binom{m}{2}}
\frac{\Delta_{m}}{\Delta^t_{m}}
\theta_1\cdots \theta_m U^-_{m}U^+_{m^c}
\prod_{i=1}^N(1+uY_i) E_{\Lambda^R}  \\ \qquad = c_{\Lambda}(t)
\sum_{\sigma \in S_N/(S_m\times S_{m^c})} {\mathcal K}_{\sigma} 
\theta_1\cdots \theta_m A_{m}U^+_{m^c}
\prod_{i=1}^N(1+uY_i) E_{\Lambda^R} \, .
 \end{multline}
To prove the first statement, it thus suffices to prove that
\begin{equation} \label{eqseki1}
\left(\prod_{i=1}^N (1+u Y_i) \right)
E_{\Lambda^R}
 = \varepsilon_{\Lambda^*}(u;q,t) \, 
E_{\Lambda^R}.
\end{equation}
Similarly, to prove the second statement, we have to prove that
\begin{equation} \label{eqseki2}
\left(\prod_{i=1}^m (1+uq Y_i)  \prod_{i=m+1}^N (1+u Y_i) \right)
E_{\Lambda^R}
 = \varepsilon_{\Lambda^\circledast}(u;q,t) \, 
E_{\Lambda^R}.
\end{equation}


Let $\eta= \La^R$ and suppose that $\eta_i=r$.
It is easy to see that the quantity $ \bar{l}_\eta(i)$ 
in the eigenvalue $\bar \eta_i  = q^r t^{- \bar{l}_\eta(i)}$ 
of $Y_i$ (see \eqref{eigenvalY}) is such that
\begin{multline}
\bar{l}_\eta(i)= \#\{{\rm rows~of~} \Lambda^* 
{\rm~of~size~larger~than~}r\} 
 \\ + \#\{{\rm rows~of~}  \La^R {\rm~of~size~}r {\rm ~above~row~}i\} \, .
\end{multline}
Therefore, letting 
\begin{multline} \label{eqji}
j_i=\#\{{\rm rows~of~} \Lambda^* {\rm~of~size~larger~than~}r\} 
\\ + \#\{{\rm rows~of~} \La^R {\rm~of~size~}r {\rm ~above~row~}i\} +1
\end{multline}
we have $\{j_1,\dots,j_N \}=\{1,\dots,N\}$, $\Lambda_{j_i}^*=r$ 
and we recover \eqref{eqseki1} in the form
\begin{equation}  
\left( \prod_{i=1}^N(1+uY_i) \right) E_{\Lambda^R} = \prod_{j_i=1}^N(1+u q^{\Lambda^*_{j_i}} t^{1-j_i})E_{\Lambda^R} \, .
\end{equation}

Continuing with the same notation for the second case \eqref{eqseki2},
 we have that if
$i$ belongs to $\{1,\dots,m\}$ then
$\eta_i=r$ is the highest row of size $r$ in $\eta$,
and thus by \eqref{eqji} $\Lambda_{j_i}^*$ is also the highest row of size
$r$ in $\Lambda^*$.
Hence, the eigenvalue in this case is
\begin{equation}
q Y_i E_{\Lambda^R} = q^{\Lambda^*_{j_i}+1}t^{1-j_i} E_{\Lambda^R}.
\end{equation}
But since $i\in\{1,\ldots,m\}$, $\Lambda_{j_i}^*\in\Lambda^a$ and therefore $\Lambda^*_{j_i}+1= \Lambda_{j_i}^\circledast$.    
Now,
if $i$ does not belong to $\{1,\dots,m\}$, then we have 
$\Lambda_{j_i}^*=\Lambda_{j_i}^\circledast$ and then
\begin{equation}
\left(\prod_{i=1}^m (1+uq Y_i)  \prod_{i=m+1}^N (1+u Y_i) \right)E_{\Lambda^R} = \prod_{j_i=1}^N(1+uq^{\Lambda_{j_i}^\circledast}t^{1-j_i})E_{\Lambda^R}.
\end{equation}
\end{proof}

\subsection{The triangular decomposition of the Macdonald superpolynomials}\label{tri}
At this point, we have established that the $P_\La$'s are eigenfunctions of the operators $D^*(u;q,t)$ and $D^\cd(u;q,t)$. We now show that this characterization of the Macdonald superpolynomials entails a triangular decomposition into monomials. The chosen normalization in \eqref{PvsE} will make this decomposition unitriangular.

It is well known that if $x^\gamma$ appears in the monomial expansion of $Y_ix^\eta$, then $\gamma\leq \eta$ \cite[Eq.\ (4.13)]{Mac1}.  This statement however is not sufficient for our purposes.  We now give a slightly more precise characterization of the triangular action of the Cherednik operators $Y_i$ on 
monomials. For this, we need to define some operations on compositions.
Suppose $i<j$. Let
\begin{equation} 
s_{i,j} (\dots, \eta_i,\dots,\eta_j,\dots) =  (\dots, \eta_j,\dots,\eta_i,\dots). 
\end{equation} 
Let also $\hat s_{i,j}$ be the following restriction of $s_{i,j}$:  
\begin{equation} 
\hat s_{i,j} (\dots, \eta_i,\dots,\eta_j,\dots) =  (\dots, \eta_j,\dots,\eta_i,\dots) \quad\text{only if}\quad \eta_i>\eta_j\, .
\end{equation}  Finally, for $\ell=1,\dots, |\eta_i -\eta_j|-1$, let
\begin{equation} 
s_{i,j}^{(\ell)} (\dots, \eta_i,\dots,\eta_j,\dots) = 
\left \{
\begin{array}{ll}
 (\dots, \eta_i-\ell,\dots,\eta_j+\ell,\dots) & {\rm if~} \eta_i> \eta_j \, ,\\
 (\dots, \eta_i+\ell,\dots,\eta_j-\ell,\dots) & {\rm if~} \eta_i < 
\eta_j \, .
\end{array} \right.
\end{equation} 
Note that we will often use $s_i$, $\hat s_i$,  and $s_i^{(\ell)}$  instead of 
$s_{i,i+1}$, $\hat{s}_{i,i+1}$  and $s_{i,i+1}^{(\ell)}$, respectively.

\begin{lemma}  \label{lemmaY}
Suppose that $x^{\gamma}$ occurs in the monomial
expansion of $Y_i x^\eta$.  Then
\begin{equation} 
\gamma = g_{1} \cdots g_r (\eta)
\end{equation} 
where $g_k$ either stands for $\hat s_{i_k j_k}$ or $s_{i_k j_k}^{(\ell_k)}$.
We stress that if $g_k=\hat s_{i_k j_k}$ then $\omega_{i_k} > \omega_{j_k}$, where
$\omega = g_{k+1} \cdots g_r(\eta)$.
\end{lemma}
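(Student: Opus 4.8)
The plan is to read off the monomials of $Y_i x^\eta$ by propagating $x^\eta$ through the factorization $Y_i=t^{i-N}\,T_i\cdots T_{N-1}\,\omega\,T_1^{-1}\cdots T_{i-1}^{-1}$, applying the operators one at a time (rightmost first). Everything then reduces to the monomial expansions of the three building blocks $T_j$, $T_j^{-1}$ and $\omega$, together with a bookkeeping of how the resulting compositions relate to the input through the moves $\hat s$ and $s^{(\ell)}$.

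First I would compute the single-step action on a monomial $x^\mu$. A direct calculation from $T_j=t+\frac{tx_j-x_{j+1}}{x_j-x_{j+1}}(K_{j,j+1}-1)$ gives three regimes. If $\mu_j>\mu_{j+1}$, then $T_j x^\mu$ is supported exactly on $x^{\hat s_j(\mu)}$ (coefficient $1$) and on the $x^{s_j^{(\ell)}(\mu)}$ for $1\le \ell\le \mu_j-\mu_{j+1}-1$ (coefficient $1-t$), the coefficient of $x^\mu$ itself \emph{vanishing}; if $\mu_j=\mu_{j+1}$ then $T_jx^\mu=t\,x^\mu$; and if $\mu_j<\mu_{j+1}$ then $T_jx^\mu$ is supported on $x^\mu$, on the intermediate $x^{s_j^{(\ell)}(\mu)}$, and on the reverse swap $x^{s_j(\mu)}$. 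Using $T_j^{-1}=t^{-1}-1+t^{-1}T_j$ from \eqref{Tinv}, the same computation shows that $T_j^{-1}$ produces the identical set of compositions, except that now it is the increasing regime $\mu_j<\mu_{j+1}$ in which the coefficient of $x^\mu$ vanishes. Finally $\omega$ acts on a single monomial as the cyclic shift $\omega\,x^\mu=q^{\mu_1}x^{(\mu_2,\dots,\mu_N,\mu_1)}$. Thus each of $T_j^{\pm 1}$ contributes only the moves $\hat s_j$ (applied validly, since it occurs precisely when $\mu_j>\mu_{j+1}$) and $s_j^{(\ell)}$, plus — in one of the two regimes — a reverse swap, while $\omega$ contributes a cyclic rotation.

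The moves coming from the $\hat s_j$ and $s_j^{(\ell)}$ terms are immediately of the desired type, and they compose: if $\mu$ is reachable from $\eta$ and $\delta$ is reachable from $\mu$, then concatenating the two factorizations exhibits $\delta$ as reachable from $\eta$. The difficulty — and the main obstacle — is that the two remaining contributions, the reverse swap in the ``wrong'' regime of $T_j^{\pm 1}$ and the cyclic rotation of $\omega$, are not individually realizable as valid move sequences (the reverse swap turns an ascent into a descent, and a rotation moves a small entry to the right past larger ones). A naive step-by-step induction therefore breaks down, since the intermediate monomials need not be reachable. What saves the argument is that these two ``bad'' operations only ever occur in the combination dictated by $Y_i$, where they partially cancel: one checks on the composite word that whenever a reverse swap is produced it undoes part of the rotation performed by $\omega$, so that the net transformation $\eta\mapsto\gamma$ is again a composition of valid $\hat s$ and $s^{(\ell)}$ moves.

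To make this rigorous I would argue at the level of the net transformation rather than step by step. Introduce the partial order $\prec$ on compositions of \eqref{defEtrian} (compare $\eta^+$ in dominance, then $w_\eta$ in Bruhat order) and verify the two facts that (i) every move $\hat s$ and $s^{(\ell)}$ strictly lowers a composition in $\prec$ — so reachability forces $\gamma\preceq\eta$ — and (ii) conversely every $\gamma\preceq\eta$ is reachable from $\eta$ by valid moves, the $\hat s$ generating the Bruhat interval above $w_\eta$ at fixed content, and the $s^{(\ell)}$ together with $\hat s$ realizing the dominance descents that flatten $\eta^+$ to $\gamma^+$. Granting (i)--(ii), the lemma follows from the triangularity $Y_ix^\eta=\sum_{\gamma\preceq\eta}c_\gamma x^\gamma$, which refines the coarse bound $\gamma\le\eta$ of \cite[Eq.\ (4.13)]{Mac1}; the single-step expansions above then supply, for each surviving $\gamma$, an explicit valid factorization $\gamma=g_1\cdots g_r(\eta)$. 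I expect step (ii), the combinatorial statement that $\prec$ is \emph{exactly} the order generated by $\hat s$ and $s^{(\ell)}$, to be the technical heart, since it is what converts the global triangularity into the constructive move-by-move description asserted by the lemma.
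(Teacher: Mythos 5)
Your single-step computations are correct (they match \eqref{actionTi} and its analogue for $\bar T_j$), and your observation that the ``reverse swaps'' produced in the wrong regime must cancel against the cyclic rotation coming from $\omega$ is exactly the mechanism of the paper's proof. But you then abandon that route in favour of an order-theoretic argument, and that argument has a fatal gap: your step (ii) is false. The moves $\hat s_{i,j}$ and $s_{i,j}^{(\ell)}$ do \emph{not} generate the full lower order ideal of $\eta$ in the Bruhat order $\prec$. Concretely, take $N=3$, $\eta=(0,2,2)$. Since $s_{i,j}^{(\ell)}$ transfers $\ell<|\eta_i-\eta_j|$ units from the larger entry to the smaller one, it can never overshoot: starting from $(0,2,2)$ the first entry can never be raised above $1$, and a short check shows the set reachable by valid moves is exactly $\{(0,2,2),(1,1,2),(1,2,1)\}$. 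Yet $\gamma=(2,1,1)$ satisfies $\gamma^+=(2,1,1)<(2,2,0)=\eta^+$ in dominance, hence $\gamma\prec\eta$, and it is not reachable. So ``triangularity in $\prec$ plus reachability of everything below'' cannot prove the lemma.

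There is also a structural reason to distrust any proof that factors through Bruhat triangularity: the whole point of Lemma \ref{lemmaY} is that it is \emph{strictly stronger} than the bound $\gamma\preceq\eta$. It is used in Proposition \ref{proptriangS} to deduce $\gamma^{\circledast}\leq\eta^{\circledast}$, and in the example above $(2,1,1)^{\circledast}=(3,1,1)\not\leq(2,2,1)=(0,2,2)^{\circledast}$ (taking $m=1$), so the conclusion of Proposition \ref{proptriangS} would fail if every Bruhat-lower $\gamma$ could occur. The paper's proof instead carries out the cancellation you identified but did not execute: it writes every monomial of $Y_i x^\eta$ as $\gamma=f_i\cdots f_{N-1}\,s_{N-1}\cdots s_1\,\bar f_1\cdots\bar f_{i-1}(\eta)$ with each $f_j,\bar f_j$ equal to $s_j$, $s_j^{(\ell)}$ or (in the allowed regime only) the identity, observes that the all-defaults word $s_i\cdots s_{N-1}s_{N-1}\cdots s_1 s_1\cdots s_{i-1}$ is the identity permutation, and conjugates each deviation through the surrounding string of plain swaps so that it becomes a single $\hat s_{j,i}$ or $s_{j,i}^{(\ell')}$ applied directly to the (partially transformed) source composition; the regime restrictions on when the identity deviation can occur are precisely what guarantee the condition $\omega_{i_k}>\omega_{j_k}$ for each $\hat s$. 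That conjugation argument is the missing core of your proof.
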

\begin{proof}
One easily  shows that
\begin{equation}\label{actionTi} T_ix^\eta=a_\eta x^\eta+b_\eta x^{s_i\eta}+ (1-t)\mathrm{sgn}(\eta_i-\eta_{i+1})\sum_{\ell=1}^{|\eta_i-\eta_{i+1}|-1}x^{s_i^{(\ell)}\eta}\,,
\end{equation}
where sgn$(x)$ is the sign of $x$ and
\begin{equation}a_\eta=\begin{cases}t-1 ,& \eta_i<\eta_{i+1} \, , \\ 0,& \eta_i\geq \eta_{i+1}\, ,
\end{cases} \qquad \text{and}\qquad 
b_\eta=\begin{cases}t ,& \eta_i\leq \eta_{i+1}\, , \\ 1,& \eta_i>\eta_{i+1}\, .
\end{cases}
\end{equation}
Thus, the monomials that appear in
$T_i x^{\eta}$ are of the form $x^{s_{i} (\eta)}$, $x^{s_{i}^{(\ell)}(\eta)}$, for $\ell=1,\dots,|\eta_i-\eta_{i+1}|-1$ 
and,  possibly an extra $x^{\eta}$.    The latter extra term $x^{\eta}$ appears only if 
$\eta_i <\eta_{i+1}$.    We stress that if $\eta_i=\eta_{i+1}$, the operator $s_i$ also acts as the identity on $\eta$.  However, even in the latter case, we prefer to write $s_i\eta$ explicitly in order to avoid any confusion with the identity operator that produces the very first term on the rhs of \eqref{actionTi}.  

The action of the inverse operator $\bar T_i=t^{-1} T_i+(t^{-1}-1)$ immediately follows from  \eqref{actionTi}.
Once more, the monomials that appear in
$ \bar T_i x^{\eta}$ are of the form $x^{s_{i} (\eta)}$, $x^{s_{i}^{(\ell)}(\eta)}$, for $\ell=1,\dots,|\eta_i-\eta_{i+1}|-1$ 
and,  possibly an extra $x^{\eta}$.    However, the latter extra term $x^{\eta}$ appears this time only if 
$\eta_i > \eta_{i+1}$. 

Hence, from the definition of $Y_i$, the terms $x^\gamma$
appearing in $Y_i x^{\eta}$ 
are such that
\begin{equation} \label{eqseq}
\gamma =f_i \cdots f_{N-1} s_{N-1} \cdots s_{1} \bar f_1 \cdots \bar f_{i-1} (\eta) \, ,
\end{equation}
where $f_j$ and $\bar f_j$ correspond either to $s_{j}$, $s_{j}^{(\ell)}$ or the identity (whenever allowed).  

Let $j$ be such that $\bar f_{j}$ does not act as $s_j$,
and suppose that $j$ is the rightmost amongst such terms.
We have that (note that $j<i$)
\begin{multline} 
\gamma =f_i \cdots f_{N-1} s_{N-1} \cdots s_{1} \bar f_1\\ \cdots \bar 
f_{j-1} s_{j} \cdots s_{i-1} ( s_{i-1} \cdots s_{j+1}
 s_{j} \bar f_j
s_{j+1} \cdots s_{i-1} (\eta) )\, .
\end{multline}
Observe that $f_i \cdots f_{N-1} s_{N-1} \cdots s_{1} \bar f_1\cdots \bar 
f_{j-1} s_{j} \cdots s_{i-1}$ corresponds to the operator that appears in
\eqref{eqseq} with $\bar f_k = s_k$ for $k=j,\dots,i-1$.
If $\bar f_j$ is the identity, we have
\begin{multline} 
s_{i-1} \cdots s_{j+1}
 s_{j} \bar f_j
s_{j+1} \cdots s_{i-1} (\eta)
\\=
s_{i-1} \cdots s_{j+1}
 s_{j} s_{j+1} \cdots s_{i-1} (\eta) = s_{j,i} (\eta) \, .
\end{multline}
Note that by supposition, $\eta_j > \eta_i$ since $\bar f_j$ acted as the identity.  If $\bar f_j = s_{j}^{(\ell)}$, we have
\begin{multline} 
s_{i-1} \cdots s_{j+1}
 s_{j} \bar f_j
s_{j+1} \cdots s_{i-1} (\eta)
\\=
s_{i-1} \cdots s_{j+1} 
 s_{j}  s_{j}^{(\ell)}  s_{j+1} \cdots s_{i-1} (\eta) = s_{j,i}^{(\ell')} (\eta)\, , 
\end{multline}
for some $\ell'$ (the $\ell'$ is such that  $s_{j,i}s_{j,i}^{(\ell)}=
s_{j,i}^{(\ell')}$).
Repeating the same process, we
can get rid of all $\bar f_k$ such that $\bar f_k$ does not act as 
$s_k$.   More explicitly,  we have shown that
\begin{equation} \label{gamma2}
\gamma =f_i \cdots f_{N-1} s_{N-1} \cdots s_{1}\bar f_1\cdots \bar f_{i-1}(\mu) \, ,
\end{equation}
where all the $\bar f_k$ act as $s_k$ and where $\mu$ is the composition obtained from $\eta$ by the action of some  $s_{ji}^{\ell}$  and  $\hat s_{ji}$ with $j<i$.   

Now suppose that $f_j$ acts as the identity and suppose that
$j$ is the rightmost such terms.  In this case,
(supposing that all $\bar f_k$ act as $s_k)$ \eqref{gamma2} becomes 
(note that $i<j$ this time)
\begin{multline} 
\gamma =f_i \cdots f_{j-1} f_j s_{j+1} \cdots s_{N-1} s_{N-1} \cdots s_1 s_1 \cdots s_{i-1} (\mu)\\
= f_i \cdots f_{j-1} f_j s_{j} \cdots s_{i} (\mu)\, .
\end{multline}
Hence,
\begin{equation} 
\gamma = f_i \cdots f_{j-1} s_j s_{j} \cdots s_{i} 
(s_{i} s_{i+1} \cdots s_{j-1} f_j s_{j} s_{j-1}\cdots s_{i+1} s_i (\mu) )\, .
\end{equation}
If $f_j$ is the identity, we have
\begin{multline}
(s_{i} s_{i+1} \cdots s_{j-1} f_j s_j s_{j-1} \cdots s_{i+1} s_i (\mu) )\\
=
(s_{i} s_{i+1} \cdots s_{j-1} s_j s_{j-1} \cdots s_{i+1} s_i (\mu) )=s_{i,j}(\mu)\, .
\end{multline}
By supposition, $\mu_i > \mu_j$ since $f_j$ acted as the identity, so $s_{i,j}(\mu)=\hat s_{i,j}(\mu)$.
 If $f_j=s_j^{(\ell)}$, then for some $\ell'$,
\begin{multline} 
\gamma = f_i \cdots f_{j-1} s_j s_{j} \cdots s_{i} 
\big(s_{i} s_{i+1} \\
\cdots s_{j-1} s_j^{(\ell)} s_{j} s_{j-1}\cdots s_{i+1} s_i (\mu) \big)
=s_{i,j}^{(\ell')}(\mu)\, .
\end{multline}

Applying these operations again and
again we obtain that
\begin{equation}
 \gamma =f_i \cdots f_{N-1} s_{N-1} \cdots s_{1} \bar f_1 \cdots \bar f_{i-1} (g_1\cdots g_r (\eta))\, ,
\end{equation}
where the $g_k$'s are such as specified in the statement of the lemma, and
where all the $f_k$'s and $\bar f_k$'s act as $s_k$. But replacing
$f_k=\bar f_k=s_k$ in the previous equation, we obtain that
\begin{equation}
\gamma = g_1\cdots g_r (\eta) \, ,
\end{equation}
which completes the proof of the lemma.
\end{proof}

 \begin{proposition}\label{proptriangS} We have
\begin{equation}
D^*(u;q,t) \, m_{\Lambda} = \varepsilon_{\Lambda^*} (u;q,t) \, {  m}_{\Lambda}
+\sum_{\Omega < \Lambda} v_{\Lambda \Omega} \, m_{\Omega}
\end{equation}
and, similarly,
\begin{equation}
D^\circledast(u;q,t) \, m_{\Lambda} = \varepsilon_{\Lambda^\circledast} (u;q,t) \, {  m}_{\Lambda} +
\sum_{\Omega < \Lambda} \bar v_{\Lambda \Omega} \, m_{\Omega}\, .
\end{equation}
\end{proposition}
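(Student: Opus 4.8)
The plan is to compute $D^*(u;q,t)\,m_\La$ and $D^\circledast(u;q,t)\,m_\La$ directly, recycling the reduction from the proof of Proposition~\ref{LemmaSeki} with a monomial in place of the eigenfunction $E_{\La^R}$. Since $m_\La$ has fermionic degree $m$, only the matching term of the outer sum survives the projector, and one checks that $\pi_{1,\dots,m}m_\La=\theta_1\cdots\theta_m\,A_m\!\left(x_1^{\La_1}\cdots x_m^{\La_m}\right)m_{\La^s}(x_{m+1},\dots,x_N)$. Turning the adjacent factor $\Delta^t_m/\Delta_m$ into $t^{\binom m2}U^-_m$ by \eqref{UvsA}, commuting $\prod_i(1+uY_i)$ through $U^-_m$ by \eqref{TYi} as in \eqref{SSpart1}, and reabsorbing the prefactor into $A_m$, I obtain
\begin{equation}
D^*(u;q,t)\,m_\La=\sum_{\sigma\in S_N/(S_m\times S_{m^c})}\mathcal{K}_\sigma\,\theta_1\cdots\theta_m\,A_m\,F,\qquad F=\prod_{i=1}^N(1+uY_i)\left(x^{\La^a}m_{\La^s}\right),
\end{equation}
and the analogue for $D^\circledast$, with $F^\circledast$ built from $\prod_{i\le m}(1+uqY_i)\prod_{i>m}(1+uY_i)$. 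The seed $s:=x^{\La^a}m_{\La^s}$ is homogeneous of content $\La^*$, and its fermionic block carries precisely the distinct parts of $\La^a$, each of which sits above every bosonic copy of the same value.

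For the diagonal and the first dominance bound I would use that, for \emph{any} composition $\eta$, the multiset $\{\bar\eta_i\}$ of \eqref{eigenvalY} depends only on $\eta^+$, whence $\prod_i(1+u\bar\eta_i)=\varepsilon_{\eta^+}(u;q,t)$ — this is the computation of Proposition~\ref{LemmaSeki} read for arbitrary $\eta$. Expanding $s$ in the basis $\{E_\nu\}$ and letting $\prod_i(1+uY_i)$ act eigenvalue by eigenvalue, every $E_\nu$ with $\nu^+<\La^*$ contributes only monomials of content below $\La^*$, so that $F=\varepsilon_{\La^*}\,s+(\text{content}<\La^*)$. As $A_m$ preserves content, this already yields $D^*m_\La=\varepsilon_{\La^*}m_\La+\sum_{\Om^*<\La^*}v_{\La\Om}m_\Om$, with the diagonal coefficient exactly $\varepsilon_{\La^*}$. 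For $D^\circledast$ the value-top property of the fermionic slots means the $q$-shift replaces each fermionic eigenvalue $q^{\La^a_k}$ by $q^{\La^a_k+1}$; the analogous, though more delicate, analysis, now controlled by the fermionic-incremented content, produces the diagonal $\varepsilon_{\La^\circledast}$.

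It then remains to upgrade these one-sided bounds to the full order $\Om\le\La$, i.e. to show that the off-diagonal terms of $D^*m_\La$ also satisfy $\Om^\circledast\le\La^\circledast$, and those of $D^\circledast m_\La$ satisfy $\Om^*\le\La^*$; this is where Lemma~\ref{lemmaY} does the work and where I expect the main difficulty. Every monomial $x^\gamma$ of $F$ is, by that lemma, the image of a seed monomial under a word $g_1\cdots g_r$ in the operators $\hat s_{i,j}$ and $s_{i,j}^{(\ell)}$, and I would read each letter on the superpartition side: a factor $s_{i,j}^{(\ell)}$ moves boxes from a larger to a smaller part and lowers the content; a boundary-crossing $\hat s_{i,j}$ (with $i\le m<j$) fixes the content while displacing the $\circledast$-increment from a larger to a smaller value, lowering the fermionic-incremented content; and an internal $\hat s_{i,j}$ fixes both statistics (such content-preserving contributions cancel after the eigenvalue-by-eigenvalue analysis above). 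The delicate point — the main obstacle — is to prove that an \emph{arbitrary} word is simultaneously monotone for both the content partition and its fermionic-incremented counterpart, so that each surviving monomial reassembles, under $A_m$ and the outer sum $\sum_\sigma\mathcal{K}_\sigma$, into some $m_\Om$ with $\Om\le\La$; the $D^\circledast$ bookkeeping is entirely parallel.
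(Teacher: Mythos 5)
Your proposal follows essentially the same route as the paper: the same reduction through the projector and $U^-_m$, and the same use of Lemma~\ref{lemmaY} together with the case-by-case check that each letter $\hat s_{i,j}$ or $s_{i,j}^{(\ell)}$ weakly lowers both $\eta^*$ and $\eta^\circledast=(\eta+1^m)^+$; the only cosmetic difference is that you extract the diagonal coefficient of $D^*$ by expanding the seed in the $E_\nu$-basis, whereas the paper reads it off the leading monomial $x^{w(\Lambda)}$ exactly as in Proposition~\ref{LemmaSeki}. The ``main obstacle'' you flag at the end is not actually one: since each individual letter is monotone for both statistics on the intermediate composition it acts on (which is precisely the hypothesis Lemma~\ref{lemmaY} guarantees), monotonicity of an arbitrary word follows immediately from transitivity of the dominance order, and this is how the paper concludes.
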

\begin{proof}
Let $m$ be a fixed integer.  Define $\eta^{\circledast}$ to be equal
to $(\eta+1^m)^+$, and define $\eta^*$ to be equal to $\eta^+$. 
We say that a row $i$ of $\eta^\circledast$ is fermionic if
$\eta^\circledast_i \neq \eta^*_i$ and bosonic otherwise.

We will first show that if $x^\gamma$ appears
in $Y_i x^\eta$, then $\gamma^* \leq \eta^*$ and $\gamma^\circledast \leq \eta^\circledast$.  It is immediate from the definition of the Bruhat order on compositions
and the fact that $Y_i$ acts triangularly on monomials
that $\gamma^* \leq \eta^*$.  We have left to show that 
$\gamma^\circledast \leq \eta^\circledast$.  From Lemma~\ref{lemmaY}, we only need
to show that if $\gamma=g_k (\eta)$, for $g_k$ such as specified in the 
statement of the lemma,
then $\gamma^\circledast \leq \eta^\circledast$.  Suppose that
 $g_k=s_{i,j}$.  From Lemma~\ref{lemmaY},  we have $\eta_i > \eta_j$.
The only non-trivial case is when $i \leq m$ and $j>m$.
In that case, it is easy to see that $\gamma^{\circledast}$ will be obtained
from $\eta^{\circledast}$ by interchanging a fermionic 
element and a bosonic 
one, with the fermionic one being the largest.  This is easily seen to imply
that $\gamma^\circledast \leq \eta^\circledast$.  Now suppose that
$g_k=s_{i,j}^{(\ell)}$.  Again the only non-trivial case is when 
$i \leq m$ and $j>m$.  In that case, 
given that $\ell<(\eta_i-\eta_j)$,  
$\gamma^{\circledast}$ is obtained
from $\eta^{\circledast}$ by modifying a fermionic element and a bosonic one
in such a way that none of the two modified rows is larger than the largest of the original ones.  It is then immediate that again $\gamma^\circledast \leq \eta^\circledast$. 

Suppose now that $\Lambda$ is a superpartition in the fermionic sector $m$
(we will consider that $\Lambda$ is also the composition 
$(\Lambda_1^a,\dots,\Lambda_{m}^a,\Lambda_{1}^s,\dots)$, so that 
the monomial $m_\La$ can be written as 
\begin{equation}
m_\La=
\frac{1}{f_{\La^s}(1)} \sum_{\sigma \in S_N/(S_m \times S_{m^c})} \mathcal K_\sigma \left( \theta_1 \cdots \theta_m A_m \sum_{w\in S_{m_c}} x^{w(\La)} \right).
\end{equation}
Then,
\begin{multline} \label{bigthing}
 D^*(u;q,t)  \,  m_{\Lambda}
  \\
\qquad \propto
\sum_{\sigma \in S_N/(S_m\times S_{m^c})} {\mathcal K}_{\sigma} 
\frac{\Delta_{m}}{\Delta^t_{m}}\prod_{i=1}^N(1+uY_i)
\frac{\Delta^t_{m}}{\Delta_{m}}
\theta_1\cdots \theta_m A_{m} \sum_{w \in S_{m^c}}
x^{w(\Lambda)}   \\
 \qquad  =
\sum_{\sigma \in S_N/(S_m\times S_{m^c})} {\mathcal K}_{\sigma}  t^{\binom{m}{2}}
\frac{\Delta_{m}}{\Delta^t_{m}}\prod_{i=1}^N(1+uY_i)
\theta_1\cdots \theta_m U^-_{m}
\sum_{w \in S_{m^c}}
x^{w(\Lambda)}   \\
  \propto
\sum_{\sigma \in S_N} {\mathcal K}_{\sigma} 
\theta_1\cdots \theta_m A_{m}\sum_{w \in S_{m^c}}
\prod_{i=1}^N(1+uY_i) 
x^{w(\Lambda)}\, .  
 \end{multline}
Letting $\eta=w(\Lambda)$ we have that $\eta^*=\Lambda^*$ and
$\eta^\circledast=\Lambda^\circledast$.  From our prior analysis,
we thus have that all the terms $x^\gamma$
that appear in $\prod_{i=1}^N(1+uY_i) 
x^{w(\Lambda)}$ are such that $\gamma^* \leq \Lambda^*$ and
$\gamma^\circledast\leq\Lambda^\circledast$.  The triangularity
of the action of $D^*(u;q,t)$ is then immediate.
The triangularity of the action of
$D^\circledast(u;q,t)$ is proven in the same manner.
Finally, proceeding as in the proof of  Proposition~\ref{LemmaSeki},
it can be checked that 
$$
\prod_{i=1}^N(1+uY_i)  \, x^\Lambda = \varepsilon_{\Lambda^*} (u;q,t) \, {  x}^{\Lambda}+ {\rm lower~terms}
$$
and similarly, that
$$
\prod_{i=1}^m(1+uqY_i) \prod_{i=m+1}^N(1+uY_i)
  \, x^\Lambda = \varepsilon_{\Lambda^\circledast} (u;q,t) \, {  x}^{\Lambda}+ {\rm lower~terms}\, .
$$
This completes the proof of the proposition.
\end{proof}

\begin{proposition} \label{propuni} The Macdonald superpolynomials are unitriangularly related to the monomials.  In other words,
\begin{equation}
{  P}_{\Lambda} = m_{\Lambda} + \sum_{\Omega < \Lambda } c_{\Lambda \Omega}(q,t) 
\,m_{\Om} \, ,
\end{equation}
where we observe that $c_{\Lambda \Omega}(q,t)$ does not depend on $N$ from
Proposition~\ref{propostable}.
\end{proposition}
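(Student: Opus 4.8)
The plan is to read off the monomial expansion of $P_\Lambda$ from the two eigenvalue equations of Proposition~\ref{LemmaSeki}, using the triangularity established in Proposition~\ref{proptriangS} together with the fact that the pair of eigenvalues $(\varepsilon_{\Lambda^*},\varepsilon_{\Lambda^\circledast})$ separates superpartitions. By construction \eqref{PvsE}, $P_\Lambda$ is a homogeneous symmetric superpolynomial of total degree $n$ and fermionic degree $m$, so it expands as $P_\Lambda=\sum_{\Omega}c_{\Lambda\Omega}\,m_\Omega$ over the finitely many superpartitions $\Omega$ of bidegree $(n|m)$. Substituting this into $D^*(u;q,t)P_\Lambda=\varepsilon_{\Lambda^*}(u;q,t)P_\Lambda$ and extracting, via Proposition~\ref{proptriangS}, the coefficient of each $m_\Omega$ yields the triangular system
\[
\bigl(\varepsilon_{\Lambda^*}(u)-\varepsilon_{\Omega^*}(u)\bigr)\,c_{\Lambda\Omega}=\sum_{\Gamma>\Omega}v_{\Gamma\Omega}(u)\,c_{\Lambda\Gamma},
\]
and, from $D^\circledast$, the analogous system with $\varepsilon_{\Lambda^*},\varepsilon_{\Omega^*},v$ replaced by $\varepsilon_{\Lambda^\circledast},\varepsilon_{\Omega^\circledast},\bar v$. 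Here the sums run over superpartitions $\Gamma$ strictly dominating $\Omega$ in the order \eqref{eqorder1}, and both equalities hold identically in $u$.

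The next ingredient is the separation property: for any two partitions $\lambda\neq\mu$ one has $\varepsilon_\lambda(u;q,t)\neq\varepsilon_\mu(u;q,t)$. Indeed, $\varepsilon_\lambda$ is determined by the multiset $\{q^{\lambda_i}t^{1-i}\}_{i=1}^N$ of factors appearing in \eqref{vapu}, and since $q,t$ are independent indeterminates, equality of these multisets forces $\lambda_i=\mu_i$ for all $i$ by comparison of exponents. Consequently the pair $(\varepsilon_{\Lambda^*},\varepsilon_{\Lambda^\circledast})$ determines $(\Lambda^*,\Lambda^\circledast)$, hence $\Lambda$ itself, so distinct superpartitions of the same bidegree carry distinct eigenvalue pairs.

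With these two facts I would prove that $c_{\Lambda\Omega}\neq0$ implies $\Omega\leq\Lambda$, by downward induction along a fixed linear extension $\preceq$ of the dominance order, so that $\Gamma>\Omega$ (dominance) implies $\Gamma\succ\Omega$. Assume $\Omega\not\leq\Lambda$; since the degrees agree, either $\Omega^*\not\leq\Lambda^*$ or $\Omega^\circledast\not\leq\Lambda^\circledast$. Suppose the former. Then $\Lambda^*\neq\Omega^*$, so the separation property gives $\varepsilon_{\Lambda^*}(u)-\varepsilon_{\Omega^*}(u)\not\equiv0$; moreover any $\Gamma$ contributing to the right-hand side of the first system has $c_{\Lambda\Gamma}\neq0$ and $\Gamma>\Omega$, hence $\Gamma\succ\Omega$, whence $\Gamma\leq\Lambda$ by the induction hypothesis, so that $\Omega^*\leq\Gamma^*\leq\Lambda^*$, contradicting $\Omega^*\not\leq\Lambda^*$. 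Therefore no such $\Gamma$ exists, the right-hand side vanishes, and $c_{\Lambda\Omega}=0$; the case $\Omega^\circledast\not\leq\Lambda^\circledast$ is identical using the second system (the induction starts at the $\preceq$-maximal index, where the right-hand sides are empty). This step is the crux of the argument, and the reason two families of operators are needed: a single operator, triangular only for the partial dominance order, cannot resolve superpartitions sharing the same $\Lambda^*$ (respectively $\Lambda^\circledast$), whereas the transitivity of dominance within each of the two components is exactly what forces the off-diagonal coefficients outside the dominance cone to vanish.

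It remains to fix the normalization and the $N$-independence. The coefficient of $m_\Lambda$ equals that of the dominant monomial $\theta_1\cdots\theta_m\,x^{(\Lambda^a,\Lambda^s)}$, which I would compute by tracking the leading term $x^{\Lambda^R}$ of $E_{\Lambda^R}$ through the operators in \eqref{PvsE}: the antisymmetrization $A_m$ of the reversed block $\Lambda^a$ contributes the sign $(-1)^{\binom m2}$ and dominant exponent $\Lambda^a$, while the $t$-symmetrization $U^+_{m^c}$ of the reversed block $\Lambda^s$ contributes the factor $t^{\inv(\Lambda^s)}f_{\Lambda^s}(t)$ and dominant exponent $\Lambda^s$ (the latter being the mechanism behind \eqref{UvsA}). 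Multiplying by the normalization constant \eqref{clambda} then gives coefficient $1$, so $P_\Lambda=m_\Lambda+\sum_{\Omega<\Lambda}c_{\Lambda\Omega}m_\Omega$. Finally, the independence of the $c_{\Lambda\Omega}$ on $N$ follows from the stability of $P_\Lambda$ in Proposition~\ref{propostable} together with the corresponding stability of the monomials $m_\Omega$ under $x_N\to0$. I expect the normalization bookkeeping to be the most delicate routine step, while the conceptual heart is the partial-order induction above.
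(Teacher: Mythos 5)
Your proposal is correct and follows essentially the same route as the paper: triangularity is forced by combining the eigenvalue equations of Proposition~\ref{LemmaSeki} with the triangular action of Proposition~\ref{proptriangS} and the fact that $\varepsilon_{\lambda}(u;q,t)$ separates partitions, and the normalization is obtained by tracking the leading term $x^{\Lambda^R}$ of $E_{\Lambda^R}$ through $A_m$ and $U^+_{m^c}$ to produce exactly the factor $(-1)^{\binom{m}{2}}t^{\inv(\Lambda^s)}f_{\Lambda^s}(t)$ cancelled by $c_\Lambda$. The only (cosmetic) differences are that you run a downward induction along a linear extension and treat the two operators in separate cases according to whether $\Omega^*\not\leq\Lambda^*$ or $\Omega^{\circledast}\not\leq\Lambda^{\circledast}$, whereas the paper argues by a maximal offending $\Omega$ and uses the joint injectivity of the eigenvalue pair.
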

\begin{proof}  The triangularity is almost immediate
from Propositions~\ref{LemmaSeki} and \ref{proptriangS}.  
Suppose
that there exists a term $m_\Omega$ such that $\Omega \not \leq 
\Lambda$ in ${  P}_{\Lambda}$ and suppose that $\Omega$ is maximal among
those superpartitions.  Then by Proposition~\ref{proptriangS}
the coefficient of $m_{\Omega}$ in
$D^*(u;q,t) {  P}_{\Lambda}$ and $D^\circledast(u;q,t) {  P}_{\Lambda}$ is respectively
equal to $c_{\Lambda \Omega} \varepsilon_{\Omega^*} (u;q,t) $
and $c_{\Lambda \Omega}\varepsilon_{\Omega^\circledast} (u;q,t)$.  Since
we cannot have $\varepsilon_{\Omega^*} (u;q,t) = \varepsilon_{\Lambda^*} (u;q,t)$ and $\varepsilon_{\Omega^\circledast} (u;q,t) = 
\varepsilon_{\Lambda^\circledast} (u;q,t)$ 
at the same time ($\varepsilon_{\Lambda^*} (u;q,t)$
and $\varepsilon_{\Lambda^\circledast} (u;q,t)$ uniquely determine $\Lambda$),
we have the contradiction that ${  P}_{\Lambda}$ is not an eigenfuntion
of $D^*(u;q,t)$ and $D^\circledast(u;q,t)$ with
eigenvalues $\varepsilon_{\Lambda^*} (u;q,t)$
and $\varepsilon_{\Lambda^\circledast} (u;q,t)$ respectively.

We now have to prove that the coefficient of $m_{\Lambda}$ in 
${  P}_{\Lambda}$ is equal to 1. To prove this, we follow \cite[{Lemma 5.5}]{Mar}.   
We start with the expression \eqref{PvsE} for $  P_\La$, written compactly as
\begin{equation}
  P_\La = c_\La \sum_{\sigma \in S_N/(S_m \times S_{m^c})} \mathcal K_\sigma \theta_1 \cdots \theta_m A_m U^+_{m^c} E_{\La^R}\, ,
\end{equation}
where the constant $c_\La$ given in \eqref{PvsE} (or in \eqref{clambda}).
It suffices to concentrate on the coefficient of the term in $x^{\La}$.   
 From \eqref{defEtrian}, we see that it can only arise from the dominant term
\begin{equation}
  P_\La = c_\La \sum_{\sigma \in S_N/(S_m \times S_{m^c})} \mathcal K_\sigma \left( \theta_1 \cdots \theta_m A_m \sum_{w\in S_{m^c}}T_w\, x^{\La^R} \right) + \text{lower terms}\, ,
\end{equation}
where it should be observed that
$T_w$ acts on  $x_{m+1}^{\La_N} x_{m+2}^{\La_{N-1}} \cdots x_N^{\La_{m+1}}$.  Let \begin{equation} \underline{\eta}=(\La_1, \ldots, \La_m, \La_{N}, \ldots , \La_{m+1})\,,
\end{equation} and write
\begin{equation}
A_m x^{\La^R}=(-1)^{\text{inv}(\La^a)} A_m x^{\underline{\eta}}\, ,
\end{equation}
where $\text{inv}(\La^a)=m(m-1)/2$.  We thus have
\begin{equation*}
  P_\La = (-1)^{\binom{m}{2}} c_\La \sum_{\sigma \in S_N/(S_m \times S_{m^c})} \mathcal K_\sigma \left( \theta_1 \cdots \theta_m A_m \sum_{w\in S_{m^c}}T_w x^{\underline{\eta}} \right) + \text{lower terms}\, .
\end{equation*}
It is easy to show (from the explicit action of $T_i$) that 
{\cite[Lemma 2.3]{Mar} }
\begin{equation}
T_w x^{\underline{\eta}}= d_{w(\underline{\eta})}\, x^{w(\underline{\eta})} + \sum_{\nu \prec w(\underline{\eta})}d_\nu \,x^\nu,
\end{equation}
where $d_\nu \in \mathbb Q(t)$. In the following, we denote by $[x^\La]F(x)$ the coefficient of $x^\La$ in the expression $F(x)$. The coefficient of the  term $x^\La$ in $U^+_{m^c}\,x^{\underline{\eta}}$ is given by
\begin{equation} \label{fixcoeff}
[x^{\La}] \sum_{w \in S_{m^c}} T_w \,x^{\underline{\eta}} = \sum_{w \in S_{m^c} | w(\underline{\eta})=\La} t^{\ell(w)}\, ,
\end{equation}
where $\ell(w)$ is the length of the permutation $w$.  Suppose {that all the parts of $\La^s$ are distinct. Then there is } 
only one permutation $w$ that can give $w(({\La^s})^R)=\La^s$ and its length is given by $\ell(w)= \text{inv}(\La^s)$.  
Now, when there are repeated parts in $\La^s$, $\text{inv}(\La^s)$ is the length
of the permutation of minimal length such that $w(({\La^s})^R)=\La^s$.
{However,  we must also consider the contributions resulting} from permuting these repeated parts.  So, in general we can write
\begin{equation}
[x^{\La}] \sum_{w \in S_{m^c}} T_w x^{\underline{\eta}} = t^{\text{inv}(\La^s)} \prod_{i=0}^{\La^s_1} \left( \sum_{ w^{(i)} \in S_{m(i)} } t^{\ell(w^{(i)})} \right)\, ,
\end{equation}
where $m(i)=n_{\Lambda^s}(i)$.
Using $\sum_{\sigma \in S_k}t^{\ell(\sigma)}=[k]_t!$, we then obtain
\begin{equation}
[x^{\La}] \sum_{w \in S_{m^c}} T_w x^{\underline{\eta}} = t^{\text{inv}(\La^s)} \prod_{i=0}^{\La^s_1} [n_{\La^s}(i)]_t! \; .
\end{equation} 
Now, since $U^+_{m^c}x^{\underline{\eta}}$ is symmetric in the variables $x_{m+1}, \ldots, x_N$ and $A_m U^+_{m^c}x^{\underline{\eta}}$ is antisymmetric in the variables $x_{1}, \ldots, x_m$, the monomial $m_\La$ is reconstructed with these actions and multiplication by $\ta_1\cdots\ta_m$.  Hence, we have
\begin{equation}
[m_\La] \quad \sum_{\sigma\in S_N/(S_m \times S_{m^c})}\mathcal K_\sigma  \left( \theta_{1}\cdots \theta_m A_mU^+_{m^c}x^{\underline{\eta}} \right)= t^{\text{inv}(\La^s)} \prod_{i=0}^{\La^s_1} [n_{\La^s}(i)]_t!\; ,
\end{equation}
which immediately gives
\begin{multline}
[m_\La]  \;   P_\La = (-1)^{\binom{m}{2}} \,c_\La \, t^{\text{inv}(\La^s)} \prod_{i=0}^{\La^s_1} [n_{\La^s}(i)]_t! \\
= (-1)^{\binom{m}{2}} \,c_\La \,t^{\text{inv}(\La^s)} f_{\La^s}(t) =1.
\end{multline}
\end{proof}


\begin{corollary}  The ${  P}_\Lambda$'s form a basis of the space
 $\mathscr R_N$
of symmetric polynomials in superspace. 
\end{corollary}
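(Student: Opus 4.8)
The plan is to deduce the statement directly from the unitriangularity established in Proposition~\ref{propuni}, so that essentially no new work is required. First I would recall that the monomial superpolynomials $\{m_\Lambda\}$, with $\Lambda$ ranging over the superpartitions of length at most $N$, form a basis of $\mathscr R_N$. They are linearly independent because distinct superpartitions give rise to monomials $m_\Lambda$ with no common $\theta_1\cdots\theta_m x^\eta$ term, and they span $\mathscr R_N$ because averaging any monomial $\theta_{i_1}\cdots\theta_{i_m}x^\eta$ over $\mathcal K_\sigma$, $\sigma\in S_N$, reproduces (up to a scalar and a sign) one of the $m_\Lambda$. In particular, fixing the total degree $n$ and the fermionic degree $m$, the finitely many $m_\Lambda$ with $\deg(\Lambda)=(n|m)$ form a basis of the corresponding finite-dimensional graded component of $\mathscr R_N$.

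Next I would observe that each $P_\Lambda$ lies in the same graded component as $m_\Lambda$. Indeed, the dominance order $\Omega<\Lambda$ forces $\deg(\Omega)=\deg(\Lambda)$ by \eqref{eqorder1}, so every monomial $m_\Omega$ appearing in the expansion $P_\Lambda=m_\Lambda+\sum_{\Omega<\Lambda}c_{\Lambda\Omega}(q,t)\,m_\Omega$ of Proposition~\ref{propuni} shares both the total and the fermionic degree of $m_\Lambda$. Hence the transition matrix between $\{P_\Lambda\}$ and $\{m_\Lambda\}$ respects the grading, and within each finite graded block it is unitriangular with respect to the dominance ordering on superpartitions.

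The only genuine point to address is that dominance is merely a \emph{partial} order, so within each graded component I would fix an arbitrary linear extension of the dominance order to a total order. With respect to such a total order the transition matrix becomes unitriangular in the usual sense, hence has determinant $1$ and is invertible over $\mathbb Q(q,t)$. Inverting it expresses every $m_\Lambda$ as a $\mathbb Q(q,t)$-linear combination of the $P_\Omega$ with $\Omega\le\Lambda$, so the $P_\Lambda$ simultaneously span $\mathscr R_N$ and are linearly independent, and therefore constitute a basis. I do not anticipate any real obstacle: the substance is entirely contained in Proposition~\ref{propuni}, and the passage to a linear extension of the dominance order is the sole subtlety, needed precisely because that order is not total.
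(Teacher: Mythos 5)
Your argument is correct and is exactly the one the paper intends: the corollary is stated without proof immediately after Proposition~\ref{propuni}, as an immediate consequence of the unitriangular (hence invertible, after passing to a linear extension of the dominance order within each graded component) transition matrix to the monomial basis. Your explicit attention to the degree-preservation and to the partial-order subtlety fills in precisely the routine details the paper leaves implicit.
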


 Recall that $D^*_n$ (resp.  $D^\circledast_n$) 
is the coefficient of $u^n$ in $D^*(u;q,t)$ (resp.  $D^\circledast(u;q,t)$).
\begin{corollary} \label{corollaryD}
The $2N$ operators $D^*_n$ and $D^\circledast_\ell$ for $n,\ell=1,\dots,N$
are mutually commuting when their action is restricted to  $\mathscr R_{N}$.
\end{corollary}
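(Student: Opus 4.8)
The plan is to deduce the commutativity directly from the fact, already essentially established, that the Macdonald superpolynomials provide a common eigenbasis for all of these operators. First I would invoke Proposition~\ref{LemmaSeki}: for every superpartition $\Lambda$ we have $D^*(u;q,t)\,{P}_\Lambda = \varepsilon_{\Lambda^*}(u;q,t)\,{P}_\Lambda$ and $D^\circledast(u;q,t)\,{P}_\Lambda = \varepsilon_{\Lambda^\circledast}(u;q,t)\,{P}_\Lambda$. Writing $\varepsilon_{\Lambda^*}(u;q,t)=\sum_{n}\varepsilon^{(n)}_{\Lambda^*}\,u^n$ and $\varepsilon_{\Lambda^\circledast}(u;q,t)=\sum_{\ell}\varepsilon^{(\ell)}_{\Lambda^\circledast}\,u^\ell$, and extracting the coefficient of $u^n$ (resp.\ $u^\ell$) on both sides, I obtain
\[
D^*_n\,{P}_\Lambda = \varepsilon^{(n)}_{\Lambda^*}\,{P}_\Lambda
\qquad\text{and}\qquad
D^\circledast_\ell\,{P}_\Lambda = \varepsilon^{(\ell)}_{\Lambda^\circledast}\,{P}_\Lambda,
\]
where the $\varepsilon^{(n)}_{\Lambda^*}$ and $\varepsilon^{(\ell)}_{\Lambda^\circledast}$ are scalars in $\mathbb Q(q,t)$, since $\varepsilon_\lambda(u;q,t)$ is a polynomial in $u$ with coefficients independent of the variables (see \eqref{vapu}). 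Thus each of the $2N$ operators $D^*_n$, $D^\circledast_\ell$ acts diagonally on the family $\{{P}_\Lambda\}$.

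The second step is to appeal to the Corollary following Proposition~\ref{propuni}, which states that the ${P}_\Lambda$ form a basis of $\mathscr R_N$. Since $D^*_n {P}_\Lambda$ and $D^\circledast_\ell {P}_\Lambda$ are scalar multiples of ${P}_\Lambda$, these operators map $\mathscr R_N$ into itself, so the statement of the corollary is meaningful on $\mathscr R_N$. Now take any two of our operators $A,B\in\{D^*_n, D^\circledast_\ell\}$ and any superpartition $\Lambda$; writing $A\,{P}_\Lambda = a_\Lambda {P}_\Lambda$ and $B\,{P}_\Lambda = b_\Lambda {P}_\Lambda$, I get $AB\,{P}_\Lambda = a_\Lambda b_\Lambda\, {P}_\Lambda = BA\,{P}_\Lambda$. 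Hence the commutator $[A,B]$ vanishes on every element of a basis of $\mathscr R_N$, and therefore vanishes identically as an operator on $\mathscr R_N$, which is the assertion.

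There is essentially no analytic obstacle here; the genuine content was already carried by Proposition~\ref{LemmaSeki}, and what remains is the elementary remark that operators sharing a full eigenbasis commute. The only two points requiring a moment's care are that the $u$-coefficients of $\varepsilon_{\Lambda^*}$ and $\varepsilon_{\Lambda^\circledast}$ are genuine scalars (immediate from \eqref{vapu}), and that $\{{P}_\Lambda\}$ spans $\mathscr R_N$ so that vanishing of $[A,B]$ on this family forces vanishing on all of $\mathscr R_N$ — both already in hand. I would also stress that one does \emph{not} need the eigenvalues to separate the ${P}_\Lambda$: simultaneous diagonalizability in a single fixed basis already forces pairwise commutativity, regardless of whether distinct superpartitions yield distinct eigenvalue pairs.
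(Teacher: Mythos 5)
Your argument is correct and is precisely the one the paper intends: the corollary is stated without proof, as an immediate consequence of Proposition~\ref{LemmaSeki} (each $D^*_n$, $D^\circledast_\ell$ acts diagonally on the $P_\Lambda$'s) together with the preceding corollary that the $P_\Lambda$'s form a basis of $\mathscr R_N$, so the commutators vanish on a basis and hence identically.
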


\subsection{A simplified eigenfunction characterization}\label{simp}

We end this section with a characterization of the $P_{\Lambda}$'s as common 
eigenfuntions
of two commuting operators.  In the notation of Corollary~\ref{corollaryD}, we have
\begin{align}
D_1^*&=\sum_{m=0}^N
\sum_{\sigma \in S_N/(S_m\times S_{m^c})}\mathcal{K}_\sigma\left(\frac{\Delta_m}{\Delta^t_m}\left(Y_1 + \cdots +Y_N \right)
\frac{\Delta^t_{m}}{\Delta_{m}}\pi_{1,\ldots,m}\right), \\
D_1^\cd&=\sum_{m=0}^N
\sum_{\sigma \in S_N/(S_m\times S_{m^c})}\mathcal{K}_\sigma\left(\frac{\Delta_{m}}{\Delta^t_{m}}\left(
qY_1 + \cdots +qY_m \right. \right. \\ & \qquad \qquad \qquad\qquad    \qquad \qquad  \left. \left.+ Y_{m+1} + \cdots +Y_N \right)
\frac{\Delta^t_{m}}{\Delta_{m}}\pi_{1,\ldots,m}\right). \nonumber
\end{align} 
From the linear term in $u$ in \eqref{vapu}, we see that the eigenvalue of the above operators on $P_\La$ are
\begin{equation}\label{eigD}
D_1^\circ P_\La=\varepsilon^{(1)}_{\La^\circ}P_\La,\quad \text{where} \quad \circ\in\{*,\cd\},\quad 
\text{and}\quad \varepsilon^{(1)}_{\la}=\sum_iq^{\la_i}t^{1-i}.\end{equation}
 Given these two operators, it is natural to consider {the following} differences:
\begin{multline}
\mathcal O_1 =\frac{1}{1-q}(D_1^*-D_1^\cd) \\ 
= \sum_{m=0}^N
\sum_{\sigma \in S_N/(S_m\times S_{m^c})}\mathcal{K}_\sigma\left(\frac{\Delta_{m}}{\Delta^t_{m}}\left(
Y_1 + \cdots +Y_m  \right)
\frac{\Delta^t_{m}}{\Delta_{m}}\pi_{1,\ldots,m}\right),
\end{multline}
\begin{multline}
\mathcal O_2 = \frac{1}{q-1}(qD_1^*-D_1^\cd)\\
=\sum_{m=0}^N
 \sum_{\sigma \in S_N/(S_m\times S_{m^c})}\mathcal{K}_\sigma\left(\frac{\Delta_{m}}{\Delta^t_{m}}\left(
Y_{m+1} + \cdots +Y_N  \right)
\frac{\Delta^t_{m}}{\Delta_{m}}\pi_{1,\ldots,m}\right).
\end{multline}
 From \eqref{eigD}, we get
\begin{equation}
\mathcal O_1 P_{\Lambda} = \frac{1}{q-1}\left[\sum_{i=1}^N 
(q^{\Lambda_i^{\circledast}}-q^{\Lambda_i^*})t^{1-i} \right] P_{\Lambda}
\end{equation}
and
\begin{equation}
\mathcal O_2 P_{\Lambda} = \frac{1}{q-1}\left[\sum_{i=1}^N 
(q^{\Lambda_i^*+1}-q^{\Lambda_i^\circledast})t^{1-i} \right] P_{\Lambda}\, .
\end{equation}
Observe that the two eigenvalues are in one-to-one 
correspondence with $\Lambda$. 
We also define
\begin{equation}
\bar {\mathcal O}_1 = \sum_{m=0}^N\sum_{\sigma \in S_N/(S_m\times S_{m^c})}\mathcal{K}_\sigma\left(\frac{\Delta_{m}}{\Delta^t_{m}}\left(
\bar Y_1 + \cdots +\bar Y_m  \right)
\frac{\Delta^t_{m}}{\Delta_{m}}\pi_{1,\ldots,m}\right)\, ,
\end{equation}
where $\bar Y_i$ is the inverse of $Y_i$:
 \begin{equation}\label{invC}
 \bar Y_i=t^{N-i}T_{i-1}\cdots T_1\bar \omega \bar T_{N-1}\cdots \bar T_i,  \end{equation} 
 with $ \bar T_j=T_j^{-1}$ and  $ \bar \omega = \omega^{-1}$.
We have 
\begin{equation}
\bar{\mathcal O}_1 P_{\Lambda} = \frac{1}{1/q-1}\left[\sum_{i=1}^N 
(q^{-\Lambda_i^{\circledast}}-q^{-\Lambda_i^*})t^{i-1} \right] P_{\Lambda}.
\end{equation}

Finally, we define $E_{1,N} = \bar {\mathcal O}_1$ and
$
E_{2,N} =  {\mathcal O}_2 -\sum_{i=1}^N t^{1-i}$. 
It is easy to see that 
\begin{multline}\label{vape}
E_{1,N} {  P}_{\Lambda} = \left(\sum_{i \, : \, \Lambda^\circledast_i \neq 
\Lambda^*_i} q^{-\Lambda^*_i} t^{i-1}\right) {  P}_{\Lambda} \\\quad {\rm and} \quad
E_{2,N} {  P}_{\Lambda} = \left(\sum_{i \, : \, \Lambda^\circledast_i = 
\Lambda^*_i} (q^{\Lambda^*_i}-1) t^{1-i}\right) {  P}_{\Lambda}\, .
\end{multline}
Thus,  the eigenvalues of $E_{1,N}$ and $E_{1,N}$ do not depend on $N$. 
This property explains the substraction of $\sum_{i=1}^N t^{1-i}$  in the definition of $E_{2,N}$: it ensures that the eigenvalue does not depend upon the zeros
in $\La^s$.
 This, and the fact that 
$ {  P}_{\Lambda}$ is stable with respect to the number of variables,
allows us to define
\begin{equation}
E_1 = \lim_{\longleftarrow} E_{1,N} \qquad {\rm and} \qquad
E_2 = \lim_{\longleftarrow} E_{2,N}\, .
\end{equation}
We have the following characterization of ${  P}_{\Lambda}$.
\begin{proposition} The polynomial ${  P}_{\Lambda}$
is the unique polynomial in $\mathscr R$ such that
\begin{enumerate}
\item 
${  P}_{\Lambda} = m_{\Lambda} + \sum_{\Om <\La} v_{\La \Om} \, m_{\Om}$ \\
\item
$\displaystyle E_{1} {  P}_{\Lambda} = \Big(\! \sum_{\substack{i \\ \Lambda^\circledast_i \neq 
\Lambda^*_i}} q^{-\Lambda^*_i} t^{i-1}\Big) {  P}_{\Lambda} \quad {\rm and} \quad
E_{2} {  P}_{\Lambda} = \Big(\!\sum_{\substack{i \\ \Lambda^\circledast_i = 
\Lambda^*_i}} (q^{\Lambda^*_i}-1) t^{1-i}\Big) {  P}_{\Lambda}$.
\end{enumerate}
\end{proposition}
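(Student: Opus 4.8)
The plan is to split the statement into existence and uniqueness, with essentially all the work for existence already in hand. Condition (1) is exactly the unitriangularity established in Proposition~\ref{propuni} (there the coefficient of $m_\La$ is $1$ and $v_{\La\Om}=0$ unless $\Om<\La$), while condition (2) is the content of \eqref{vape}. The only point I would need to verify for existence is that the inverse-limit operators $E_1$ and $E_2$ are genuinely well defined on $P_\La\in\mathscr R$ and act with the stated eigenvalues. For this I would observe that, by \eqref{vape}, each $E_{1,N}$ and $E_{2,N}$ acts diagonally on the basis $\{P_\Om\}$ of $\mathscr R_N$ with eigenvalues independent of $N$; combined with the stability of $P_\La$ under $x_N=\theta_N=0$ (Proposition~\ref{propostable}), this makes the family $(E_{i,N})_N$ compatible with the projections defining the inverse limit, so that $E_i$ is the diagonal operator $E_iP_\La=c_i(\La)P_\La$ on $\mathscr R$, with $c_1(\La)=\sum_{i:\La^\cd_i\neq\La^*_i}q^{-\La^*_i}t^{i-1}$ and $c_2(\La)=\sum_{i:\La^\cd_i=\La^*_i}(q^{\La^*_i}-1)t^{1-i}$.

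Uniqueness is the substantive part, and I would run the standard triangularity-plus-distinct-eigenvalues argument already used in Proposition~\ref{propuni}. Since the $P_\Om$ form a basis of $\mathscr R$, any $Q$ obeying (2) expands as $Q=\sum_\Om a_\Om P_\Om$; applying $E_1$ and $E_2$ and using that $Q$ is a joint eigenfunction with eigenvalues $(c_1(\La),c_2(\La))$ forces $a_\Om=0$ whenever $(c_1(\Om),c_2(\Om))\neq(c_1(\La),c_2(\La))$. The crux is therefore to show that the map $\Om\mapsto(c_1(\Om),c_2(\Om))$ is injective on superpartitions; granting this, $Q=a_\La P_\La$, and condition (1), which normalizes the coefficient of $m_\La$ to $1$ (as $P_\La$ does by Proposition~\ref{propuni}), pins down $a_\La=1$, so $Q=P_\La$.

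The main obstacle is thus the injectivity of the eigenvalue map, which I expect to handle by reading off $\Om$ from $c_1$ and $c_2$ directly. In $c_1(\Om)=\sum_{i:\,\Om^\cd_i\neq\Om^*_i}q^{-\Om^*_i}t^{i-1}$ the $t$-exponents $i-1$ are pairwise distinct, so the coefficient of $t^{i-1}$ is either $0$ or a single power $q^{-\Om^*_i}$; hence $c_1$ records exactly which rows $i$ are fermionic together with the values $\Om^*_i$ (a fermionic row of size $0$ is detected, since its contribution $q^0t^{i-1}=t^{i-1}$ is nonzero). Likewise $c_2(\Om)$ has distinct $t$-exponents $1-i$, and the coefficient $q^{\Om^*_i}-1$ of $t^{1-i}$ is nonzero precisely for the bosonic rows with $\Om^*_i>0$, so $c_2$ records all positive bosonic rows with their values. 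The only rows invisible to both eigenvalues are the bosonic rows with $\Om^*_i=0$, which are irrelevant in the inverse limit $\mathscr R$; this is exactly the role of the subtraction $-\sum_{i=1}^N t^{1-i}$ in the definition of $E_{2,N}$. Because $\Om^*$ is weakly decreasing and a size-$0$ fermionic row (length $1$ in $\Om^\cd$) sits above the size-$0$ bosonic rows, the detected rows occupy an initial segment $\{1,\dots,\ell(\Om^\cd)\}$ with no gaps; reading off the value and the fermionic/bosonic label at each reconstructs $\Om^*$ on its support together with the circled cells, hence $\Om$ itself. This yields injectivity and completes the argument.
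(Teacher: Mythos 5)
Your proposal is correct and follows essentially the same route as the paper, which deduces existence from Proposition~\ref{propuni} and \eqref{vape} and uniqueness from the fact that $E_1$ and $E_2$ together have distinct eigenvalues. The only difference is that you actually verify the injectivity of the joint eigenvalue map $\Om\mapsto(c_1(\Om),c_2(\Om))$ (reading off fermionic rows from the $t^{i-1}$-coefficients of $c_1$ and positive bosonic rows from the $t^{1-i}$-coefficients of $c_2$), a point the paper asserts without proof; your verification is sound.
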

\begin{proof}
 From Proposition \ref{propuni} and \eqref{vape}, the superpolynomial ${  P}_{\Lambda}$ satisfies the two properties.  Since $E_1$ and $E_2$ 
have together distinct eigenvalues,
the two properties characterize ${  P}_\Lambda$.
\end{proof}

\section{Orthogonality and existence}\label{ortho}
 
 \subsection{Kernel and orthogonality}\label{ker}
Let $x=(x_1,x_2,\dots)$ and $y=(y_1,y_2,\dots)$ be two sets of commuting
variables, and let $\theta=(\theta_1,\theta_2,\dots)$ and $\phi=(\phi_1,\phi_2,\dots)$ be two sets of anticommuting variables. 
 We define the following reproducing kernel \cite{BDLM}:
\begin{equation}
K(x,\ta;y,\phi)=\prod_{i,j}\frac{\left(tx_iy_j;q\right)_\infty}{\left(x_iy_j;q\right)_\infty} 
\left(1 {+}\frac{\theta_i\phi_j}{1-q^{-1}x_iy_j}\right)
\end{equation}
where
\begin{equation}
(a;q)_\infty = \prod_{k=0}^{\infty} (1-a q^k) \, .
\end{equation}
Observe that
\begin{equation}
K(x,\ta;y,\phi)
=K^0(x;y)\prod_{i,j}\left(1 {+}\frac{\theta_i\phi_j}{1-q^{-1}x_iy_j}\right) \, ,
\end{equation}
where
\begin{equation}
K^0(x;y)= \prod_{i,j} \frac{\left(tx_iy_j;q\right)_\infty}{\left(x_iy_j;q\right)_\infty}
\end{equation}
is the usual Macdonald reproducing kernel \cite[eq. VI.2.4]{Mac}.  It is straightforward to show that
\begin{equation}\label{Ksp}
K(x,\ta;y,\phi)=\sum_\La  (-1)^{\binom{m}2}{z_\La(q,t)}^{-1} \, p_\La(x,\ta)\,p_\La(y,\phi)\, ,\end{equation}
where $z_\La(q,t)$ was defined in \eqref{newsp}.  
Recall from Theorem \ref{theo1} that the factor $(-1)^{\binom{m}2}{z_\La(q,t)} $ is the norm of the  scalar product of the power sums, i.e.,
\begin{equation}\label{newsP}
\LL {p_\La}|{p_\Om}\RR_{q,t}=(-1)^{\binom{m}2}\,z_\La(q,t)\delta_{\La\Omega}\, .
\end{equation}

The following propositions are standard and can be proven using methods
similar to those found in 
Macdonald's book \cite{Mac}.
\begin{proposition} For each $n,m$, let 
$\{ u_{\Lambda} \}$ and $\{v_{\Lambda} \}$
be bases of $\mathscr R^{n,m}$, where  $\mathscr R^{n,m}$
is the subspace of $\mathscr R$ of degree $(n|m)$.
Then the following conditions
are equivalent:
\begin{enumerate}
\item $\LL {u_\La}|{v_\Om}\RR_{q,t}=\delta_{\Lambda \Om}$ for all $\La,\Om$ ;
\item $K(x,\theta;y,\phi)=\sum_{\La} u_{\La}(x,\theta) v_{\La} (y,\phi)$.
\end{enumerate}
\end{proposition}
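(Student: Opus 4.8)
The plan is to reduce the equivalence to a purely linear-algebraic identity by passing to the power-sum basis $\{p_\Gamma\}$ of $\mathscr R^{n,m}$ (see \eqref{spower}), which is a basis over $\mathbb Q(q,t)$. Fixing the sector $(n|m)$, I would expand the two given bases as
\begin{equation*}
u_\La=\sum_\Gamma a_{\La\Gamma}\,p_\Gamma
\qquad\text{and}\qquad
v_\Om=\sum_\Gamma b_{\Om\Gamma}\,p_\Gamma ,
\end{equation*}
the sums running over superpartitions $\Gamma$ of degree $(n|m)$. Because $\{u_\La\}$ and $\{v_\Om\}$ are bases, the coefficient matrices $A=(a_{\La\Gamma})$ and $B=(b_{\Om\Gamma})$ are invertible. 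Setting $g_\Gamma=(-1)^{\binom m2}z_\Gamma(q,t)$ and $G=\diag(g_\Gamma)$, relation \eqref{newsP} becomes $\LL p_\Gamma|p_{\Gamma'}\RR_{q,t}=g_\Gamma\,\delta_{\Gamma\Gamma'}$.

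First I would translate condition (1): by bilinearity and \eqref{newsP},
\begin{equation*}
\LL u_\La|v_\Om\RR_{q,t}=\sum_\Gamma a_{\La\Gamma}\,b_{\Om\Gamma}\,g_\Gamma ,
\end{equation*}
so that (1) is exactly the matrix identity $A\,G\,B^{\top}=I$. Next I would translate condition (2): expanding its left-hand side gives
\begin{equation*}
\sum_\La u_\La(x,\ta)\,v_\La(y,\phi)
=\sum_{\Gamma,\Gamma'}\Bigl(\sum_\La a_{\La\Gamma}\,b_{\La\Gamma'}\Bigr)\,p_\Gamma(x,\ta)\,p_{\Gamma'}(y,\phi),
\end{equation*}
whereas \eqref{Ksp}, together with $(-1)^{\binom m2}z_\Gamma(q,t)^{-1}=g_\Gamma^{-1}$, writes the kernel as $K=\sum_\Gamma g_\Gamma^{-1}\,p_\Gamma(x,\ta)\,p_\Gamma(y,\phi)$. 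Since the products $\{p_\Gamma(x,\ta)\,p_{\Gamma'}(y,\phi)\}$ are linearly independent as functions of the two sets of variables, condition (2) is equivalent to $\sum_\La a_{\La\Gamma}b_{\La\Gamma'}=g_\Gamma^{-1}\delta_{\Gamma\Gamma'}$, that is, to $A^{\top}B=G^{-1}$.

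It then remains to check that, for invertible $A,B$ and symmetric invertible $G$, the identities $A\,G\,B^{\top}=I$ and $A^{\top}B=G^{-1}$ are equivalent: transposing the first and using $G^{\top}=G$ yields $B\,G\,A^{\top}=I$, hence $A^{\top}=G^{-1}B^{-1}$ and $A^{\top}B=G^{-1}$; the converse is obtained the same way. This closes the equivalence, and the whole scheme is exactly Macdonald's argument for ordinary symmetric functions. The one place where the super setting genuinely intervenes, and the step I expect to need the most care, is the sign bookkeeping for the anticommuting variables: one must verify that $\LL\cdot|\cdot\RR_{q,t}$ is extended bilinearly with no hidden signs, that every basis element in a fixed sector carries the same number $m$ of odd factors so the products $p_\Gamma(x,\ta)\,p_{\Gamma'}(y,\phi)$ combine with a uniform sign throughout the sum, and that the factors $(-1)^{\binom m2}$ in \eqref{newsP} and \eqref{Ksp} cancel consistently. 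Once these sign checks are dispatched, everything else is routine linear algebra.
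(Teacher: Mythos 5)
Your proposal is correct and is precisely the standard argument the paper has in mind: it states this proposition without proof, deferring to the methods of Macdonald's book (the analogue of \cite[VI.(2.7)]{Mac}), which is exactly the power-sum expansion and matrix identity $AGB^{\top}=I\iff A^{\top}B=G^{-1}$ that you carry out. Your sign bookkeeping is also right, since the coefficients $a_{\La\Gamma},b_{\Om\Gamma}$ are even and $(-1)^{-\binom{m}{2}}=(-1)^{\binom{m}{2}}$, so the factors in \eqref{newsP} and \eqref{Ksp} match as you claim.
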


\begin{proposition} \label{propoortho}
 Let $E: \mathscr R \to \mathscr R$ be a $\mathbb Q(q,t)$-linear operator.  Then the following conditions are equivalent:
\begin{enumerate}
\item $\LL E f\, | \, g\RR_{q,t}= \LL  f\, |  E g\RR_{q,t}  $ for all $f,g \in \mathscr R$; 
\item $ E^{(x,\theta)} K(x,\theta;y,\phi)=  E^{(y,\phi)} K(x,\theta;y,\phi)$, where $E^{(x,\theta)}$
(resp.  $E^{(y,\phi)}$) acts on the variables $x$ and $\theta$ (resp.
$y$ and $\phi$). 
\end{enumerate}
\end{proposition}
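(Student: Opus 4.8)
The plan is to pass to the power-sum basis, in which the kernel \eqref{Ksp} is diagonal and the pairing \eqref{newsP} is diagonal as well, and to show that conditions (1) and (2) are each equivalent to one and the same symmetry relation among the matrix coefficients of $E$. Throughout I abbreviate $\epsilon_\La=(-1)^{\binom{m}{2}}$ for a superpartition $\La$ of fermionic degree $m$ (so $\epsilon_\La^2=1$) and write $z_\La=z_\La(q,t)$. The one substantive input is the \emph{reproducing property} of $K$: for every $f\in\mathscr R$,
\[
\LL K(x,\theta;y,\phi)\,\big|\,f(y,\phi)\RR^{(y,\phi)}_{q,t}=f(x,\theta),
\]
the scalar product being taken in the $(y,\phi)$ variables. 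This follows at once from \eqref{Ksp} and \eqref{newsP}: for $f=\sum_\Om c_\Om p_\Om$ one has $\LL p_\La\,\big|\,f\RR=\epsilon_\La z_\La c_\La$, so that the factor $\epsilon_\La z_\La^{-1}$ carried by $K$ cancels it (using $\epsilon_\La^2=1$), leaving $\sum_\La c_\La p_\La(x,\theta)=f$. Equivalently, $\{p_\La\}$ and $\{\epsilon_\La z_\La^{-1}p_\La\}$ are dual bases in the sense of the preceding proposition.

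Next I would expand $E p_\La=\sum_\Om E_{\Om\La}\,p_\Om$ and translate each condition into a relation on the $E_{\Om\La}$. Since $\{p_\La\}$ is a basis and the pairing is diagonal, condition (1) amounts to $\LL Ep_\La|p_\Om\RR=\LL p_\La|Ep_\Om\RR$ for all $\La,\Om$, i.e.
\[
\epsilon_\Om\,z_\Om\,E_{\Om\La}=\epsilon_\La\,z_\La\,E_{\La\Om}.
\]
For condition (2) I would substitute \eqref{Ksp} into $E^{(x,\theta)}K-E^{(y,\phi)}K$, expand $E$ on each side, and collect the coefficient of the linearly independent product $p_\La(x,\theta)\,p_\Om(y,\phi)$; the vanishing of every such coefficient reads
\[
\epsilon_\Om\,z_\Om^{-1}\,E_{\La\Om}=\epsilon_\La\,z_\La^{-1}\,E_{\Om\La}.
\]
Multiplying the first relation through by $(z_\La z_\Om)^{-1}$ and then by $\epsilon_\La\epsilon_\Om$, and using $\epsilon_\La^2=\epsilon_\Om^2=1$, turns it into the second, and the argument reverses; hence conditions (1) and (2) are both equivalent to this single relation, which proves the proposition. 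No convergence issue arises, since one works separately in each bidegree $(n|m)$ of $\mathscr R$, where all sums are finite.

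The step I expect to be most delicate is precisely this sign bookkeeping. One must check that the sign $\epsilon_\Om z_\Om^{-1}$ produced when $E^{(x,\theta)}$ acts on the $x$-factor of $K$ is matched, after the relabelling $\La\leftrightarrow\Om$ forced by isolating a common monomial, against the sign $\epsilon_\La z_\La^{-1}$ coming from $E^{(y,\phi)}$ acting on the $y$-factor, and that no further sign is generated by moving $E^{(x,\theta)}$ through the $(y,\phi)$-scalar product or by commuting the anticommuting monomials in $\theta$ past those in $\phi$. The computation above shows these are all consistent, the mechanism being that $(-1)^{\binom{m}{2}}$ is an involution; with the opposite sign convention in \eqref{newsp} the two conditions would not coincide.
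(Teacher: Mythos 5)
Your proof is correct and is exactly the standard argument that the paper itself omits (it only remarks that the proposition "can be proven using methods similar to those found in Macdonald's book"): expand $E$ in the power-sum basis, use that $\{p_\La\}$ and $\{(-1)^{\binom{m}{2}}z_\La(q,t)^{-1}p_\La\}$ are dual bases, and observe that both conditions reduce to the same symmetry of the matrix coefficients $E_{\Om\La}$, with the sign $(-1)^{\binom{m}{2}}$ squaring to one. Nothing further is needed.
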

The rest of this section will be devoted to the proof of the following proposition, whose corollary implies Theorem~\ref{theo1}.
\begin{theorem} \label{propoadj}
We have that\begin{equation}
\begin{split}
 E_1^{(x,\theta)} K(x,\theta;y,\phi)&=  E_1^{(y,\phi)} K(x,\theta;y,\phi)\, ,\\
E_2^{(x,\theta)} K(x,\theta;y,\phi) &=  E_2^{(y,\phi)} K(x,\theta;y,\phi)\, . 
\end{split}\end{equation}
\end{theorem}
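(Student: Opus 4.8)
The plan is to invoke Proposition~\ref{propoortho}, which identifies the two kernel identities of the theorem with the self-adjointness of $E_1$ and $E_2$ with respect to $\LL\,\cdot\,|\,\cdot\,\RR_{q,t}$, and then to reduce these to self-adjointness statements for the generating operators of Section~\ref{2com}. Since $E_2=\mathcal{O}_2-\sum_{i=1}^N t^{1-i}$ with $\mathcal{O}_2=\tfrac{1}{q-1}(qD_1^*-D_1^\cd)$, and the subtracted scalar is trivially self-adjoint, it suffices to show that the degree-one coefficients $D_1^*$ and $D_1^\cd$ of $D^*(u;q,t)$ and $D^\cd(u;q,t)$ verify the kernel identity; by linearity in $u$ it is in fact natural to prove the identity for the whole generating functions $D^*(u;q,t)$ and $D^\cd(u;q,t)$ at once. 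The operator $E_1=\bar{\mathcal{O}}_1$ is assembled from the inverse Cherednik operators $\bar Y_i$ of \eqref{invC}, so I would prove the corresponding identity for the analogous generating functions built from the $\bar Y_i$; because the $\bar Y_i$ again commute and obey the same type of intertwining relations as in \eqref{TYi}, this case runs in parallel, and can alternatively be obtained from the $Y_i$-case through the $q,t\mapsto q^{-1},t^{-1}$ symmetry, once the induced rescaling of $z_\La(q,t)$ has been accounted for.

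For the kernel identity itself I would argue sector by sector in the fermionic degree $m$. Using the factorization $K=K^0(x;y)\prod_{i,j}\bigl(1+\ta_i\phi_j/(1-q^{-1}x_iy_j)\bigr)$, the projector $\pi_{1,\dots,m}$ extracts from the $\ta$-variables the component carrying exactly $\ta_1\cdots\ta_m$ and no higher $\ta$, replacing the fermionic factor by an explicit finite sum over injections of $\{1,\dots,m\}$ into the $\phi$-indices. The $m=0$ sector then reduces precisely to the classical kernel identity $D(u)^{(x)}K^0=D(u)^{(y)}K^0$ for the ordinary Macdonald operators, where $\prod_i(1+uY_i)$ descends, on symmetric polynomials, to the generating function of the Macdonald operators. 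For $m\geq1$ I would propagate this base case through the conjugating factors: inside each coset summand one trades the right-hand dressing $\Delta^t_m/\Delta_m$ for the $t$-antisymmetrizer via \eqref{UvsA}, pushes the symmetric function $\prod_i(1+uY_i)$ of the Cherednik operators through the symmetrizing factors using the commutations \eqref{TYi}, and then restores the left-hand dressing $\Delta_m/\Delta^t_m$ -- exactly the manoeuvre already used in the proof of Proposition~\ref{LemmaSeki}.

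The main obstacle is that the operators one actually needs are not symmetric in all $N$ variables: $D^\cd(u;q,t)$ weights $Y_1,\dots,Y_m$ by an extra factor $q$, and $\mathcal{O}_1,\mathcal{O}_2$ involve only the partial power sums $Y_1+\cdots+Y_m$ and $Y_{m+1}+\cdots+Y_N$. The fully symmetric part is controlled by the classical self-adjointness invoked above, but the non-symmetric remainder couples the fermionic dressing to the first $m$ variables and must be handled directly. This is the delicate computation outlined by Lascoux: one expresses the dressed partial sums of the $Y_i$ through the generators $T_i$ and the shift $\omega$, and tracks their adjoint action on $K$ by repeated use of the quadratic relation $(T_i-t)(T_i+1)=0$, the inversion formula \eqref{Tinv}, and the interchange \eqref{UvsA} between $t$-(anti)symmetrization and ordinary (anti)symmetrization. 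I expect this non-symmetric bookkeeping, together with verifying that the injection-sum produced by $\pi_{1,\dots,m}$ matches on the $x$- and $y$-sides, to be the genuinely hard part of the argument.

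Finally, having established the identities sector by sector, I would assemble them into $E_1^{(x,\ta)}K=E_1^{(y,\phi)}K$ and $E_2^{(x,\ta)}K=E_2^{(y,\phi)}K$, passing to the inverse limit $N\to\infty$ used to define $E_1$ and $E_2$ and checking that the $N$-dependent shift disappears exactly as in \eqref{vape}. Together with the eigenfunction property (Proposition~\ref{LemmaSeki}) and the unitriangularity (Proposition~\ref{propuni}), the self-adjointness furnished by Proposition~\ref{propoortho} then yields the orthogonality of the $P_\La$ and hence Theorem~\ref{theo1}.
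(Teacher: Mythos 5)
Your overall architecture matches the paper's: reduce the kernel identities for $E_1,E_2$ to statements about $\bar{\mathcal O}_1$ and $\mathcal O_2$ in $N$ variables, extract the coefficient of $\theta_{i_1}\cdots\theta_{i_m}\phi_{j_1}\cdots\phi_{j_m}$ from $K_N$ (which produces $K^0_N(x;y)\Delta_m(x)\Delta_m(y)/\prod_{1\le i,j\le m}(1-q^{-1}x_iy_j)$ via a Cauchy-type computation), trade $A_m$ for $U_m^-$ through \eqref{UvsA}, and reduce everything to a bosonic $x\leftrightarrow y$ symmetry statement for the partial sums $\bar Y_1+\cdots+\bar Y_m$ and $Y_{m+1}+\cdots+Y_N$ acting on the dressed kernel --- this is exactly Proposition~\ref{quasip}. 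You also correctly identify that the fully symmetric ($m=0$) part is classical and that the genuine difficulty lies in the asymmetric dressing by $\Delta_m^t$ combined with the partial sums of Cherednik operators.

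However, at precisely that point the proposal stops at ``I expect this non-symmetric bookkeeping \dots\ to be the genuinely hard part,'' without supplying the idea that actually closes the argument. The paper's proof hinges on Lemma~\ref{lemLas} (due to Lascoux): the identity
\[
(-t)^{-\binom{m}{2}}\,\frac{\Delta_m^t(z)}{\prod_{1\le i,j\le m}(1-z_iy_j)}
=\partial^{(y)}_{\omega_{\mathbf m}}\left(\frac{\prod_{i+j\le m}(1-tz_iy_j)}{\prod_{i+j\le m+1}(1-z_iy_j)}\right),
\]
which converts the $t$-antisymmetrization on the $x$-side (written as $\Delta_m^t(x)\partial^{(x)}_{\omega_{\mathbf m}}$ via \eqref{idenL}) into a divided difference acting on the $y$-variables. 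This is what renders the prefactor $\Delta_m^t(y)\Delta_m^t(x)\partial^{(y)}_{\omega_{\mathbf m}}\partial^{(x)}_{\omega_{\mathbf m}}$ manifestly symmetric and reduces the whole problem to checking the $x\leftrightarrow y$ symmetry of the explicit rational functions $G(x,y)$ and $L(x,y)$ obtained by pushing $\bar\omega\,\bar T_{N-1}\cdots\bar T_m$ (resp.\ $U^{+}_{m^c}\omega\,\bar T_1\cdots\bar T_m$) through the staircase product. Without this lemma --- or some equivalent device for symmetrizing the $\Delta_m^t$-dressing between the two sets of variables --- repeated use of the quadratic Hecke relation and \eqref{UvsA} alone does not yield a manifestly symmetric expression, so the central step of the proof is missing. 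Two smaller points: the paper does not prove, and does not need, the kernel identity for the full generating functions $D^*(u;q,t)$ and $D^\cd(u;q,t)$; and the $\bar Y_i$ case is handled by a direct computation parallel to the $Y_i$ case rather than by a $q,t\mapsto q^{-1},t^{-1}$ substitution, whose effect on the kernel would itself require justification.
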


\begin{corollary}\label{corotringortho} The Macdonald superpolynomial $P_{\La}$
is such that
\begin{enumerate}
\item $ P_{\Lambda} =
m_{\Lambda} + \text{lower terms}$;
\item $\LL  P_{\La}| P_{\Om} \RR_{q,t} =0\quad\text{if}\quad \La\ne\Om$.
\end{enumerate}
\end{corollary}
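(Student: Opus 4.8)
The plan is to deduce this corollary directly from the self-adjointness statement of Theorem~\ref{propoadj} together with the eigenfunction characterization already in hand. Condition~(1), the unitriangular expansion of $P_\La$ into monomials, is exactly Proposition~\ref{propuni}, so nothing new is required there. The whole content of the corollary is therefore condition~(2), the vanishing of $\LL P_\La | P_\Om \RR_{q,t}$ for $\La \neq \Om$, which I would obtain from the standard fact that eigenfunctions of a self-adjoint operator attached to distinct eigenvalues are orthogonal.

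First I would convert Theorem~\ref{propoadj} into a self-adjointness statement. By Proposition~\ref{propoortho}, the two kernel identities $E_i^{(x,\theta)} K = E_i^{(y,\phi)} K$ for $i=1,2$ are equivalent to $\LL E_i f | g \RR_{q,t} = \LL f | E_i g \RR_{q,t}$ for all $f,g \in \mathscr R$; that is, $E_1$ and $E_2$ are self-adjoint with respect to $\LL \cdot | \cdot \RR_{q,t}$. Since this pairing is bilinear over $\mathbb Q(q,t)$ and the eigenvalues recorded in \eqref{vape} lie in $\mathbb Q(q,t)$, no conjugation enters and the argument below is purely formal. Because the pairing is graded (from \eqref{newsP} it vanishes unless $\La$ and $\Om$ share the same degree $(n|m)$), I may fix $\La \neq \Om$ in a single $\mathscr R^{n,m}$.

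Next, writing $a^{(1)}_\La = \sum_{i:\La^\circledast_i \neq \La^*_i} q^{-\La^*_i} t^{i-1}$ and $a^{(2)}_\La = \sum_{i:\La^\circledast_i = \La^*_i}(q^{\La^*_i}-1) t^{1-i}$ for the $E_1$- and $E_2$-eigenvalues on $P_\La$ given by \eqref{vape}, self-adjointness of $E_1$ applied to the eigenfunctions $P_\La, P_\Om$ yields
\begin{equation}
a^{(1)}_\La \LL P_\La | P_\Om \RR_{q,t}
= \LL E_1 P_\La | P_\Om \RR_{q,t}
= \LL P_\La | E_1 P_\Om \RR_{q,t}
= a^{(1)}_\Om \LL P_\La | P_\Om \RR_{q,t},
\end{equation}
so that $\bigl(a^{(1)}_\La - a^{(1)}_\Om\bigr)\LL P_\La | P_\Om \RR_{q,t}=0$, and the same computation with $E_2$ gives $\bigl(a^{(2)}_\La - a^{(2)}_\Om\bigr)\LL P_\La | P_\Om \RR_{q,t}=0$. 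It then suffices that, for $\La \neq \Om$, at least one of these two eigenvalue differences is nonzero. This is precisely the fact noted before the simplified characterization, namely that the pair $\bigl(a^{(1)}_\La, a^{(2)}_\La\bigr)$ determines $\La$ uniquely ($E_1$ reads off the fermionic rows of $\La$ and $E_2$ the bosonic ones). Hence $\LL P_\La | P_\Om \RR_{q,t}=0$, establishing condition~(2).

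The hard part is \emph{not} the corollary: once Theorem~\ref{propoadj} is available the argument above is routine linear algebra. The genuine obstacle is Theorem~\ref{propoadj} itself, the self-adjointness of the Macdonald-type operators $E_1, E_2$, whose proof is carried out separately in Section~\ref{self}.
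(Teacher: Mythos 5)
Your proposal is correct and follows exactly the paper's own argument: triangularity from Proposition~\ref{propuni}, self-adjointness of $E_1,E_2$ via Proposition~\ref{propoortho} and Theorem~\ref{propoadj}, and orthogonality from the fact that the two eigenvalue sequences jointly separate superpartitions. You merely spell out the eigenvalue computation that the paper leaves implicit.
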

\begin{proof}
The triangularity was proven in Proposition~\ref{propuni}.  
By Proposition~\ref{propoortho} and Theorem~\ref{propoadj} we have
\begin{equation*}
\LL E_1 {  P}_{\La} \, | \, {  P}_{\Om} 
\RR_{q,t} =
\LL  {  P}_{\La} \, | \, E_1 {  P}_{\Om} 
\RR_{q,t} \quad {\rm and} \quad 
\LL E_2 {  P}_{\La} \, | \, {  P}_{\Om} 
\RR_{q,t} =
\LL  {  P}_{\La} \, | \, E_2 {  P}_{\Om} 
\RR_{q,t}.
\end{equation*}
Given that the two operators have together distinct eigenvalues,
this immediately gives
\begin{equation}
\LL  {  P}_{\La} \, | \, {  P}_{\Om} 
\RR_{q,t} = 0 \quad {\rm if} \quad \Lambda \neq \Omega \, .
\end{equation}
\end{proof}

\subsection{Self-adjointness of $E_1$ and $E_2$}\label{self}
The proof of Theorem~\ref{propoadj} is quite involved.  
It relies
fundamentally on Proposition~\ref{quasip}, whose  proof 
was kindly outlined to us by Alain Lascoux \cite{Laspv}.  
   Theorem~\ref{propoadj} follows from
the following proposition since $\bar{\mathcal O}_1=E_{1,N}$ and
${\mathcal O}_2$ differs from $E_{2,N}$ by a constant.
 \begin{proposition} \label{theoseki}
Let $K_N(x,\theta;y, \phi)$ be the restriction of  $K(x,\theta;y, \phi)$
to $N$ variables.  We have
\begin{equation}\label{lesO}\begin{split}
\bar{\mathcal O}_1^{(x,\theta)}
 K_N(x,\theta;y,\phi)&= 
\bar{\mathcal O}_1^{(y,\phi)} K_N(x,\theta;y,\phi)\, ,\\  
{\mathcal O}_2^{(x,\theta)} K_N(x,\theta;y,\phi)&= 
{\mathcal O}_2^{(y,\phi)} K_N(x,\theta;y,\phi)\, ,\end{split}
\end{equation}where the superscripts indicate the variables on which the operators act.
 \end{proposition}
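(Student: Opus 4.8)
The plan is to invoke Proposition~\ref{propoortho}, which converts the desired self-adjointness of $\bar{\mathcal O}_1$ and ${\mathcal O}_2$ into the kernel identities \eqref{lesO}. Working with the finite kernel $K_N=K^0(x;y)\,\prod_{i,j}\bigl(1+\theta_i\phi_j/(1-q^{-1}x_iy_j)\bigr)$, I would first separate the bosonic factor $K^0(x;y)$ from the fermionic one. The role of the projector $\pi_{1,\ldots,m}$ is to extract, for each fermionic sector $m$, precisely the component of the fermionic factor carrying $\theta_1\cdots\theta_m$ (and no other $\theta_j$); thus the $m$-th summand of $\bar{\mathcal O}_1$ and of ${\mathcal O}_2$ acts on $\theta_1\cdots\theta_m$ times an explicit bosonic-times-$\phi$ expression. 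Since $\bar Y_i$ and $Y_i$ act only on the $x$-variables while $\mathcal{K}_\sigma$ permutes the $x$ and $\theta$ variables simultaneously, the fermionic labels are merely carried along by the symmetrization, and the substance of \eqref{lesO} reduces to a statement about the dressed Cherednik operators acting on the bosonic kernel.

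The heart of the argument is therefore a purely bosonic adjointness property: the operator obtained by conjugating $\bar Y_1+\cdots+\bar Y_m$ (resp. $Y_{m+1}+\cdots+Y_N$) by $\Delta_m/\Delta^t_m$ and then symmetrizing over $S_N/(S_m\times S_{m^c})$ transfers, when acting on $K^0(x;y)$, from the $x$-variables to the $y$-variables. This is exactly the content I would isolate as Proposition~\ref{quasip}. To establish it I would combine the classical self-adjointness of the symmetric Macdonald operators with respect to $K^0$ (Macdonald's book) with the Hecke-algebra machinery: the conjugation by $\Delta_m/\Delta^t_m$ is precisely the device, already exploited in Proposition~\ref{LemmaSeki} via \eqref{UvsA}, that interchanges bare and $t$-deformed antisymmetrization and lets the Cherednik operators be commuted through the symmetrizers. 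The adjoint of a single $Y_i$ relative to $K^0$ is expressible through the $T_j$'s and the shift $\omega$, and after the full symmetrization the cross terms are meant to reorganize into the corresponding $y$-operator.

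Once the bosonic identity is in hand, I would reassemble the fermionic data: reinstate $\theta_1\cdots\theta_m$ together with the matching $\phi$-monomials and check that the coset sum over $\sigma$ is reproduced identically on the $(y,\phi)$ side. Because each denominator $1-q^{-1}x_iy_j$ in the fermionic factor is symmetric under $x_i\leftrightarrow y_j$, that factor is manifestly compatible with the exchange of variables, so no genuine difficulty arises at this stage beyond bookkeeping of the signs and of the permutation of the anticommuting labels. Assembling the two pieces yields \eqref{lesO} and hence, since $\bar{\mathcal O}_1=E_{1,N}$ and ${\mathcal O}_2$ differs from $E_{2,N}$ by a constant, Theorem~\ref{propoadj}.

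The main obstacle is unquestionably the bosonic adjointness encapsulated in Proposition~\ref{quasip}. The difficulty is twofold: the operators $Y_i$ are intricate products of the $T_j$'s and of $\omega$, so computing their adjoint action on $K^0$ is delicate; and one sums only the first $m$ (resp. the last $N-m$) of them, dressed by $\Delta_m/\Delta^t_m$, so the symmetry that renders the classical Macdonald operators self-adjoint is broken at the level of each summand and is recovered only after combining all sectors $m$ and all cosets $\sigma$. Controlling this recombination---the step for which we follow Lascoux's outline \cite{Laspv}---is where the real work lies.
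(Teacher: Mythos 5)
Your reduction of \eqref{lesO} to a purely bosonic identity about the dressed partial sums of Cherednik operators acting on $K^0_N$ is the right first move and matches the paper: one extracts the coefficient of $\theta_1\cdots\theta_m\phi_{j_1}\cdots\phi_{j_m}$, observes that the various coefficients differ only by permutation operators $K^{(x)}_{k,i_k}$, $K^{(y)}_{k,j_k}$ which can be cancelled, and is left with Proposition~\ref{quasip}. Two corrections to that reduction, though. First, the fermionic factor does not merely ``carry along'' symmetric denominators: the coefficient of $\theta_1\cdots\theta_m\phi_1\cdots\phi_m$ in $\prod_{i,j}\bigl(1+\theta_i\phi_j/(1-q^{-1}x_iy_j)\bigr)$ is an alternating sum over $S_m$ that must be resummed, via the Cauchy identity for Schur functions, into $\Delta_m(x)\Delta_m(y)/\prod_{1\leq i,j\leq m}(1-q^{-1}x_iy_j)$; this determinantal collapse is what fixes the precise shape of the bosonic identity you need. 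Second, the identity is established for each fixed $m$ and a single coset representative; no recombination ``over all sectors $m$ and all cosets $\sigma$'' is involved, contrary to what you suggest.

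The genuine gap is that you give no argument for the bosonic identity itself, which is where essentially all the work lies. Appealing to the classical self-adjointness of the symmetric Macdonald operators cannot work as stated: $\bar Y_1+\cdots+\bar Y_m$ and $Y_{m+1}+\cdots+Y_N$ are not symmetric combinations, and the function acted upon is not $K^0_N$ but $K^0_N\,\Delta^t_m(x)\Delta^t_m(y)/\prod_{1\leq i,j\leq m}(1-q^{-1}x_iy_j)$. The paper's proof hinges on a specific divided-difference identity (Lemma~\ref{lemLas}, due to Lascoux), namely $(-t)^{-\binom{m}{2}}\,\Delta_m^t(z)/\prod_{1\leq i,j\leq m}(1-z_iy_j)=\partial^{(y)}_{\omega_{\mathbf m}}\bigl(\prod_{i+j\leq m}(1-tz_iy_j)/\prod_{i+j\leq m+1}(1-z_iy_j)\bigr)$, which trades the troublesome prefactor for a ``staircase'' rational function on which $\bar\omega^{(x)}$ and the $\bar T^{(x)}_i$ can be pushed through explicitly; the proof then verifies by direct computation that the resulting expressions $G(x,y)$ and $L(x,y)$ are manifestly symmetric under $x\leftrightarrow y$. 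Nothing in your proposal supplies this lemma or a substitute for it, and the sentence ``the cross terms are meant to reorganize into the corresponding $y$-operator'' is precisely the assertion that has to be proved.
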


The first step in the proof of Proposition~\ref{theoseki}
will amount to reformulate the conditions \eqref{lesO} in a more tractable form (this is Proposition  \ref{quasip}  below).

Firstly, it is not difficult to see that the coefficient of 
$
\theta_1 \cdots \theta_m \phi_{1} \cdots \phi_{m} 
$ 
in
$K_N(x,\theta;y,\phi)$ 
 is equal to
(up to a sign and a power of  $q$)
\begin{equation}\label{idena}
[\theta_1 \cdots \theta_m \phi_1 \cdots \phi_{m}] \; \; K_N(x,\theta;y,\phi)
\propto 
\frac{K^0_N(x;y)\Delta_m(x) \Delta_m(y)}{\prod_{1\leq i,j \leq m}(1-q^{-1}x_i y_j)},
\end{equation}
where we recall that $[\theta_1 \cdots \theta_m \phi_{1} \cdots \phi_{m}]\,f$ 
stands for the coefficient of the monomial  ${\theta_1 \cdots \theta_m \phi_{1} \cdots \phi_{m}}$ in $f$.
This is seen as follows: up to a sign, the coefficient of $\theta_1 \cdots \theta_m \phi_{1} \cdots \phi_{m}$
in $\prod_{i,j} \left[ 1+{\theta_i \phi_j}({1- x_i y_j})^{-1} \right]$ is 
\begin{multline} \label{eqpasrap}
\sum_{\sigma \in S_m } (-1)^{\ell(\sigma)}
\frac{1}{\prod_{i=1}^m (1-x_i y_{\sigma(i)})} = \sum_{\sigma} 
 (-1)^{\ell(\sigma)}\sum_{\eta} x^{\eta} y^{\sigma(\eta)} \\ =
\sum_{\eta} (-1)^{{\rm sign} (\eta)} x^{\eta} s_{\eta^+-\delta^{(m)}}(y) \Delta(y)\, ,
\end{multline}
where $\eta$ is a composition with distinct parts (otherwise the result is zero by antisymmetry), $\eta^+$ is the partition corresponding to 
$\eta$ and ${\rm sign} (\eta)$ is the sign of the permutation that
changes $\eta$ to $\eta^+$.  The second equality is obtained by interchanging the two summations and using the expression of the Schur polynomial $s_\mu$ as a ratio of two determinants (cf. \cite[eq. (3.1)]{Mac}). By splitting the sum over $\eta$ into a sum over $\eta^+$ and a summation over permutations of $\eta^+$,  and then by letting $\lambda=\eta^+-\delta^{(m)}$,  we can rewrite  \eqref{eqpasrap} as 
\begin{multline}
\sum_{\sigma \in S_m } (-1)^{\ell(\sigma)}
\frac{1}{\prod_{i=1}^m (1-x_i y_{\sigma(i)})} 
\\= \Delta_m(x) \Delta_m(y)
\sum_{\lambda} s_{\lambda}(x) s_{\lambda}(y) = \frac{\Delta_m(x) \Delta_m(y)}
{\prod_{1\leq i,j \leq m}(1-x_i y_j)}\, .
\end{multline}
Note that \cite[eq. (4.3)]{Mac} was used for getting the last expression on the right-hand side.
By substituting $x_i\rw x_i/q$ and multiplying the result by $K^0_N(x;y)$, we recover \eqref{idena}.

More generally, the coefficient of 
$
\theta_1 \cdots \theta_m\phi_{j_1} \cdots \phi_{j_m}
$ (with $j_1<\cdots<j_m$)
in
$K_N(x,\theta;y,\phi)$ 
 is 
\begin{multline}\label{idena2}
{[\theta_1 \cdots \theta_m \phi_{j_1} \cdots \phi_{j_m}] \; \; K_N(x,\theta;y,\phi)}
\\
\propto 
K^{(y)}_{1, j_1} \cdots K^{(y)}_{m, j_m} 
\frac{{K^0_N(x;y)}\Delta_m(x) \Delta_m(y)}{\prod_{1\leq i,j \leq m}(1-q^{-1}x_i y_j)}.
\end{multline}
Given this result, 
 the coefficient of $\theta_{i_1} \cdots \theta_{i_m} \phi_{j_1} \cdots \phi_{j_m}$ in $\bar {\mathcal O}_1^{(x,\theta)}K_N(x,\theta;y,\phi)$ is 
 proportional to {\small
\begin{equation}\label{O1x}
K^{(x)}_{1, i_1} \cdots K^{(x)}_{m, i_m} K^{(y)}_{1, j_1} \cdots K^{(y)}_{m, j_m}
\frac{\Delta_m(x)}{\Delta_m^t(x)} (\bar Y_1^{(x)} +\cdots+\bar Y_m^{(x)})
\frac{K^0_N(x;y)\Delta_m^t(x) \Delta_m(y)}{\prod_{1\leq i,j \leq m}(1-q^{-1}x_i y_j)}.
\end{equation}}\hspace{-.1cm}
Similarly, the coefficient of $\theta_{i_1} \cdots \theta_{i_m} \phi_{j_1} \cdots \phi_{j_m}$ in $\bar {\mathcal O}_1^{(y,\phi)}K_N(x,\theta;y,\phi)$ is proportional to (with the same proportionality factor as above){\small
\begin{equation}\label{O1y} 
K^{(x)}_{1, i_1} \cdots K^{(x)}_{m, i_m} K^{(y)}_{1, j_1} \cdots K^{(y)}_{m, j_m}
\frac{\Delta_m(y)}{\Delta_m^t(y)} (\bar Y_1^{(y)} +\cdots+\bar Y_m^{(y)})
\frac{K^0_N(x;y)\Delta_m(x) \Delta_m^t(y)}{\prod_{1\leq i,j \leq m}(1-q^{-1}x_i y_j)}.
\end{equation}}\hspace{-.1cm}
The first relation in \eqref{lesO} is equivalent to the equality of both coefficients. Canceling the permutation operators in the equality between \eqref{O1x} and \eqref{O1y} yields
\begin{multline}
\frac{\Delta_m(x)}{\Delta_m^t(x)} (\bar Y_1^{(x)} +\cdots+\bar Y_m^{(x)})
\frac{K^0_N(x;y)\Delta_m^t(x) \Delta_m(y)}{\prod_{1\leq i,j \leq m}(1-q^{-1}x_i y_j)}
\\= \frac{\Delta_m(y)}{\Delta_m^t(y)} (\bar Y_1^{(y)} +\cdots+\bar Y_m^{(y)})
\frac{K^0_N(x;y)\Delta_m(x) \Delta_m^t(y)}{\prod_{1\leq i,j \leq m}(1-q^{-1}x_i y_j)}.
\end{multline}
Proceeding similarly for $\mathcal O_2$,
and using commutativity of the type $ Y_i^{(x)}f(y)=f(y)Y_i^{(x)}$, we obtain the two relations appearing in the following proposition,
whose proof thus implies Proposition~\ref{theoseki}.

\begin{proposition} \label{quasip} 
We have { 
\begin{multline}\label{pro1}
(\bar Y_1^{(x)}+\cdots+\bar Y_m^{(x)}) \frac{{K_N^0(x;y)} \Delta_m^t(x) \Delta_m^t (y)}{\prod_{1\leq i, j \leq m} (1-q^{-1}x_i y_j )} \\ = (\bar Y_1^{(y)}+\cdots+\bar Y_m^{(y)}) \frac{{K_N^0(x;y)} \Delta_m^t(x) \Delta_m^t(y)}{\prod_{1\leq i, j \leq m} (1-q^{-1}x_i y_j )}
\end{multline}
\begin{multline}\label{pro2}
(Y_{m+1}^{(x)}+\cdots+Y_N^{(x)}) \frac{K_N^0(x;y) \Delta_m^t(x) \Delta_m^t (y)}{\prod_{1\leq i, j \leq m} (1-q^{-1}x_i y_j )} \\ = (Y_{m+1}^{(y)}+\cdots+Y_N^{(y)}) \frac{K_N^0(x;y) \Delta_m^t(x) \Delta_m^t(y)}{\prod_{1\leq i, j \leq m} (1-q^{-1}x_i y_j )} 
\end{multline}}
for all $m=0,\dots,N$, where $Y_i^{(x)}$ and  $Y_i^{(y)}$ act respectively on the $x$ and $y$ variables. 
\end{proposition}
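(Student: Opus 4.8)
The plan is to establish the two operator identities (\ref{pro1}) and (\ref{pro2}) directly, by writing each Cherednik operator as a word in its defining building blocks and tracking, factor by factor, how its action on the $x$-variables can be traded for an action on the $y$-variables against the fixed function
\begin{equation}
\Phi_m\;=\;\frac{K_N^0(x;y)\,\Delta_m^t(x)\,\Delta_m^t(y)}{\prod_{1\leq i,j\leq m}(1-q^{-1}x_iy_j)}\,.
\end{equation}
Recall that $\bar Y_i=t^{N-i}T_{i-1}\cdots T_1\,\bar\omega\,\bar T_{N-1}\cdots\bar T_i$ and $Y_i=t^{-N+i}T_i\cdots T_{N-1}\,\omega\,T_1^{-1}\cdots T_{i-1}^{-1}$ are built from the Hecke generators $T_j$ and from $\omega$ (equivalently, from the permutations $K_{j,j+1}$ together with the single $q$-shift $\tau_1$). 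The two statements are then structurally the same: (\ref{pro2}) is the analogue of (\ref{pro1}) in which the inverse operators summed over the first block $x_1,\dots,x_m$ are replaced by the direct operators summed over the complementary block $x_{m+1},\dots,x_N$. I would therefore develop the machinery once and specialize to each block at the end.

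The first step is to record the elementary transfer relations of the generators against the three factors of $\Phi_m$. The permutations and Hecke generators interact simply: $K_N^0(x;y)$ is fully symmetric in the $x$'s and in the $y$'s, while the dressing $\Delta_m^t(x)\Delta_m^t(y)/\prod_{i,j\leq m}(1-q^{-1}x_iy_j)$ renders $\Phi_m$ a $(-1)$-eigenfunction of each $T_i^{(x)}$ and $T_i^{(y)}$ with $i<m$ and symmetric in the remaining variables. The only block that moves the $x$-variables nontrivially is the $q$-shift hidden inside $\omega$: from $(qa;q)_\infty=(a;q)_\infty/(1-a)$ one obtains
\begin{equation}
\tau_i^{(x)}K_N^0(x;y)=K_N^0(x;y)\prod_{j}\frac{1-x_iy_j}{1-tx_iy_j}\,,
\end{equation}
which is precisely the mechanism underlying the classical Macdonald difference operator. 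The baseline fact I would invoke is the classical self-adjointness of the full operator $\sum_{i=1}^N Y_i$ with respect to $K_N^0$ (see \cite{Mac}); the genuinely new content is the \emph{refined} statement in which the sum runs over a single block and the kernel carries the block-structured dressing above.

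The crux, and the step I expect to be the main obstacle, is that each $Y_i$ (and each $\bar Y_i$) is a \emph{global} operator coupling all $N$ variables through the cyclic $\omega$, whereas the dressing of $\Phi_m$ is block-structured: the $q$-shift $\tau_1$ does not commute with the coupling denominator $\prod_{i,j\leq m}(1-q^{-1}x_iy_j)$, nor with the cross terms of $K_N^0$ that link the first block to the last. Consequently each individual term $\bar Y_i^{(x)}\Phi_m$ differs from $\bar Y_i^{(y)}\Phi_m$ by correction terms. The heart of the argument is to show that, \emph{after summing over the whole block} $i=1,\dots,m$ (respectively $i=m+1,\dots,N$) and exploiting the $t$-antisymmetry of $\Phi_m$ in the two blocks of size $m$, these corrections cancel. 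Concretely I would move all permutation operators to one side, use the transfer relation to factor out the common $K_N^0$, and verify the remaining identity of rational functions by a partial-fraction/residue computation in which the antisymmetry in $x_1,\dots,x_m$ (and in $y_1,\dots,y_m$) annihilates every residual pole; this is exactly the point at which Lascoux's outline \cite{Laspv} is essential. Specializing the resulting identity to the first block with inverse operators yields (\ref{pro1}), and to the complementary block with direct operators yields (\ref{pro2}); by the reductions carried out before Proposition~\ref{quasip} this proves Proposition~\ref{theoseki}, hence Theorem~\ref{propoadj}.
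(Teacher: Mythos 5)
Your setup is sound as far as it goes: reformulating the claim as the symmetry of $F(x,y)=(\bar Y_1^{(x)}+\cdots+\bar Y_m^{(x)})\Phi_m$ under $x\leftrightarrow y$, recording how $\tau_1$ acts on $K_N^0$, and locating the difficulty in the mismatch between the global operators $Y_i$ and the block-structured dressing is exactly where the paper starts. But the proposal stops precisely where the proof begins. Two essential ideas are missing. First, the sum over the block does not work by a cancellation of correction terms between different $i$'s, as you suggest: after left-multiplying by $U_m^{-}$ (via $\Delta_m^t(x)\propto U_m^{-\,(x)}\Delta_m^t(x)$) and using that $\Phi_m$ is, up to the factor $\Delta_m^t(y)$, a $(-1)$-eigenfunction of $T_j^{(x)}$ for $j<m$, every term $U_m^{-\,(x)}\bar Y_i^{(x)}\Phi_m$ with $i\le m$ collapses, up to a power of $t$, to the \emph{single} expression $U_m^{-\,(x)}\bar\omega^{(x)}\bar T^{(x)}_{N-1}\cdots\bar T^{(x)}_m\Phi_m$; the block sum is $m$ copies of one term, not a sum whose cross-terms cancel. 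Second, and decisively, the fully coupled denominator $\prod_{1\le i,j\le m}(1-q^{-1}x_iy_j)$ must be traded for something on which $\bar T_{N-1}\cdots\bar T_m$ and $\bar\omega$ act in closed form. This is the content of Lemma~\ref{lemLas}: writing $U_m^-\propto\Delta_m^t(x)\,\partial^{(x)}_{\omega_{\mathbf m}}$ and applying the identity
\begin{equation*}
(-t)^{-\binom{m}{2}}\,\frac{\Delta_m^t(z)}{\prod_{1\leq i,j\leq m}(1-z_iy_j)}
=\partial^{(y)}_{\omega_{\mathbf m}}\left(\frac{\prod_{i+j\leq m}(1-tz_iy_j)}{\prod_{i+j\leq m+1}(1-z_iy_j)}\right)
\end{equation*}
replaces the $m\times m$ coupling by a divided-difference image of a staircase product whose only dependence on $x_m,\dots,x_N$ sits in the single factor $(1-q^{-1}x_my_1)^{-1}$. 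Only then do the actions of $\bar T_{N-1}\cdots\bar T_m$ and of $\bar\omega^{(x)}$ become explicit products whose non-symmetric pieces visibly recombine into an $x\leftrightarrow y$-symmetric expression.

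Your proposal defers exactly this step to ``a partial-fraction/residue computation'' and to ``Lascoux's outline,'' without stating what that outline is or why antisymmetry would kill the residual poles. Without the staircase identity the direct computation of $\bar T_m$ on $\Delta_m^t(x)/\prod_{i,j\le m}(1-q^{-1}x_iy_j)$ is not an eigenvalue computation (the function is not symmetric in $x_m,x_{m+1}$ and couples $x_m$ to all of $y_1,\dots,y_m$), and there is no evident mechanism by which the proposed residue argument closes. As written, the proposal is a correct reduction plus an appeal to the unstated key lemma; the mathematical core of the proposition is not supplied.
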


The proof of the proposition relies crucially on Lemma~\ref{lemLas}, which is formulated in terms of divided differences \cite[Sec. 7.1]{Las}
\begin{equation}
\partial_i = \frac{1}{(x_i-x_{i+1})}(1-{K_{i,i+1}}).
\end{equation}
The divided differences obey the  braid relations  \cite[Sec. 7.3]{Las}, 
\begin{equation}\label{relbraid}
\partial_{i+1} \partial_i \partial_{i+1}=
\partial_{i} \partial_{i+1} \partial_{i}\, ,\qquad \partial_i \partial_j=
\partial_j \partial_i\quad\text{if} \quad |i-j|>1\, ,
\end{equation} and the nilpotent condition $\partial_i^2=0$ \cite[eq. 7.1.5]{Las}.
As is the case for the $T_i$'s, if $s_{i_1} \cdots s_{i_\ell}$ is a reduced
decomposition of $w$ then $\partial_w$ stands for 
$\partial_{i_1} \cdots \partial_{i_\ell}$.

The following lemma was stated by Alain Lascoux \cite{Laspv}.  We provide
our own proof of it.
\begin{lemma}   \label{lemLas}
Let $\partial_{\omega_{ \mathbf m}}$ be the divided difference associated to the longest permutations of $S_m$ \cite[Sect. 7.6]{Las}.   
Then
the following identity holds:
\begin{equation} \label{iden1}
(-t)^{-\binom{m}{2}} \frac{\Delta_m^t(z)}{\prod_{1\leq i, j \leq m} (1-z_i y_j )}=
\partial^{(y)}_{\omega_{ \mathbf m}}   \left(\frac{\prod_{i+j \leq m} (1-tz_i y_j)}{\prod_{i+j \leq m+1} (1-z_i y_j)} \right),
\end{equation}
where the superscript indicates that the divided differences act on the
$y$ variables.
\end{lemma}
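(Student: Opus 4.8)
The plan is to evaluate the right-hand side of \eqref{iden1} by replacing the longest divided difference with a normalised antisymmetriser and then reducing the claim to a single polynomial identity. I would start from the classical fact (see \cite{Las}) that, acting on functions of $y_1,\dots,y_m$,
\[
\partial_{\omega_{\mathbf m}}^{(y)}=\frac{1}{\Delta_m(y)}\sum_{\sigma\in S_m}(-1)^{\ell(\sigma)}\,\sigma^{(y)},\qquad \Delta_m(y)=\prod_{1\le i<j\le m}(y_i-y_j),
\]
where $\sigma^{(y)}$ permutes the $y$ variables. Multiplying \eqref{iden1} through by $\Delta_m(y)\prod_{1\le i,j\le m}(1-z_iy_j)$, using that $\prod_{i,j}(1-z_iy_j)$ is invariant under $\sigma^{(y)}$, together with $\prod_{1\le i,j\le m}(1-z_iy_j)/\prod_{i+j\le m+1}(1-z_iy_j)=\prod_{i+j\ge m+2}(1-z_iy_j)$, the lemma becomes equivalent to the polynomial identity
\[
\sum_{\sigma\in S_m}(-1)^{\ell(\sigma)}\sigma^{(y)}(h)=c\,\Delta_m^t(z)\,\Delta_m(y),\qquad h=\prod_{i+j\le m}(1-tz_iy_j)\prod_{i+j\ge m+2}(1-z_iy_j),
\]
for an appropriate constant $c$ (a sign and a power of $t$). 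The feature that makes everything work is that $h$ factorises column by column, $h=\prod_{j=1}^m H_j(y_j)$ with $H_j(y)=\prod_{i=1}^{m-j}(1-tz_iy)\prod_{i=m+2-j}^{m}(1-z_iy)$, each $H_j$ being of degree exactly $m-1$ in its single variable.

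Next I would pin down the antisymmetrised sum. Being alternating in $y$, it is divisible by $\Delta_m(y)$, and since $\deg_{y_j}h=m-1$ for every $j$ the quotient is of degree $0$ in each $y_j$; hence $\sum_\sigma(-1)^{\ell(\sigma)}\sigma^{(y)}(h)=P(z)\,\Delta_m(y)$ for some $P(z)$ not involving $y$. Extracting the coefficient of the dominant monomial $y^{\delta}$, $\delta=(m-1,m-2,\dots,0)$ (whose coefficient in $\Delta_m(y)$ is $1$), gives $P(z)=[y^\delta]\sum_\sigma(-1)^{\ell(\sigma)}\sigma^{(y)}(h)$. The column factorisation then turns this coefficient extraction into a determinant: writing $A_{j,\ell}=[y^\ell]H_j(y)=(-1)^\ell e_\ell(tz_1,\dots,tz_{m-j},z_{m+2-j},\dots,z_m)$, one obtains $P(z)=\det\big(A_{j,\,m-k}\big)_{1\le j,k\le m}$, i.e. the determinant of the coefficient vectors of the polynomials $H_1,\dots,H_m$.

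The evaluation of this determinant is the real content, and the step I expect to be the main obstacle. The cleanest route is the Vandermonde-quotient trick: since the rows of $\big(A_{j,\ell}\big)$ are the coefficient vectors of the $H_j$, for any points $\xi_1,\dots,\xi_m$ one has $\big(H_j(\xi_k)\big)_{j,k}=\big(A_{j,\ell}\big)\big(\xi_k^{\ell}\big)$, whence $\det\big(A_{j,\ell}\big)=\det\big(H_j(\xi_k)\big)/\det\big(\xi_k^{\ell}\big)$. Choosing $\xi_k=1/z_k$ exploits the root structure of the $H_j$: one checks that $H_j(1/z_k)=0$ whenever $k+j\ge m+2$, so the matrix $\big(H_j(1/z_k)\big)$ is anti-triangular and its determinant is, up to the sign $(-1)^{\binom{m}{2}}$, the product of the anti-diagonal entries $\prod_{j}H_j(1/z_{m+1-j})$. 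Evaluating this product and dividing by the Vandermonde $\det\big(\xi_k^{\ell}\big)=\prod_{k<k'}(1/z_{k'}-1/z_k)$, the powers of the $z_i$ cancel and the $z$-Vandermonde $\Delta_m(z)$ cancels, leaving exactly $\prod_{1\le i<j\le m}(tz_i-z_j)=\Delta_m^t(z)$. Reinstating the factor $\Delta_m(y)^{-1}$ and the common denominator $\prod_{i,j}(1-z_iy_j)$ reproduces \eqref{iden1}, the prefactor being read off from this evaluation; the scheme is easily checked directly for $m=1,2$.

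A more hands-on alternative proceeds by induction on $m$: one applies $\partial_{\omega_{\mathbf m}}$ as a reduced word in the $\partial_i$ and uses repeatedly the elementary identity $\partial_i\big[(1-ay_i)^{-1}(1-by_{i+1})^{-1}\big]=(a-b)\big[(1-ay_i)(1-by_i)(1-ay_{i+1})(1-by_{i+1})\big]^{-1}$, which symmetrises a pair of denominators while producing one difference factor; tracking the staircase through the $\binom{m}{2}$ steps accumulates precisely the $\binom{m}{2}$ factors of $\Delta_m^t(z)$. This variant sidesteps the determinant but demands careful bookkeeping of the evolving staircase, which is why I would favour the determinant evaluation above.
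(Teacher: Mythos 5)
Your proof is correct, and its first half coincides with the paper's: both clear the symmetric denominator $\prod_{1\le i,j\le m}(1-z_iy_j)$ through $\partial^{(y)}_{\omega_{\mathbf m}}$ (you via the antisymmetrizer representation $\partial_{\omega_{\mathbf m}}=\Delta_m(y)^{-1}A_m$, the paper via the Leibniz rule together with the vanishing of $\partial_{\omega_{\mathbf m}}$ on symmetric functions), reducing the lemma to the evaluation of $\partial^{(y)}_{\omega_{\mathbf m}}Q$ with $Q=\prod_{i+j\le m}(1-tz_iy_j)\prod_{i+j\ge m+2,\,i,j\le m}(1-z_iy_j)$, and both then use the degree bound $\deg_{y_j}Q\le m-1$ to conclude that the result is independent of $y$. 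Where you genuinely diverge is in identifying that $y$-independent polynomial in $z$: the paper proceeds by characterization --- it shows the result is $t$-antisymmetric in $z$ (via $T^{(z)}_k\partial^{(y)}_{m-k}Q=-\partial^{(y)}_{m-k}Q$), notes its total degree is $\binom{m}{2}$, concludes proportionality to $\Delta_m^t(z)$, and pins the constant by a single coefficient extraction --- whereas you compute it outright as the determinant of the coefficient matrix of the univariate column factors $H_j$, evaluated by interpolation at $\xi_k=1/z_k$, where the root structure of the $H_j$ makes the matrix anti-triangular. Your route buys a fully explicit closed-form evaluation with no symmetry argument, at the price of routine but sign-sensitive determinant bookkeeping (in particular the column-reversal sign relating $\det(A_{j,m-k})$ to $\det(A_{j,\ell})$); the paper's route avoids the determinant entirely but needs the separate $t$-antisymmetry verification.

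One caveat on the constant: carried out to the end, your determinant gives $P(z)=(-1)^{\binom{m}{2}}\Delta_m^t(z)$ (check $m=2$: $\partial_1^{(y)}\bigl[(1-tz_1y_1)(1-z_2y_2)\bigr]=z_2-tz_1=-\Delta_2^t(z)$, whereas \eqref{iden1} would require $-t^{-1}(tz_1-z_2)$), so the prefactor in \eqref{iden1} comes out as $(-1)^{\binom{m}{2}}$ rather than $(-t)^{-\binom{m}{2}}$. This is not a defect of your argument: the paper's own closing coefficient comparison (the coefficient of $z_1^{m-1}z_2^{m-2}\cdots z_{m-1}$ being $(-t)^{\binom{m}{2}}=(-1)^{\binom{m}{2}}t^{\binom{m}{2}}$, against $t^{\binom{m}{2}}$ for $\Delta_m^t(z)$) yields the same constant, and the discrepancy is immaterial downstream since Lemma \ref{lemLas} is only invoked up to proportionality in the proof of Proposition \ref{quasip}. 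But be aware that your evaluation will not literally reproduce the displayed prefactor.
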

\begin{proof} We first observe that $\prod_{1\leq i,j\leq m} (1-z_i y_j)$, being fully symmetric in the $y_j$'s,  vanishes when acted on by $\partial^{(y)}_{\omega_{ \mathbf m}}   $. Therefore, multiplying both sides of the 
identity \eqref{iden1} by $\prod_{1\leq i,j\leq m} (1-z_i y_j)$, we can then use Leibniz identity \cite[eq. (7.1.10)]{Las} to commute the product with the divided differences. We then note that the product can be factorized as follows: 
\begin{equation}
\prod_{1\leq i,j\leq m} (1-z_i y_j)=  \prod_{i+j\leq m+1} (1-z_i y_j) \; \prod_{
\substack{
i+j> m+1 \\
i,j\leq m}
} (1-z_i y_j) .
\end{equation}
Hence, the identity \eqref{iden1} is equivalent to 
\begin{multline} \label{iden2}
(-t)^{-\binom{m}{2}} \Delta_m^t(z)
=
\partial^{(y)}_{\omega_{ \mathbf m}}   \Big(
\prod_{i+j \leq m} (1-tz_i y_j)
\prod_{
\substack{
i+j> m+1 \\
i,j\leq m}
} (1-z_i y_j) \Big) \\ = \partial^{(y)}_{\omega_{ \mathbf m}}    Q(z,y).
\end{multline}
It is well known \cite[Sect. 7.6]{Las}
that 
\begin{equation}
\partial_{\omega_{ \mathbf m}}^{(y)} \left(y_1^{a_1} \cdots y_m^{a_m} \right) = 0 
\end{equation}
unless all the $a_k$'s are distinct.
Observe that after expanding 
the products on the rhs of \eqref{iden2}, the appearing monomials 
$y_1^{a_1} \cdots y_m^{a_m}$ will all be such that $a_1,\ldots,a_m \leq m-1$.
  Indeed, the power of $y_j$ equals the number of distinct factors $z_i$ that can appear in the coefficient of $y_j^{a_j}$ and the maximal value of this number is, with $j$ fixed,
\begin{equation}
\#\{ i\,|\, i+j\leq m\} + \#\{i\,|\, i+j>m+1\}=m-1.
\end{equation} 
The only option to have distinct $a_k$'s is thus for
$(a_1,\dots,a_m)$ to be a rearrangement of $(m-1,m-2,\dots,1,0)$.
Therefore,
the rhs of \eqref{iden2} is a polynomial in $y_1,\dots,y_m$ of 
degree 0, that is, the result does not depend on the variables $y$.
We now show that it is $t$-antisymmetric in the $z$ variables,
i.e., that
\begin{equation}
T_k^{(z)} \partial_{\omega_{ \mathbf m}}^{(y)} Q(z,y) = -\partial_{\omega_{ \mathbf m}}^{(y)} Q(z,y) 
\end{equation}
for all $k$. 
 We have that $\omega_{ \mathbf m}= w s_{m-k}$ for some permutation $w \in S_m$.  Hence,
 it suffices to prove
that 
\begin{equation} \label{eqsym}
T_k^{(z)} \partial_{m-k}^{(y)} Q(z,y) = - \partial_{m-k}^{(y)} Q(z,y) .
\end{equation}
 It is easy to see that $Q(z,y)$ is symmetric in both 
$y_{m-k},y_{m-k+1}$ and $z_{k},z_{k+1}$
except
for the factors $(1-tz_{k} y_{m-k})(1-z_{k+1}y_{m-k+1})$. 
A direct calculation yields
\begin{equation}
 \partial_{m-k}^{(y)} (1-tz_{k} y_{m-k})(1-z_{k+1}y_{m-k+1}) = - (tz_k -z_{k+1}),
\end{equation}
from which \eqref{eqsym} follows immediately since $T_k (tz_k-z_{k+1})=-  (tz_k-z_{k+1})$.
Finally, the rhs of \eqref{iden2} is a polynomial in $z$ of total
degree $m(m-1)/2$.  Since the only such $t$-antisymmetric polynomial
is $\Delta_m^t(z)$, \eqref{iden2} holds up to a constant.  
The coefficients of $z_1^{m-1} z_2^{m-2}\cdots z_{m-1}$ on the lhs of \eqref{iden2} 
is clearly $(-t)^{\binom{m}{2}}$.  On the rhs of \eqref{iden2}, the coefficient of
 $z_1^{m-1} z_2^{m-2}\cdots z_{m-1}$ is $\partial_{\omega_{ \mathbf m}}$ acting on a certain
polynomial $p(y)$ in $y$.  In $p(y)$, the only monomial
 whose exponent is a permutation of $(m-1,m-2,\dots,1,0)$ is
$y_1^{m-1}y_2^{m-2} \cdots y_{m-1}$.  Given that its coefficient is  $(-t)^{\binom{m}{2}}$,
the result follows.
\end{proof}

\n We now turn to the proof of Proposition \ref{quasip}. 
\begin{proof}
We will assume throughout the proof that $N>m$.  The case $N=m$
can be easily obtained as a simplified version the case $N>m$.
  Note that in the following arguments we will
never worry about constants depending on $q$ and $t$ (which are irrelevant to the symmetry).  For instance,
we write $U^{-\,(x)}_{m} \Delta_m^t(x) \propto  \Delta_m^t(x)$, meaning
that the two expressions only differ by a constant. (Recall that $ U_m^-=\sum_{\sigma\in S_m}(-t)^{-\ell(\sigma)}T_\sigma.$)

We first prove equation \eqref{pro1}. This amounts to show
that
\begin{equation}
F(x,y) = (\bar Y_1^{(x)}+\cdots+\bar Y_m^{(x)}) \frac{K^0_N(x;y) \Delta_m^t(x) \Delta_m^t (y)}{\prod_{1\leq i, j \leq m} (1-q^{-1}x_i y_j )}
\end{equation}
is symmetric in $x$ and $y$, that is, $F(x,y)=F(y,x)$.
Since $ \Delta_m^t(x) \propto U^{-\,(x)}_{m} \Delta_m^t(x)$ and because
$U^{-\,(x)}_{m}$  and $\Delta_m^t(y) $ commute with the $\bar Y_i^{(x)}$'s, we have
 \begin{equation}\label{lequationdeF}
F(x,y)
\propto
\Delta_m^t(y) U^{-\,(x)}_{m} 
(\bar Y_1^{(x)}+\cdots+\bar Y_m^{(x)})   \frac{\Delta_m^t(x)
K^0_N(x;y)}{\prod_{1\leq i, j \leq m} (1-q^{-1}x_i y_j )}.
\end{equation}
Recall the expression for the inverse Cherednik operator in \eqref{invC} and
that for $ \bar T_j$ in \eqref{Tinv}. 
Since $\bar T_i^{(x)}  \Delta_m^t(x) = - t^{-1} \Delta_m^t(x)$ and 
 $U^{-\,(x)}_{m} T_i^{(x)}=-U^{-\,(x)}_{m}$ whenever $i<m$ we have
that, up to an irrelevant $t$-power,  $U^{-\,(x)}_m \bar Y_i^{(x)}  \Delta_m^t(x)$
can be replaced in \eqref{lequationdeF}
 by $U_m^{-\, (x)} \omega^{(x)} \bar T^{(x)}_{N-1}\cdots \bar T^{(x)}_m \Delta_m^t(x)$.  Hence
  \begin{equation}
F(x,y) \propto  \Delta_m^t(y) U^{-\,(x)}_{m} \bar \omega^{(x)} \bar T^{(x)}_{N-1} \cdots \bar T^{(x)}_{m}  \frac{\Delta_m^t(x)
K^0_N(x;y)}{\prod_{1\leq i, j \leq m} (1-q^{-1}x_i y_j )}.
\end{equation}
Using (see \cite[Sect. 7.6]{Las}) $\partial_{\omega_{ \mathbf m}}^{(x)} = [\Delta_m(x)]^{-1} A_{m}$ and \eqref{UvsA}, we have
\begin{equation}\label{idenL}
U^{-\,(x)}_{m} \propto \Delta_m^t(x) \partial^{(x)}_{\omega_{ \mathbf m}} \, ,
\end{equation}
which gives
\begin{equation*}
F(x,y)  \propto  \Delta_m^t(y) \Delta_m^t(x)
\partial^{(x)}_{\omega_{ \mathbf m}} \bar \omega^{(x)} \bar T^{(x)}_{N-1} \cdots \bar T^{(x)}_{m}  \frac{\Delta_m^t(x)
K^0_N(x;y)}{\prod_{1\leq i, j \leq m} (1-q^{-1}x_i y_j )}.
\end{equation*}
Using the identity \eqref{iden1} with $x_i=q z_i$, we obtain
\begin{multline}
F(x,y)  \propto  \Delta_m^t(y) \Delta_m^t(x)\\ \times
\partial^{(y)}_{\omega_{ \mathbf m}} \partial^{(x)}_{\omega_{ \mathbf m}} \bar \omega^{(x)} \bar T^{(x)}_{N-1} \cdots \bar T^{(x)}_{m}  \frac{\prod_{i+j \leq m} (1-tq^{-1}x_i y_j)}{\prod_{i+j \leq m+1} (1-q^{-1}x_i y_j)} K^0_N(x;y).
\end{multline}
To prove the symmetry of $F(x,y)$, it thus suffices to prove the symmetry
of
\begin{equation}
G(x,y)=  
\bar \omega^{(x)} \bar T^{(x)}_{N-1} \cdots \bar T^{(x)}_{m}  \left(\frac{\prod_{i+j \leq m} (1-tq^{-1}x_i y_j)}{\prod_{i+j \leq m+1} (1-q^{-1}x_i y_j)} \right)  K^0_N(x;y).
\end{equation}
The only part of the term in parenthesis that depends on the variables $x_m,\dots,x_N$
is $(1-q^{-1}x_my_1)^{-1}$.  We have
\begin{equation}
\bar T^{(x)}_{N-1} \cdots \bar T^{(x)}_{m} \frac{1}{(1-q^{-1}x_m y_1)} =  t^{-N+m} \frac{\prod_{i=m}^{N-1} (1-tq^{-1}x_i y_1)}{\prod_{i=m}^N (1-q^{-1}x_i y_1)},
\end{equation}
which implies that
\begin{multline}
G(x,y)\propto  
\bar \omega^{(x)}  \left(\frac{\prod_{i+j \leq m} (1-tq^{-1}x_i y_j)}{\prod_{i+j \leq m+1} (1-q^{-1}x_i y_j)} \right)\\ \times \left(
 \frac{\prod_{i=m}^{N-1} (1-tq^{-1}x_i y_1)}{\prod_{i=m+1}^N (1-q^{-1}x_i y_1)} \right)
 K^0_N(x;y).
\end{multline}
We then straightforwardly compute
\begin{equation}
\bar \omega^{(x)} K^0_N(x;y) = K^0_N(x;y)  \left[
 \frac{\prod_{i=1}^{N} (1-tq^{-1}x_1 y_i)}{\prod_{i=1}^N (1-q^{-1}x_1 y_i)} \right],
\end{equation}
\begin{multline} 
\bar \omega^{(x)}  \prod_{i+j \leq m} (1-tq^{-1}x_i y_j)
=  \prod_{i+j \leq m} (1-tq^{-1}x_i y_j) \\ \times
\left(\prod_{j=2}^{m-1}(1-tq^{-1} x_{m+1-j} y_j)\right) 
\left[
\frac{(1-tq^{-1}x_m y_1)} {\prod_{j=1}^{m-1}(1-tq^{-1} x_{1} y_j)} \right],
\end{multline}
\begin{multline} 
\bar \omega^{(x)}  \prod_{i+j \leq m+1} (1-q^{-1}x_i y_j)= \prod_{i+j \leq m+1} (1-q^{-1}x_i y_j)
\\ \times \left(
\prod_{j=2}^{m}(1-q^{-1} x_{m+2-j} y_j)\right)  \left[\frac{(1-q^{-1}x_{m+1} y_1)}{\prod_{j=1}^{m}(1-q^{-1} x_{1} y_j)} \right],
\end{multline}
and
\begin{multline}
\bar \omega^{(x)} \left(
 \frac{\prod_{i=m}^{N-1} (1-tq^{-1}x_i y_1)}{\prod_{i=m+1}^N (1-q^{-1}x_i y_1)} \right) \\ 
= \left[\frac{\prod_{i=m+1}^{N} (1-tq^{-1}x_i y_1)}{\prod_{i=m+2}^{N} (1-q^{-1}x_i y_1)} \right] \frac{1}{(1-q^{-2}x_1y_1)} .
\end{multline}
All of the terms in the rhs of these expressions are symmetric 
in $x,y$ except those in square brackets. Multiplying the terms in
square brackets, the symmetry of $G(x,y)$ then depends on the symmetry of
\begin{multline}
 \left(
 \frac{\prod_{i=1}^{N} (1-tq^{-1}x_1 y_i)}{\prod_{i=1}^N (1-q^{-1}x_1 y_i)} \right)
\left(\frac{(1-tq^{-1}x_m y_1)}{\prod_{j=1}^{m-1}(1-tq^{-1} x_{1} y_j)}\right) \\ \times
\left(
\frac{\prod_{j=1}^{m}(1-q^{-1} x_{1} y_j)}{(1-q^{-1}x_{m+1} y_1)}\right) 
\left(\frac{\prod_{i=m+1}^{N} (1-tq^{-1}x_i y_1)}{\prod_{i=m+2}^{N} (1-q^{-1}x_i y_1)} \right). 
\end{multline}
But the previous expression is equal to
\begin{equation}
 \left(
 \frac{\prod_{i=m}^{N} (1-tq^{-1}x_1 y_i)}{\prod_{i=m+1}^N (1-q^{-1}x_1 y_i)} \right)
\left(\frac{\prod_{i=m}^{N} (1-tq^{-1}x_i y_1)}{\prod_{i=m+1}^{N} (1-q^{-1}x_i y_1)} \right) ,
\end{equation}
which is obviously  symmetric in $x,y$.  This proves the symmetry of $G(x,y)$ and therefore equality \eqref{pro1} holds.

We now prove eq. \eqref{pro2},  proceeding as in the proof of \eqref{pro1}. 
This amounts to proving
that
\begin{equation}
H(x,y) = (Y_{m+1}^{(x)}+\cdots+ Y_N^{(x)}) \frac{K^0_N(x;y) \Delta_m^t(x) \Delta_m^t (y)}{\prod_{1\leq i, j \leq m} (1-q^{-1}x_i y_j )}
\end{equation}
satisfies $H(x,y)=H(y,x)$.
We have
\begin{multline}
H(x,y)
\propto
\Delta_m^t(y) \\ \times U^{-\,(x)}_{m} U^{+\,(x)}_{m^c} 
(Y_{m+1}^{(x)}+\cdots+ Y_N^{(x)})   
  \frac{ \Delta_m^t(x)
K^0_N(x;y)}{\prod_{1\leq i, j \leq m} (1-q^{-1}x_i y_j )}.
\end{multline}
Since $U^{+\,(x)}_{m^c} T_i^{(x)}=  tU^{+\,(x)}_{m^c} $ 
whenever $i \geq m+1$
(and 
similarly for $\bar T_i^{(x)}$ on any function symmetric in $x_i,x_{i+1}$), 
we have
\begin{multline}
H(x,y) \propto  \Delta_m^t(y)   \\ \times U^{-\,(x)}_{m} 
U^{+\,(x)}_{m^c} \omega^{(x)} \bar T_{1}^{(x)} \cdots \bar T_{m}^{(x)}  \frac{
K^0_N(x;y)}{\prod_{1\leq i, j \leq m} (1-q^{-1}x_i y_j )}.
\end{multline}
Using \eqref{idenL},
we have thus
\begin{multline}
H(x,y)  \propto  \Delta_m^t(y) \Delta_m^t(x) \\ \times
\partial^{(x)}_{\omega_{ \mathbf m}} U^{+\,(x)}_{m^c} \omega^{(x)} \bar T^{(x)}_{1} \cdots \bar T^{(x)}_{m}  \frac{\Delta_m^t(x)
K^0_N(x;y)}{\prod_{1\leq i, j \leq m} (1-q^{-1}x_i y_j )}.
\end{multline}
Using the identity \eqref{iden1} with $x_i=q z_i$, we obtain
\begin{multline}
H(x,y)  \propto  \Delta_m^t(y) \Delta_m^t(x) \\  \times 
\partial^{(y)}_{\omega_{ \mathbf m}} \partial^{(x)}_{\omega_{ \mathbf m}} U^{+\,(x)}_{m^c} \omega^{(x)} \bar T^{(x)}_{1} \cdots \bar T^{(x)}_{m}  \frac{\prod_{i+j \leq m} (1-tq^{-1}x_i y_j)}{\prod_{i+j \leq m+1} (1-q^{-1}x_i y_j)} K^0_N(x;y).
\end{multline}
To prove the symmetry of $H(x,y)$, it thus suffices to prove the symmetry
of
\begin{equation} \label{eqproduit}
L(x,y)= U^{+\,(x)}_{m^c}  
\omega^{(x)} \bar T^{(x)}_{1} \cdots \bar T^{(x)}_{m} 
 \left(\frac{\prod_{i+j \leq m} (1-tq^{-1}x_i y_j)}{\prod_{i+j \leq m+1} (1-q^{-1}x_i y_j)} \right)  K^0_N(x;y).
\end{equation}
The only part of the product that depends on the variables $x_m,x_{m+1}$ is
 $(1-q^{-1}x_my_1)^{-1}$.  We have
\begin{equation}
\bar T^{(x)}_{m} \frac{1}{(1-q^{-1}x_m y_1)} = \frac{t^{-1}}{(1-q^{-1}x_{m+1} y_1)}
 \frac{(1-tq^{-1}x_{m} y_1)}{(1-q^{-1}x_{m} y_1)} \, .
\end{equation}
Adding the previous expression to the product in the rhs of \eqref{eqproduit}, 
the only factor that is not symmetric
in $x_{m-1},x_m$ is $(1-q^{-1}x_{m-1}y_2)^{-1}$.  We have this time
\begin{equation}
\bar T^{(x)}_{m-1} \frac{1}{(1-q^{-1}x_{m-1} y_2)} = \frac{t^{-1}}{(1-q^{-1}x_{m} y_2)}
 \frac{(1-tq^{-1}x_{m-1} y_2)}{(1-q^{-1}x_{m-1} y_2)} \, .
\end{equation}
Continuing in this manner, we get
\begin{multline}
 \bar T^{(x)}_{1} \cdots \bar T^{(x)}_{m} 
 \left(\frac{\prod_{i+j \leq m} (1-tq^{-1}x_i y_j)}{\prod_{i+j \leq m+1} (1-q^{-1}x_i y_j)} \right) \\ 
=  
 {t^{-m}}\left(\frac{\prod_{i+j \leq m} (1-tq^{-1}x_i y_j)}{\prod_{i+j \leq m+1} (1-q^{-1}x_i y_j)} \right)   \left(\frac{\prod_{i=1}^m (1-tq^{-1}x_{m+1-i} y_i)}{\prod_{i=1}^m (1-q^{-1}x_{m+2-i} y_i)} \right) \, , 
\end{multline}
which implies that
\begin{equation*}
L(x,y)\propto 
 U^{+\,(x)}_{m^c} \omega^{(x)}  \left(\frac{\prod_{i+j \leq m+1} (1-tq^{-1}x_i y_j)}{\prod_{i+j \leq m+2} (1-q^{-1}x_i y_j)} \right) (1-q^{-1}x_1 y_{m+1})
 K^0_N(x;y).
\end{equation*}
The following actions of $\omega^{(x)}$ are straightforward:
\begin{equation}
\omega^{(x)} K^0_N(x;y) = K^0_N(x;y)  \left[
 \frac{\prod_{i=1}^{N} (1-x_N y_i)}{\prod_{i=1}^N (1-tx_N y_i)} \right],
\end{equation}
\begin{equation} 
 \omega^{(x)}  \prod_{i+j \leq m+1} (1-tq^{-1}x_i y_j)
=  \prod_{i+j \leq m} (1-tq^{-1}x_i y_j)
\left[\prod_{j=1}^{m}(1-t x_{N} y_j) \right],
\end{equation}
\begin{equation} 
 \omega^{(x)}  \prod_{i+j \leq m+2} (1-q^{-1}x_i y_j)
=  \prod_{i+j \leq m+1} (1-q^{-1}x_i y_j)
\left[\prod_{j=1}^{m+1}(1-x_{N} y_j) \right],
\end{equation}
and
\begin{equation}
 \omega^{(x)} (1-q^{-1}x_1 y_{m+1}) = \left[ (1-x_Ny_{m+1}) \right].
\end{equation}
The product of the terms in square brackets of these expressions gives
\begin{multline}
 \left[
 \frac{\prod_{i=1}^{N} (1-x_N y_i)}{\prod_{i=1}^N (1-tx_N y_i)} \right] 
\left[\frac{\prod_{j=1}^{m}(1-t x_{N} y_j)}{\prod_{j=1}^{m+1} (1-x_N y_j)} \right]
\left[ (1-x_Ny_{m+1}) \right] \\ =  \frac{\prod_{i=m+1}^{N} (1-x_N y_i)}{\prod_{i=m+1}^N (1-tx_N y_i)}.
\end{multline}
Hence
\begin{multline}
L(x,y) \propto 
 \frac{\prod_{i+j \leq m} (1-tq^{-1}x_i y_j)}{\prod_{i+j \leq m+1} (1-q^{-1}x_i y_j)}  \\ K^0_N(x;y)  U^{+\,(x)}_{m^c} 
\left(   \frac{\prod_{i=m+1}^{N} (1-x_N y_i)}{\prod_{i=m+1}^N (1-tx_N y_i)} \right).
\end{multline}
All the terms in this expression are symmetric in $x,y$, except possibly
\begin{equation}
 U^{+ \, (x)}_{m^c}
\left(   \frac{\prod_{i=m+1}^{N} (1-x_N y_i)}{\prod_{i=m+1}^N (1-tx_N y_i)} \right).
\end{equation}
But the symmetry of this last expression 
follows from the well known symmetry of the
$m=0$ case (the usual Macdonald case).  Therefore $L(x,y)$ is symmetric and
\eqref{pro2} holds.
\end{proof}

\subsection{Duality}\label{dual}

 We end this section by generalizing to superspace the standard duality property that relates the  Macdonald symmetric functions $P_\la(q,t)$ and $P_{\la'}(t,q)$ \cite[Section VI.5]{Mac}.  Our method relies on the orthogonality and triangularity of both the Macdonald superpolynomials $P_\La(q,t)$ and the Jack superpolynomials $P_\La^{(\alpha)}$, respectively established in Corollary \ref{corotringortho} above and Theorem 1 of \cite{DLMadv}.  We also exploit the duality between the Jack superpolynomials $P_\La^{(\alpha)}$ and $P_{\La'}^{(\alpha^{-1})}$ given in Theorem 27 of \cite{DLMadv}.   Note that in what follows, only the special case $\alpha=1$ is relevant.   

The algebra $\mathscr{R}$ of symmetric functions in superspace is naturally equipped with two homomorphisms, {the first of which being}  
\begin{equation}
\omega_{q,t} p_r=(-1)^{r-1}\frac{1-q^r}{1-t^r}\, p_r\qquad \omega_{q,t}\tilde p_r=(-1)^{r}\tilde p_r \, .
\end{equation}
This is an extension to superspace of the standard homomorphism {defined in } \cite[Section VI.2]{Mac}. Second, we introduce the following homomorphism that affects only the fermionic {power-sums}: 
\beq
\tilde \omega_{q} p_r= p_r\qquad \tilde \omega_{q} \tilde p_r=q^{r} \tilde p_r\, .
\eeq
Combining the two homomorphisms, we get 
\beq  \Omega_{q,t}:= \tilde{\omega}_q\circ\omega_{q,t}\,,
 \eeq
which is such that
\beq \label{actionomegaqt}
\Omega_{q,t}\,p_\La=\omega_\La q^{|\La^a|}\prod_{i=1}^{\ell(\La^s)}
\frac{1-q^{\La_i^s}}{1-t^{\La^s_i}}\,p_\La,\eeq
where 
\beq \omega_\La=(-1)^{|\Lambda|-\ell(\La^s)} 
.\eeq
{When} $q=t=1$,  $\Omega_{q,t}$ reduces to the homomorphism  $\omega$ of \cite{DLMjaco}, whose action can be summarized as follows:
\beq  \label{actionomega} \omega \,p_\Lambda = \omega_\Lambda \,p_\La .\eeq
   Equations \eqref{actionomegaqt} and \eqref{actionomega} immediately imply 
that, for all $f,g\in\mathscr{R}$ (and by linearity, it suffices to verify the case  where $f=p_\La$ and $g=p_\Om$), 
\beq\label{dualprodscal} \LL \, {\Omega^{-1}_{q,t} } f\,|\,g \,\RR_{q,t}=\LL\, \omega  f\,|\, g \,\RR,
\eeq 
where the scalar product $\LL\,|\,\RR$ on the right-hand side is defined as $\LL\,|\,\RR_{q,t}$  in \eqref{newsp}, but with $q=t=1$.  Note that both $\Omega_{q,t}$ and $\omega$  are in fact automorphisms of $\mathscr{R}$. Their respective inverse  are:
\beq \Omega_{q,t}^{-1}= \Omega_{t,q}\circ \tilde{\omega}_{(qt)^{-1}}\qquad\text{and}\qquad
\omega^{-1}=\omega.\eeq 
In each  homogeneous component $\mathscr{R}^{n,m}$ of $\mathscr{R}$, we also have
\beq \label{Omegainvformula} \Omega_{q,t}^{-1}=(t^n q^{-n})\Omega_{t^{-1},q^{-1}}.\eeq 

Now let $\mathtt{ s}_\La$ be the Schur superpolynomial associated to the Jack superpolynomial $P^{(\alpha)}_\Lambda$ \cite{DLMadv}, which means  $\mathtt{s}_\La:=P_\La^{(1)}$.  We stress that $\mathtt{ s}_\La$ is not equal to the Schur function $s_\La$ defined later in the article as the  $q,t\to 0$ limit of $P_\La(q,t)$  {(cf. eq. \eqref{schur})}.   Then according to Theorem 1 of \cite{DLMadv},  
\beq \label{orthoschur} \LL \mathtt{s}_\La|\mathtt{s}_\Omega \RR= \mathtt{ b}_\La^{-1}\delta_{\La,\Om}.
\eeq
Moreover, from \cite[ Theorem 27]{DLMadv}  { applied to $\a=1$ (and recall that $\La'$ is the conjugate of $\La$)},  
\beq \label{dualityschur}
  \omega\, \mathtt{s}_\La=\mathtt{s}^*_{\La'}= {\mathtt{ b}_\La^{-1} \mathtt{s}_{\La'}} \eeq
{The expression for} the normalization  constant $\mathtt{ b}_\Lambda^{{-1}}$ is  known, being equal to $(-1)^{\binom{m}{2}}\| P^{(1)}_\La\|^2$,  where $\| P^{(\alpha)}_\La\|^2$ is given by Eq.\ (18) of \cite{DLMeva}.
When specialized to $\alpha=1$, this normalization factor reads{
\beq \label{defb}
\mathtt{ b}_\Lambda=(-1)^{\binom{m}{2}}\prod_{s\in\La}\frac{a_{\La^\cd}+\ell_{\La^*}+1}
{a_{\La^*}+\ell_{\La^\cd}+1} .
\eeq
Since upon conjugation, the role of the arm-length and leg-length (defined in \eqref{eqarms}) are exchanged, it   satisfies} 
\beq\label{inversionofb} \mathtt{ b}_{\La'}^{\phantom{-1}}=\mathtt{ b}_\La^{-1}, \eeq 
so that we can write 
\beq \label{schurdual} \mathtt{s}_{\La}^*:=\mathtt{ b}_\La\, \mathtt{s}_\La
{\quad\text{\and}\qquad
\LL \mathtt{s}_\La\,|\,\mathtt{s}^*_\Omega \RR= \delta_{\La,\Om}.}
\eeq

\begin{theorem}\label{theoduality}  Let $Q_\Lambda=\mathtt{ b}_\Lambda(q,t) P_\Lambda(q,t)\,,$ where $\mathtt{ b}_\Lambda(q,t)=\LL P_\Lambda(q,t)\,|\,P_\Lambda(q,t)\RR_{q,t}^{-1}.$
Then, the following duality holds: 
\beq
\Omega_{q,t} P_\La(q,t)= (qt^{-1})^{|\La|} \, Q_{\La'}(t^{-1},q^{-1}).\eeq
\end{theorem}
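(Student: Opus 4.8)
The plan is to mirror Macdonald's classical proof of duality \cite[Section VI.5]{Mac}, whose engine is an isometry between scalar products together with a conjugation-compatibility established at the ``Schur'' level; here the latter is supplied by the Jack ($\alpha=1$) duality \eqref{dualityschur}. Recall that by Corollary~\ref{corotringortho} the families $\{P_\La(q,t)\}$ and $\{Q_\Om(q,t)\}$, with $Q_\Om(q,t)=\mathtt{b}_\Om(q,t)P_\Om(q,t)$, are dual bases, $\LL P_\La(q,t)\,|\,Q_\Om(q,t)\RR_{q,t}=\delta_{\La\Om}$, and the same holds at the inverse parameters $(t^{-1},q^{-1})$. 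I would establish the identity in three steps: an isometry that makes $\{\Omega_{q,t}P_\La(q,t)\}$ orthogonal; a triangularity statement placing $\Omega_{q,t}P_\La(q,t)$ in the span of the $P_\Om(t^{-1},q^{-1})$ with $\Om\geq\La'$; and a Gram--Schmidt uniqueness argument fixing the proportionality constant.

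First I would prove the isometry
\[
\LL\,\Omega_{q,t}f\,|\,\Omega_{q,t}g\,\RR_{t^{-1},q^{-1}}=(qt^{-1})^{n}\,\LL f\,|\,g\RR_{q,t},\qquad f,g\in\mathscr R^{n,m}.
\]
By bilinearity it suffices to check this on the power sums, where $\Omega_{q,t}$ acts diagonally: using \eqref{actionomegaqt} for its eigenvalue on $p_\La$ and the explicit norm \eqref{newsp} at both $(q,t)$ and $(t^{-1},q^{-1})$, the factor $q^{|\La^a|}$ and the factors $\prod_i(1-q^{\La^s_i})/(1-t^{\La^s_i})$ combine with those in $z_\La(q,t)$ and $z_\La(t^{-1},q^{-1})$; applying $(1-t^{-k})/(1-q^{-k})=(q/t)^k(1-t^k)/(1-q^k)$, the net effect collapses to the uniform degree factor $(q/t)^{|\La^a|+|\La^s|}=(q/t)^{n}$. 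Taking $f=P_\La(q,t)$, $g=P_\Xi(q,t)$ and invoking orthogonality then gives $\LL\Omega_{q,t}P_\La(q,t)\,|\,\Omega_{q,t}P_\Xi(q,t)\RR_{t^{-1},q^{-1}}=(qt^{-1})^{n}\mathtt{b}_\La(q,t)^{-1}\delta_{\La\Xi}$, so $\{\Omega_{q,t}P_\La(q,t)\}$ is an orthogonal family for $\RR_{t^{-1},q^{-1}}$.

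Second, and this is the crux, I would show that $\Omega_{q,t}P_\La(q,t)$ lies in the linear span of $\{P_\Om(t^{-1},q^{-1}):\Om\geq\La'\}$. Combining \eqref{dualprodscal} (read at parameters $(t^{-1},q^{-1})$) with \eqref{Omegainvformula} yields $\LL\Omega_{q,t}F\,|\,G\RR_{t^{-1},q^{-1}}=(qt^{-1})^{n}\LL\omega F\,|\,G\RR$, where $\LL\,\cdot\,\RR$ is the $q=t=1$ product. Taking $F=P_\La(q,t)$ and $G=Q_\Om(t^{-1},q^{-1})$ extracts the coordinate of $\Omega_{q,t}P_\La(q,t)$ along $P_\Om(t^{-1},q^{-1})$ as a nonzero multiple of $\LL\omega P_\La(q,t)\,|\,Q_\Om(t^{-1},q^{-1})\RR$. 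I would then expand both $P_\La(q,t)$ and $P_\Om(t^{-1},q^{-1})$ in the Schur superpolynomial basis $\{\mathtt{s}_\La\}$ (all transitions are unitriangular, since each family is unitriangular in the monomials), and use the Schur orthogonality \eqref{orthoschur}, the Schur duality $\omega\mathtt{s}_\Xi=\mathtt{b}_\Xi^{-1}\mathtt{s}_{\Xi'}$ of \eqref{dualityschur}, and the conjugation symmetry \eqref{inversionofb}. Because $\omega$ reverses dominance, a surviving diagonal pairing $\LL\mathtt{s}_{\Xi'}\,|\,\mathtt{s}_\Psi\RR\neq0$ forces $\Psi=\Xi'$ with $\Om'\leq\Xi\leq\La$, so the coordinate vanishes unless $\Om\geq\La'$, while for $\Om=\La'$ the sum reduces to the single term $\Xi=\La$ with a nonzero coefficient proportional to $\mathtt{b}_{\La'}(t^{-1},q^{-1})$. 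The main obstacle is precisely this bookkeeping: tracking the order reversal under conjugation through the three changes of basis and confirming that the surviving support is exactly $\{\Om\geq\La'\}$.

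Finally, an orthogonal family that is triangular with respect to a second orthogonal family for the same non-degenerate pairing must be proportional term by term (a Gram--Schmidt uniqueness argument, run downward from the dominance-maximal index): the triangular support from the second step and the orthogonality from the first force $\Omega_{q,t}P_\La(q,t)=c_\La\,P_{\La'}(t^{-1},q^{-1})$. To pin down $c_\La$, I would apply $\Omega_{t^{-1},q^{-1}}$ to this relation; since \eqref{Omegainvformula} gives $\Omega_{t^{-1},q^{-1}}=(qt^{-1})^{n}\Omega_{q,t}^{-1}$, self-consistency under the swap $(q,t)\leftrightarrow(t^{-1},q^{-1})$, $\La\leftrightarrow\La'$ produces $c_\La=\mathtt{b}_\La(q,t)^{-1}$ together with the norm-conjugation identity $\mathtt{b}_\La(q,t)^{-1}=(qt^{-1})^{n}\mathtt{b}_{\La'}(t^{-1},q^{-1})$ (the super-analogue of the classical relation $b_{\la'}(t,q)=b_\la(q,t)^{-1}$, which may alternatively be read off the norm formula). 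Rewriting, $\Omega_{q,t}P_\La(q,t)=(qt^{-1})^{|\La|}\mathtt{b}_{\La'}(t^{-1},q^{-1})P_{\La'}(t^{-1},q^{-1})=(qt^{-1})^{|\La|}Q_{\La'}(t^{-1},q^{-1})$, as claimed.
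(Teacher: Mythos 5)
Your argument is correct in substance and, while it rests on exactly the same external inputs as the paper (the $\alpha=1$ Jack duality \eqref{dualityschur}, the relations \eqref{orthoschur}, \eqref{inversionofb}, \eqref{dualprodscal}, \eqref{Omegainvformula}, and the orthogonality and unitriangularity of both $P_\La(q,t)$ and $\mathtt{s}_\La$), it is organized quite differently. The paper never leaves the level of transition matrices: it reduces the theorem to the single matrix identity $JA(t^{-1},q^{-1})JA(q,t)'=I$ of \eqref{eqJA}, notes that the left-hand side $B$ is strictly lower unitriangular, and then shows that $B^{-1}$ is upper triangular by chaining auxiliary identities for the Gram matrix $U(q,t)$ of the $\mathtt{s}_\La$'s (notably $U(q,t)^{-1}=(t^nq^{-n})U(t^{-1},q^{-1})$ and $JU(q,t)J=\mathtt{b}\,U(q,t)\,\mathtt{b}^{-1}$). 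You instead work directly with the polynomials: your isometry $\LL\Omega_{q,t}f\,|\,\Omega_{q,t}g\RR_{t^{-1},q^{-1}}=(qt^{-1})^{n}\LL f\,|\,g\RR_{q,t}$ is correct (it is the basis-free form of the paper's relation $\zeta(q,t)^{-1}\zeta=(t^nq^{-n})\zeta^{-1}\zeta(t^{-1},q^{-1})$) and replaces the manipulation of $U$; your Gram--Schmidt induction replaces the ``$B^{-1}$ is upper triangular'' step; and your support computation is the transparent version of ``$B$ is strictly lower unitriangular'' --- conjugation does reverse the order \eqref{eqorder1} componentwise, and $\mathtt{b}_\Xi^{-1}\mathtt{b}_{\Xi'}^{-1}=1$ from \eqref{inversionofb} collapses the diagonal pairing. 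Each route buys something: the paper's is compact once the matrix identities are assembled, while yours isolates a reusable isometry and makes the role of orthogonality explicit.

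One step must be tightened. As written, your determination of $c_\La$ by ``self-consistency'' only yields $c_\La(q,t)\,c_{\La'}(t^{-1},q^{-1})=(qt^{-1})^{n}$, and the isometry adds $c_\La^2=(qt^{-1})^{n}\,\mathtt{b}_{\La'}(t^{-1},q^{-1})\,\mathtt{b}_\La(q,t)^{-1}$; together these leave a sign undetermined, and resolving it by reading the norm-conjugation identity ``off the norm formula'' would lean on Conjecture~\ref{con2}, which is unproven. No extra input is needed, however: your own second step already gives the coordinate of $\Omega_{q,t}P_\La(q,t)$ along $P_{\La'}(t^{-1},q^{-1})$ as $(qt^{-1})^{n}\,\mathtt{b}_{\La'}(t^{-1},q^{-1})\sum_{\Xi}A_{\La\Xi}(q,t)A_{\La'\Xi'}(t^{-1},q^{-1})\,\mathtt{b}_\Xi^{-1}\mathtt{b}_{\Xi'}^{-1}$, and the sum collapses to the single term $\Xi=\La$ with value $1$ by unitriangularity and \eqref{inversionofb}. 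So the constant is $(qt^{-1})^{|\La|}\mathtt{b}_{\La'}(t^{-1},q^{-1})$ on the nose, and the theorem follows without the self-consistency detour.
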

\begin{proof}  We proceed essentially as in \cite[Section VI.5]{Mac}.  Note that in what follows, we assume that all polynomials are homogeneous and belong to $\mathscr{R}^{n,m}$, which is finite dimensional.

Thanks to the orthogonality of the Macdonald superpolynomials  {established} in Corollary \ref{corotringortho} and the inversion formula \eqref{Omegainvformula}, the stated duality property is equivalent to
\beq \LL \Omega^{-1}_{q,t } \,P_{\La'}(t^{-1},q^{-1})\, |\,P_\Om(q,t)\RR_{q,t}=\delta_{\La,\Om}.
\eeq
By virtue of \eqref{dualprodscal} and $\omega^{-1}=\omega$, the last equation turns out to be equal to 
\beq \label{dual2} \LL \omega P_{\La'}(t^{-1},q^{-1})\, |\,P_\Om(q,t)\RR_{}=\delta_{\La,\Om}. 
\eeq 
Now, let ${A}(q,t)$ be the transition matrix between the $P_\Lambda(q,t)$'s and the $\mathtt{s}_{\Om}$'s, that is 
\beq \label{dual3} P_\La(q,t)=\sum_{\Omega}A_{\La\Om}(q,t)\, \mathtt{s}_\Om.
\eeq 
Let also $J$ be the involutive matrix with elements $J_{\La\Om}=\delta_{\La',\Om}$.    Then \eqref{dual2} is equivalent to the following matrix equation:
\beq  \label{eqJA}JA(t^{-1},q^{-1}) J A(q,t)'=I,
\eeq
where we stress that $A(q,t)'$ is the transpose of the matrix $A(q,t)$. 
This is the equation we {will  prove. 
Before attacking  this problem directly, we need to derive a number of auxiliary results, to which we now turn.}

Let $p$, $\mathtt{s}$, and $\mathtt{s}^*$ be the column vectors with the $\La$-element equal to  $p_\La$, $\mathtt{s}_\La$, and $\mathtt{s}^*_\La$ respectively.  Let $\mathtt{ b}$, $\zeta$, and $\zeta(q,t)$ be diagonal matrices whose non-zero elements are respectively given by   $\mathtt{ b}_\La$ (defined in \eqref{defb}),  $ \zeta_\La=\LL p_\La\,|\,p_\La\RR$ and $ \zeta_\La(q,t)=\LL p_\La \,|\, p_\La\RR_{q,t}$, {so that
\beq \label{defzeta}
 \zeta_\La=(-1)^{\binom{m}{2}} z_{\La^s}\quad \text{and}\quad 
 \zeta_\La(q,t)=(-1)^{\binom{m}{2}}z_{\La^s} 
\, q^{|\La^a|} \prod_{i=1}^{\ell(\La^s)}
\frac{1-q^{\La^s_i}}{1-t^{\La^s_i}}.
 \eeq}
We also define $X$ as the matrix with entries $X_{\La\Om}=\LL p_\La\,|\, \mathtt{s}^*_\Om\RR$.  Then by making use of the orthogonality with respect to $\LL\,|\,\RR$ of both the $s_\La$'s and the $p_\La$'s, one readily obtains 
\beq \label{eqXps}
p=X\mathtt{s}, \qquad \mathtt{s}^*=X'\zeta^{-1}p.
\eeq
{Together with $\mathtt{s}^*=\mathtt{b}\mathtt{s}$, these imply
\beq  X'=\mathtt{ b}X^{-1}\zeta.
\eeq}
Moreover, let $U(q,t)$ and  be the matrix with elements    
\beq U_{\La\Om}=\LL \mathtt{s}_\La |\mathtt{s}_\Om^*\RR_{q,t}. \eeq
The use of the {previous}  two equations then leads to
\beq XU(q,t)X^{-1}=\zeta^{-1}\zeta(q,t),
\eeq
{whose inverse version reads}
\beq  \label{Uinv}
XU(q,t)^{-1}X^{-1}=\zeta(q,t)^{-1}\zeta.
\eeq 
{From the explicit expressions of $\zeta(q,t)$ and $\zeta$ in \eqref{defzeta}, we see that}
\beq \zeta(q,t)^{-1}\zeta=(t^nq^{-n})\zeta^{-1}\zeta(t^{-1},q^{-1}).
\eeq 
{Because $X$ is independent of $q$ and $t$, the relation \eqref{Uinv} implies that}
\beq \label{eqUinv}
U(q,t)^{-1}=(t^nq^{-n})U(t^{-1},q^{-1}).
\eeq
{Furthermore, we have, using \eqref{dualityschur}, the adjoint character of $\omega$ and $\omega^2=1$, we have} 
\beq (JU(q,t)J)_{\La\Om}=\LL \mathtt{s}_{\La'}\,|\,\mathtt{s}^*_{\Om'}\RR_{q,t}=\LL \omega \mathtt{s}^*_{\La}\,|\,\omega {\mathtt{s}_{\Om}}\RR_{q,t}
= \LL \mathtt{s}^*_{\La}\,|\,\mathtt{s}_{\Om}\RR_{q,t},
\eeq so that {(translating the first and third equality above)}
\beq \label{eqUJ}
J U(q,t) J= U(q,t)'=\mathtt{ b}\, U(q,t) \mathtt{ b}^{-1}. \eeq 
One last identity concerning $U(q,t)$ is needed:
\beq \label{eqUA}
A(q,t)\mathtt{ b}^{-1} U(q,t)'A(q,t)'=\mathtt{ b}(q,t)^{-1},
\eeq 
where $\mathtt{b}(q,t)^{-1}$ is the diagonal matrix with entries 
$\LL P_\La(q,t)|P_\La(q,t)\RR_{q,t}$.  {The relation} \eqref{eqUA} follows directly from the orthogonality of the Macdonald superpolynomials and the definition of $A(q,t)$.

We {are now in position to prove} \eqref{eqJA}.  From the triangular expansions of { $P_\La(q,t)$ and $\mathtt{s}_\La$ (which are both of the form $m_\La+\text{lower terms}$), }
we know that the matrix $A(q,t)$ is strictly upper unitriangular.  We recall (see for instance {\cite[I.(6.2)]{Mac})} that a matrix $M$ is strictly upper (uni)triangular if and only if $JMJ$ is strictly lower (uni)triangular. Hence, the matrix 
\beq \label{defmatB}B=JA(t^{-1},q^{-1}) J A(q,t)' \eeq 
is strictly lower unitriangular. 
{Therefore, in order to prove that $B$ is the identity (which will prove  \eqref{eqJA}),
 it only} remains to prove that $B$ is also upper triangular.  The use of the {second equality of} \eqref{eqUA} and \eqref{eqUJ}, {under the form $  U(q,t)\mathtt{b}^{-1}J=\mathtt{b}^{-1} J U(q,t)$,} allows us to write   
\beq   \mathtt{b}(q,t)^{-1} B^{-1}= A(q,t)\mathtt{b}^{-1} J U(q,t) A(t^{-1},q^{-1})^{-1} J\, .
\eeq
{Now, by isolating $U$ from  \eqref{eqUA} and inverting the result by using   \eqref{eqUinv}}, we get 
\beq   \mathtt{b}(q,t)^{-1} B^{-1}=(t^{-n}q^n)A(q,t) \,\mathtt{b}^{-1}J\mathtt{b}^{-1} \,A(t^{-1},q^{-1})' \,\mathtt{b}(t^{-1},q^{-1})J.
\eeq
However, by exploiting \eqref{inversionofb}  and the definition of $J$, one readily shows that $\mathtt{b}^{-1}J\mathtt{b}^{-1}=J$.  Thus,
\beq   \mathtt{b}(q,t)^{-1} B^{-1}=(t^{-n}q^n)\,A(q,t) J A(t^{-1},q^{-1})' \,b(t^{-1},q^{-1})J,
\eeq
The comparison with \eqref{defmatB} yields 
\beq   \mathtt{b}(q,t)^{-1} B^{-1}=(t^{-n}q^n)\, B' \, J \mathtt{b}(t^{-1},q^{-1})J .
\eeq 
Since both $\mathtt{b}(t^{-1},q^{-1})$ and $J \mathtt{b}(t^{-1},q^{-1})J $ are diagonal matrices, and  since  $B$ is strictly lower unitriangular, the last equation implies that $B^{-1}$ is  also  
upper triangular. Consequently,  $B$ is the identity matrix, which completes the proof of \eqref{eqJA}.
\end{proof}

\section{Another scalar product} \label{asp}

The value of the norm $\LL {P_\La}|{P_\La}\RR_{q,t}$ of 
the Macdonald polynomials in superspace
was conjectured in \cite{BDLM}.   This conjecture is reproduced below. 
We will now define another
scalar product with respect to which the Macdonald polynomials in superspace
are also orthogonal.  This other scalar product is not bilinear anymore
(it is sesquilinear).  Nevertheless, we will show that, remarkably, the norm
of the  Macdonald polynomials in superspace with respect to that other 
scalar product is (up to a power of $q$) identical to the conjectural
expression for $\LL {P_\La}|{P_\La}\RR_{q,t}$. 

\subsection{The conjectured norm of the Macdonald superpolynomials}

We first present the conjectural expression for
$\LL {P_\La}|{P_\La}\RR_{q,t}$ given in \cite{BDLM}. 
It involves  the quantities 
(recall that arm- and leg-lengths were defined in \eqref{eqarms})
\begin{equation}\label{defhupdown}
h^\uparrow_\La=\prod_{s\in\mathcal{B}(\La)} (1-q^{a_{\La^*}(s)+1}t^{{l}_{\Lambda^{\circledast}}(s)})\, ,\quad h^\downarrow_\La=\prod_{s\in\mathcal{B}(\La)}(1- q^{{a}_{\Lambda^{\circledast}}(s)}t^{l_{\La^*}(s)+1}),\end{equation}
where $\B(\La)$ denotes the set of squares in the diagram of  $\La$ that do not
appear at the same time in a row containing a circle {and} in a
column containing a circle (this excludes for instance the squares $(1,1),\,(1,2)$ and $(3,1)$ of $\La$ whose diagram is found in \eqref{exdia}). 

\begin{conjecture}\label{con2}
The norm of  $P_\La$  defined in Theorem~\ref{theo1} is
\begin{equation}\label{norm}
\LL {P_\La}|{P_\La}\RR_{q,t}=   (-1)^{\binom{m}{2}} q^{ | \La^a | }\frac{h^\uparrow_\La}{h^\downarrow_\La}.
\end{equation}
\end{conjecture}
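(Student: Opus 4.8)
The plan is to obtain the norm from the connection \eqref{PvsE} to the non-symmetric Macdonald polynomials, for which the norm in the Cherednik constant-term scalar product is known in closed form, and then to transport the answer to the bilinear product $\LL\cdot|\cdot\RR_{q,t}$ of Theorem~\ref{theo1}. First I would introduce the finite-variable constant-term (sesquilinear) scalar product $\LL\cdot,\cdot\RR'_N$ of Section~\ref{asp}, for which the $E_\eta$ are pairwise orthogonal with explicitly known norms $\LL E_\eta,E_\eta\RR'_N$ (products of $(1-q^{a}t^{l})$-type factors built from the spectral values $\bar\eta_i=q^{\eta_i}t^{-\bar l_\eta(i)}$). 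Because $P_\La$ is, up to the constant $c_\La(t)$ of \eqref{clambda}, the image of $E_{\La^R}$ under the fixed operator $\theta_1\cdots\theta_m\,A_m U^+_{m^c}$, the orthogonality of the $\{P_\La\}$ in $\LL\cdot,\cdot\RR'_N$ follows from that of the $\{E_\eta\}$ (the (anti)symmetrization being, up to scalars, an orthogonal projection), and $\LL P_\La,P_\La\RR'_N$ reduces to $\LL E_{\La^R},E_{\La^R}\RR'_N$ times combinatorial factors coming from $A_m$, $U^+_{m^c}$ and the normalization.

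The computation of $\LL P_\La,P_\La\RR'_N$ factorizes into three pieces: a fermionic piece from $\theta_1\cdots\theta_m$ and the antisymmetrizer $A_m$, producing a Cauchy-type determinant exactly as in \eqref{idena}; a purely bosonic Macdonald factor in the variables $x_{m+1},\dots,x_N$ coming from $U^+_{m^c}$; and a mixed contribution attached to the cells straddling the circles of $\La$. The $t$-symmetrization constants $f_{\La^s}(t)\,t^{\inv(\La^s)}$ in $c_\La(t)$ are tailored (as already exploited in Proposition~\ref{propuni}) to cancel the multiplicities produced by $U^+_{m^c}$, leaving a clean product of $(1-q^{a}t^{l})$ factors indexed by the cells of $\La^*$ and $\La^\cd$.

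The core is then a combinatorial identity — the one to be established in the Appendix — that reorganizes this product of spectral factors into $h^\uparrow_\La/h^\downarrow_\La$. The crucial point is the emergence of the set $\B(\La)$: the cells lying simultaneously in a circled row and a circled column contribute factors that cancel between numerator and denominator, so only $\B(\La)$ survives, and the surviving factors are precisely those of \eqref{defhupdown} built from $a_{\La^*},a_{\La^\cd},l_{\La^*},l_{\La^\cd}$. I would prove this identity by induction on the fermionic degree $m$ (equivalently on $|\La^a|$), removing one fermionic row at a time and using the horizontal/vertical $m$-strip condition on $\La^\cd/\La^*$ to control how the arm- and leg-lengths change. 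Taking $N\to\infty$, which is legitimate by the stability Proposition~\ref{propostable}, yields $\LL P_\La,P_\La\RR' = (-1)^{\binom{m}{2}}q^{c}\,h^\uparrow_\La/h^\downarrow_\La$ for an explicit power $c$ of $q$ carried by the $q^{-1}$-shift in the fermionic kernel $1/(1-q^{-1}x_iy_j)$.

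The main obstacle is the transfer from $\LL\cdot,\cdot\RR'$ to $\LL\cdot|\cdot\RR_{q,t}$, since the sesquilinear constant-term product is not a specialization of the bilinear power-sum product and the two norms need not differ by a single $\La$-independent constant. To bridge them I would extend Macdonald's limiting argument \cite[VI.9]{Mac}: the finite-$N$ constant-term norm, suitably normalized, converges as $N\to\infty$ to the power-sum norm in the $m=0$ sector, and the combinatorial identity above is exactly what forces the two limits to agree in the fermionic sectors as well, up to a pure power of $q$. To fix that power I would use the duality of Theorem~\ref{theoduality}, which relates $\LL P_\La|P_\La\RR_{q,t}$ to $\LL P_{\La'}|P_{\La'}\RR_{t^{-1},q^{-1}}$, together with the principal specialization of $P_\La$ read off from \eqref{PvsE} and the known evaluation of $E_\eta$: these two constraints over-determine the proportionality constant and pin it to $q^{|\La^a|}$, reproducing the conjectured norm and following the route used in the Jack case \cite{DLMeva}. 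Controlling this $q$-power and the sign $(-1)^{\binom{m}{2}}$ uniformly across all superpartitions is the delicate step.
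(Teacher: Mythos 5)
You should first be aware that the statement you are proving is presented in the paper as a conjecture (Conjecture~\ref{con2}): the paper does not prove it, so there is no "paper's own proof" to match. What the paper actually establishes (Proposition~\ref{quasiP} together with the combinatorial identity \eqref{ident} of the Appendix) is essentially the first half of your plan: the $P_\La$ are orthogonal for the sesquilinear constant-term product $\L\cdot,\cdot\R^{\mathscr R}_{N,q,t}$, and the normalized norm tends to $(-1)^{\binom{m}{2}}\,h^\uparrow_\La/h^\downarrow_\La$ as $N\to\infty$; the identification of this with $q^{-|\La^a|}\,\LL P_\La|P_\La\RR_{q,t}$ is explicitly flagged with $\stackrel{?}{=}$ as conjectural. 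Up to that point your proposal reproduces the paper's actual content, including the reduction to $\L E_{\La^R},E_{\La^R}\R_{N,q,t}$ via the mixed-symmetry polynomials and the inductive combinatorial identity.

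The genuine gap is in your transfer step, and it is not a technicality. Two scalar products that both orthogonalize the family $\{P_\La\}$ impose no relation whatsoever between the two families of norms: any diagonal rescaling of one product still orthogonalizes the basis. In the classical case Macdonald closes this gap not by a limiting argument alone but by computing $\LL P_\la|P_\la\RR_{q,t}$ \emph{independently}, through the specialization and duality machinery of \cite[VI.6]{Mac} (Pieri coefficients, the evaluation of $P_\la$ at $1,t,t^2,\dots$), and only then comparing with the constant-term norm. In superspace none of that machinery is available here: the evaluation of $P_\La$ is itself only conjectural in \cite{BDLM}, and the duality of Theorem~\ref{theoduality} cannot pin the norm down because the quantity $\mathtt{b}_\La(q,t)=\LL P_\La|P_\La\RR_{q,t}^{-1}$ enters the statement of that theorem as an undetermined normalization --- the theorem is consistent with any value of the norm. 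Your assertions that the combinatorial identity "forces the two limits to agree in the fermionic sectors" and that duality plus evaluation "over-determine the proportionality constant" are therefore not proofs but restatements of the conjecture. The paper itself only remarks that, given the duality of Section~\ref{dual}, a proof of the evaluation and norm conjectures "would follow that in \cite{DLMeva}"; that program is deferred, not carried out, and your proposal does not carry it out either.
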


\subsection{The constant-term scalar product of the non-symmetric Macdonald polynomials}

Let $\text{C.T.}(f)$ denote the constant term of the Laurent series of
$f$ in the variables $x_1,\dots,x_N$.  Define the following scalar product on $\mathbb Q(q,t)[x_1,\dots,x_N]$:
\begin{equation}
\langle f, g \rangle_{N,q,t}:=\text{C.T.}\left\{ f(x;q,t)\, g(x^{-1};q^{-1},t^{-1})\,W(x;q,t)\right\},
\end{equation}
   where
\begin{equation}W(x;q,t)=\prod_{1\leq i<j \leq N}\frac{(x_i/x_j;q)_\y \,(qx_j/x_i;q)_{\y}}{(tx_i/x_j;q)_\y\,(qtx_j/x_i;q)_\y}.
\end{equation}
Note that this scalar product is sesquilinear, that is,
for $c=c(q,t) \in \mathbb Q(q,t)$, we have
\begin{equation}
\langle c \, f, g \rangle_{N,q,t} = c \, \langle f, g \rangle_{N,q,t} \quad
{\rm and} \quad \langle f, c \, g \rangle_{N,q,t} = \bar c \, 
\langle f, g \rangle_{N,q,t},
\end{equation}
where $\bar c=c(1/q,1/t)$.

\begin{proposition}
\cite{Mac,Mac1}  The non-symmetric Macdonald polynomials $E_\eta(x;q,t)$ form an orthogonal set with respect to $\langle \cdot, \cdot \rangle_ {N,q,t}$.
\end{proposition}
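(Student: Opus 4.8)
The plan is to run the standard spectral argument: the $E_\eta$ are the joint eigenfunctions of the commuting Cherednik operators $Y_i$ (eq.~\eqref{eigenvalY}), and the pairing $\langle\cdot,\cdot\rangle_{N,q,t}$ is built precisely so that these operators become unitary. First I would record that the spectrum separates the $E_\eta$: the assignment $\eta\mapsto(\bar\eta_1,\dots,\bar\eta_N)$, with $\bar\eta_i=q^{\eta_i}t^{-\bar l_\eta(i)}$, is injective, since the $q$-parts of the $\bar\eta_i$ recover the multiset $\{\eta_1,\dots,\eta_N\}$ while the $t$-parts encode the positions. Hence for $\eta\neq\nu$ there is an index $i$ with $\bar\eta_i\neq\bar\nu_i$.

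The crux is the adjointness relation $\langle Y_if,g\rangle_{N,q,t}=\langle f,\bar Y_ig\rangle_{N,q,t}$, i.e.\ $Y_i^{\ast}=Y_i^{-1}=\bar Y_i$ in the notation of \eqref{invC}. Because $Y_i=t^{-N+i}T_i\cdots T_{N-1}\,\omega\,\bar T_1\cdots\bar T_{i-1}$, and adjunction reverses a product while inverting each unitary factor, it suffices to prove $T_j^{\ast}=\bar T_j$ and $\omega^{\ast}=\bar\omega$; assembling the reversed product then reproduces exactly formula \eqref{invC} for $\bar Y_i$, the scalar $t^{-N+i}$ contributing its conjugate $t^{N-i}$. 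For the generators $T_j=t+c_j(s_j-1)$, with $c_j=(tx_j-x_{j+1})/(x_j-x_{j+1})$, I would compute the adjoint directly. Writing $\bar g(x)=g(x^{-1};q^{-1},t^{-1})$ and moving the transposition $s_j$ inside the $s_j$-invariant constant term, the whole computation collapses to the single quasi-invariance identity $s_jW/W=(tx_j-x_{j+1})/(x_j-tx_{j+1})$, which gives $(s_jc_j)(s_jW/W)=c_j$ and hence $\langle T_jf,g\rangle_{N,q,t}=\text{C.T.}\{f\,W\,(T_j\bar g)\}$. Recognizing $T_j\bar g=\overline{\bar T_jg}$ (the bar-involution conjugates the operator $T_j$ into $\bar T_j=T_j^{-1}$) then yields $T_j^{\ast}=\bar T_j$. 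The analogous statement $\omega^{\ast}=\bar\omega$ for $\omega=K_{N-1,N}\cdots K_{1,2}\,\tau_1$ is the delicate point, as it rests on the quasi-periodicity of $W$ under the $q$-shift $\tau_1:x_1\mapsto qx_1$ together with the translation behaviour of the constant-term functional; this is the step I expect to be the main obstacle.

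With unitarity in hand, orthogonality is immediate. For $\eta\neq\nu$ choose $i$ with $\bar\eta_i\neq\bar\nu_i$ and compute
\[
\bar\eta_i\,\langle E_\eta,E_\nu\rangle_{N,q,t}=\langle Y_iE_\eta,E_\nu\rangle_{N,q,t}=\langle E_\eta,\bar Y_iE_\nu\rangle_{N,q,t}=\overline{\bar\nu_i^{-1}}\,\langle E_\eta,E_\nu\rangle_{N,q,t}=\bar\nu_i\,\langle E_\eta,E_\nu\rangle_{N,q,t},
\]
where the last equality uses the unimodularity $\overline{\bar\nu_i}=q^{-\nu_i}t^{\bar l_\nu(i)}=\bar\nu_i^{-1}$ built into the sesquilinear conjugation $\bar c=c(1/q,1/t)$. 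Since $\bar\eta_i\neq\bar\nu_i$, this forces $\langle E_\eta,E_\nu\rangle_{N,q,t}=0$, establishing the orthogonality of the $E_\eta$.
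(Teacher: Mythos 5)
The paper offers no proof of this proposition: it is quoted from Macdonald and Cherednik, so there is nothing internal to compare your argument against. That said, your outline is the standard route from those references, and the pieces you actually carry out are correct: the spectrum $(\bar\eta_1,\dots,\bar\eta_N)$ separates the compositions for generic $(q,t)$ (the exponents of $q$ already recover $\eta$); the quasi-invariance $s_jW/W=(tx_j-x_{j+1})/(x_j-tx_{j+1})$ is right and, combined with $(s_jc_j)(s_jW/W)=c_j$ and the fact that the bar involution conjugates $T_j$ into $\bar T_j$, it does give $T_j^{\ast}=\bar T_j$; and the closing eigenvalue computation, including $\overline{\bar\nu_i}=\bar\nu_i^{-1}$, is sound. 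One point you should at least flag in the $T_j$ step: $c_j$ is a rational function, so splitting $\mathrm{C.T.}\{c_j(s_jf)\bar g\,W\}-\mathrm{C.T.}\{c_jf\bar g\,W\}$ and applying $s_j$ to only one piece presupposes a choice of Laurent expansion of $c_j$, and $s_j$ exchanges the domains $|x_j|>|x_{j+1}|$ and $|x_j|<|x_{j+1}|$; the usual remedy is to manipulate the polynomial $c_j(s_j-1)f$ as a whole.

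The genuine gap is the one you name yourself: $\omega^{\ast}=\bar\omega$ is asserted, not proved, and it cannot be waved through, because it is precisely the step in which the infinite products defining $W$ --- the only genuinely $q$-dependent ingredient of the pairing --- are used; without it the unitarity of the $Y_i$, and hence the orthogonality, is unsupported. To close it, observe first that $\mathrm{C.T.}$ is invariant under the substitution underlying $\omega$ (a permutation of the variables composed with $x_1\mapsto qx_1$, neither of which changes the coefficient of $x^0$), so that $\langle\omega f,g\rangle_{N,q,t}=\mathrm{C.T.}\{f\cdot\omega^{-1}(\bar g\,W)\}$; the identity then reduces to computing the multiplier $\omega^{-1}W/W$ factor by factor via $(a;q)_\infty=(1-a)(qa;q)_\infty$ (only the factors of $W$ involving $x_1$ or $x_N$ fail to be permuted among themselves) and checking that it exactly compensates the discrepancy between $\omega^{-1}\bar g$ and $\overline{\bar\omega g}$ arising from the interaction of the bar involution with the $q$-shift $\tau_1$. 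Until that computation is written out, the proof is incomplete.
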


The norm is explicitly given by the expression (cf. \cite[Prop. 3.4]{Mar} and \cite{Che,Ba})
\begin{equation}\label{qtnorE}\frac{\L E_\eta,E_\eta\R_{N,q,t} }{\L 1, 1\R_{N,q,t} }=\frac{d'_\eta\,e_\eta}{d_\eta\,e'_\eta},
\end{equation}
where for a composition $\gamma$,
\begin{equation}\begin{array}{ll}\label{debhqt}\displaystyle
d_\gamma   = \prod_{s\in\gamma}[1-q^{a(s)+1}\,t^{ \l(s)+1}], &  \displaystyle
d'_\gamma = \prod_{s\in\gamma}[1-q^{a(s)+1}\,t^{ \l(s)}],  \\ \displaystyle
e_\gamma = \prod_{s\in\gamma}[1-q^{a'(s)+1}\,t^{N- \l'(s)}], \quad
& \displaystyle e'_\gamma = \prod_{s\in\gamma}[1-q^{a'(s)+1}\,t^{N-1-\l'(s)}],  \\
\displaystyle
b_\gamma = \prod_{s\in\gamma}[1-q^{ a'(s)}\, t^{N-\l'(s)}],  
& \displaystyle h_{\gamma} = \prod_{s\in\gamma}[1-q^{a(s)}\,t^{\l(s)+1}],
 \displaystyle\end{array}
\end{equation}
(we added two expressions to be needed shortly, $b_\gamma$ and $h_\gamma$).
 The {arm- and leg-(co)lengths} in these expressions are given, for $s=(i,j)$, by
\begin{equation}\begin{array}{rl}
a(s)=& \gamma_i-j \, ,  \\
 l(s) = &\# \{k=1,...,i-1 \, | \, j \leq \gamma_k+1\leq \gamma_i \} \\  
& \qquad + \# \{k=i+1,...,N \, |\,  j \leq\gamma_k \leq \gamma_i \} \, , \\
a'(s) = &j-1 \, , \\
l'(s) = & \# \{k=1,...,i-1 \, | \,  \gamma_k\geq \gamma_i \}\\
& \qquad 
+ \# \{k=i+1,...,N \, |\,  \gamma_k > \gamma_i \}.
\end{array}\end{equation}

\subsection{Another orthogonality relation for the Macdonald super\-poly\-no\-mials}

Let $f \in \mathscr R_N$ be a symmetric superpolynomial 
of fermionic degree $m$.  We define
\begin{equation}
f^{\bullet} = t^{-\binom{m}{2}} \frac{\Delta^t_m(x)}{\Delta_m(x)}
[\theta_1 \dots \theta_m] f,
\end{equation}
where we recall that, as in Section~\ref{PvsEs}, $[\theta_1 \cdots \theta_m] f$
stands for the coefficient of $\theta_1 \cdots \theta_m$ in $f$.
Note that $f^{\bullet} \in \mathbb Q(q,t)[x_1,\dots,x_N]$ since
$[\theta_1 \cdots \theta_m] f$ is antisymmetric in the variables
$x_1,\dots,x_m$ and thus divisible by $\Delta_m(x)$.  If $f$
does not have a specific fermionic degree, it can be decomposed
as $f=f_0+f_1+\cdots+f_r$, where $f_m$ is the part of $f$ of
fermionic degree $m$.  We then let
\begin{equation}
f^{\bullet} = f_0^{\bullet} + f_1^{\bullet} + \cdots + f_r^{\bullet} .
\end{equation}

\begin{definition}  Let $f$ and $g$  be superpolynomials in $\mathscr R_N$.
We define the following sesquilinear scalar product on $\mathscr R$:
\begin{equation}\label{anewsp}
\L f , g \R_{N,q,t}^{\mathscr R}= \sum_{m} \frac{1}{[m]_t!}
\L f_m^{\bullet} , g_m^{\bullet} \R_{N,q,t}\, .
\end{equation}
\end{definition}

The following proposition states that the Macdonald superpolynomials are also orthogonal with respect to this new scalar product.  Moreover, it
relates the norm of $P_{\Lambda}$ in the two scalar products.
\begin{proposition} \label{quasiP} 
We have
\begin{equation}
\L   P_\La ,   P_\Om \R_{N,q,t}^{\mathscr R}= 0  \quad {\rm if~} \Lambda \neq \Omega \, .
\end{equation}
Furthermore,
\begin{equation}\label{2normsa}
\lim_{N \rightarrow \infty} \frac{ \L   P_\La ,   P_\La \R_{N,q,t}^{\mathscr R} }{\L 1, 1\R_{N,q,t}} = (-1)^{\binom{m}{2}} \frac{h^\uparrow_\La}{h^\downarrow_\La} \stackrel{?}{=}  q^{-|\La^a |} \, \LL {P_\La}|{P_\La}\RR_{q,t}\, ,
\end{equation}
where $m$ is the fermionic degree of $\Lambda$, and where
$\stackrel{?}{=}$ 
means that the equality is only conjectural.
\end{proposition}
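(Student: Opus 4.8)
The plan is to transfer everything to the constant-term scalar product $\L\cdot,\cdot\R_{N,q,t}$, under which the non-symmetric Macdonald polynomials are orthogonal with the explicit norm \eqref{qtnorE}. The first step is to compute $P_\La^\bullet$ in closed form. Extracting the coefficient of $\theta_1\cdots\theta_m$ in \eqref{PvsE}, only the coset of $\sigma=\mathrm{id}$ contributes, since $\mathcal K_\sigma(\theta_1\cdots\theta_m)$ is proportional to $\theta_1\cdots\theta_m$ precisely when $\sigma\in S_m\times S_{m^c}$; hence $[\theta_1\cdots\theta_m]P_\La=c_\La A_m U^+_{m^c}E_{\La^R}$. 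Applying the prefactor $t^{-\binom m2}\Delta^t_m(x)/\Delta_m(x)$ and invoking \eqref{UvsA} in the form $\frac{\Delta^t_m}{\Delta_m}A_m=t^{\binom m2}U^-_m$, together with the fact that $U^+_{m^c}$ commutes with $\Delta^t_m/\Delta_m$ and $A_m$, I obtain the clean identity
\begin{equation}
P_\La^\bullet=c_\La\,U^-_m U^+_{m^c}E_{\La^R}.
\end{equation}
In other words $P_\La^\bullet$ is a Macdonald polynomial with prescribed symmetry: $t$-antisymmetric in $x_1,\dots,x_m$ and $t$-symmetric in $x_{m+1},\dots,x_N$.

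For the orthogonality, the crucial remark is that $U^-_m U^+_{m^c}$ is built from $T_i$ with $i<m$ and $i>m$, so by \eqref{property2} it sends $E_{\La^R}$ into the linear span of $\{E_\eta:\eta\in(S_m\times S_{m^c})\cdot\La^R\}$. Since $\La^R$ stores the (distinct) parts of $\La^a$ in its first $m$ positions and the parts of $\La^s$ in its last $N-m$ positions, the orbit $(S_m\times S_{m^c})\cdot\La^R$ is determined by, and determines, the pair of multisets $(\La^a,\La^s)$, that is, the superpartition $\La$ itself. Consequently, for $\La\neq\Om$ of the same fermionic degree the $E$-supports of $P_\La^\bullet$ and $P_\Om^\bullet$ are disjoint, so the orthogonality of the $E_\eta$'s under $\L\cdot,\cdot\R_{N,q,t}$ yields $\L P_\La^\bullet,P_\Om^\bullet\R_{N,q,t}=0$; for different fermionic degrees the scalar product \eqref{anewsp} vanishes term by term. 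This establishes the first assertion.

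For the norm, since $P_\La$ is homogeneous of fermionic degree $m$ only the $m$-th summand of \eqref{anewsp} survives, so $\L P_\La,P_\La\R^{\mathscr R}_{N,q,t}=\frac{1}{[m]_t!}\L P_\La^\bullet,P_\La^\bullet\R_{N,q,t}$. I would evaluate the right-hand side using the adjointness of the $T_i$ with respect to $\L\cdot,\cdot\R_{N,q,t}$ together with the up-to-constant idempotency of the symmetrizers, namely $(U^+_{m^c})^2\propto U^+_{m^c}$ and $(U^-_m)^2\propto U^-_m$, which follow from $T_iU^+_{m^c}=tU^+_{m^c}$ and $T_iU^-_m=-U^-_m$ on the respective blocks. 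This collapses $\L U^-_m U^+_{m^c}E_{\La^R},U^-_m U^+_{m^c}E_{\La^R}\R_{N,q,t}$ to a scalar multiple of $\L E_{\La^R},U^-_m U^+_{m^c}E_{\La^R}\R_{N,q,t}$, which by the orthogonality of the $E_\eta$ equals that scalar times the diagonal coefficient of $E_{\La^R}$ in the symmetrization (read off from the first and third lines of \eqref{property2}) times $\L E_{\La^R},E_{\La^R}\R_{N,q,t}$. Feeding in \eqref{qtnorE}, the factors $c_\La$, $[m]_t!$ and $t^{\inv(\La^s)}$ combine with the normalization to give an explicit product over the cells of the composition $\La^R$, expressed through the arm- and leg-(co)lengths of \eqref{debhqt}. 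This reduction is exactly of the prescribed-symmetry type treated in \cite{Mar,Ba}.

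The main obstacle is the final combinatorial step. What remains is to take $N\to\infty$ and to recognise the resulting product, written over cells of $\La^R$ with composition statistics and $N$-dependent factors $t^{N-\l'(s)}$, as the superpartition expression $(-1)^{\binom m2}h^\uparrow_\La/h^\downarrow_\La$ of \eqref{defhupdown}. Here I expect the $t^{N}$ powers to cancel pairwise between the $e_\gamma$ and $e'_\gamma$ contributions of \eqref{debhqt} so that the limit exists, and the $b_\gamma,h_\gamma$-type factors to decouple, after which the product must be reorganised over the set $\B(\La)$ of cells lying neither in a circled row and a circled column at once. This is precisely the nontrivial $q,t$-identity relegated to the Appendix. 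Finally, the relation $\stackrel{?}{=}\,q^{-|\La^a|}\LL P_\La|P_\La\RR_{q,t}$ is merely Conjecture~\ref{con2} restated, and is therefore left conjectural.
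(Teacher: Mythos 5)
Your proposal is correct and follows essentially the same route as the paper: both rest on the identification $P_\La^\bullet = c_\La\,U^-_m U^+_{m^c} E_{\La^R}$, the adjointness of the $T_i$ with respect to $\L\cdot,\cdot\R_{N,q,t}$ together with $T_i^{\pm}U^+=t^{\pm}U^+$ and $T_i^{\pm}U^-=-U^-$, the orthogonality and norm formula \eqref{qtnorE} for the $E_\eta$'s, the explicit diagonal coefficient of the prescribed-symmetry polynomial from \cite{Mar,BDF} (the paper's eq.\ \eqref{lesSp2}), and the combinatorial identity of the Appendix. The only cosmetic difference is that you dispatch the orthogonality first via the disjointness of the $(S_m\times S_{m^c})$-orbit supports in the $E$-basis, whereas the paper obtains orthogonality and the norm simultaneously from the expansion \eqref{lesSp2}.
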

\begin{proof}  By definition 
$\L   P_\La ,   P_\Om \R_{N,q,t}^{\mathscr R}= 0$ if $\Lambda$
and $\Omega$ have different fermionic degrees.  We can thus suppose that
$\La$ and $\Omega$ have fermionic degree $m$.
The proof now follows the argument of \cite{Ba} up to eq. \eqref{spCT}.
 Using expression \eqref{lesSp2}, we get {\small
\begin{multline}
\L   P_\La ,   P_\Om \R_{N,q,t}^{\mathscr R}
\\ =\frac{c_\La(t)\, c_\Om(t^{-1})}{[m]_t !} \sum_{\sigma \in S_{m}} \sum_{\sigma' \in S_{m^c}} 
\L U_m^-U_{m^c}^+E_{\La^R},  (-t)^{-\ell(\sigma)} T_\sigma T_{\sigma'}E_{\Om^R}   \R_{N,q,t} \\
=\frac{c_\La(t)\, c_\Om(t^{-1})}{[m]_t !} \sum_{\sigma \in S_{m}} \sum_{\sigma' \in S_{m^c}} 
\L (-t)^{\ell(\sigma)} T_{\sigma'^{-1}}^{-1} T_{\sigma^{-1}}^{-1} U_{m}^-U_{m^c}^+E_{\La^R},   E_{\Om^R}   \R_{N,q,t} \, , 
\end{multline}}\hspace{-.1cm}
where we used the fact that $T_{\sigma^{-1}}^{-1}$ is the adjoint of $T_\sigma$ with respect to $\L \cdot, \cdot \R_{N,q,t}$.
Note that $T_\sigma$ and $T_{\sigma'}$ commute because they act on  disjoint sets of  variables.
 We then use
\begin{eqnarray}
T_i^\pm U^+=t^{\pm}U^+\quad \text{and}\quad T_i^\pm U^-=-U^-
\end{eqnarray} in order to write
\begin{equation}\label{ps444}
\L   P_\La ,   P_\Om \R_{N,q,t}^{\mathscr R}
=[N-m]_{t^{-1}}!  \: c_\Om(t^{-1})   
\L S_{(\La^a,\La^s)}  ,  E_{\Om^R}   \R_{N,q,t} ,
\end{equation}
where we used $\sum_{\sigma\in S_r } t^{\ell(\sigma)}=[r]_t !$,  and where  
\begin{equation}\label{lesSp1} 
S_{(\Lambda^a,\La^s)}=c_{\La}U_m^- U_{m^c}^+
      E_{( ({\Lambda^a})^R,(\Lambda^s)^R)},
\end{equation}
is a Macdonald polynomial with mixed symmetry
(also considered in \cite{Ba} but only in the case where 
$\la=\delta^{(m)}=(m-1,\dots,0))$. 
The analogue of \cite[Corr. 1]{BDF}, which is obtained as shown there using the generalization of
lemma 2.5 of \cite{Mar}, reads
\begin{equation}\label{lesSp2}
S_{(\lambda,\mu)}
=
\sum_{\s,\s'} (-t)^{-\ell(\s')} \frac{d'_{ ( \la^R,\mu) }  d_{(  \s'(\la^R), \s(\mu^R) )}  }{d'_{( \la^R,\s(\mu^R))} d_{(\la,\s(\mu^R))}}    E_{(\s'(\lambda^R),\s(\mu^R))}.
\end{equation}
By the orthogonality of the non-symmetric Macdonald polynomials,
the rhs 
 of \eqref{ps444} is non-vanishing only when
$\La=\Om$ , and  
$\sigma$ and $\sigma'$ are equal to the identity in 
\eqref{lesSp2}.
Hence
\begin{multline}
\L   P_\La ,   P_\Om \R_{N,q,t}^{\mathscr R}
= \delta_{\La \Om}\; (-1)^{\binom{m}{2} }\frac{ [N-m]_{t^{-1}}!  \:  t^{\text{inv}(\La^s)} }{ \; f_{\La^s}(t^{-1}) } \\ \times
 \frac{d'_{ ( (\La^a)^R , \La^s)} d_{  ((\La^a)^R , (\La^s)^R)}     }{ d'_{ ( (\La^a)^R , (\La^s)^R)}  d_{(  \La^a,(\La^s)^R  )}    }
  \L E_{\La^R}  ,  E_{\La^R}   \R_{N,q,t} \, ,
\end{multline}
which implies that $\L   P_\La ,   P_\Om \R_{N,q,t}^{\mathscr R}=0$
if $\La \neq \Om$.

Furthermore, according to the identities given in 
\cite[eqs (5.16) and (5.15)]{Mar}, 
\begin{equation}
\frac{[N-m]_{t^{-1}}!}{f_{\La^s}(t^{-1})} \; t^{\text{inv}(\La^s)}=\frac{[N-m]_{t}!}{f_{\La^s}(t)}=\frac{b_{\La^s} d_{(\La^s)^R} }{h_{\La^s} e_{(\La^s)^R}},
\end{equation}
so that, using the expression \eqref{qtnorE} for the norm of $ E_{\La^R}$, we obtain
\begin{equation}\label{spCT}
\frac{ \L   P_\La ,   P_\La \R_{N,q,t}^{\mathscr R} }{\L 1, 1\R_{N,q,t}} =
(-1)^{\binom{m}{2} }   \frac{b_{\La^s} d_{(\La^s)^R} }{h_{\La^s} e_{(\La^s)^R}} \frac{d'_{ ( (\La^a)^R , \La^s)}    }{ d_{(  \La^a,(\La^s)^R  )}    } \frac{e_{((\La^a)^R, (\La^s)^R  )} }{    e'_{ ((\La^a)^R, (\La^s)^R  ) }  }.
\end{equation}

In order to relate this expression to the conjectural expression for
$\LL {P_\La}|{P_\La}\RR_{q,t}$, we need to consider the 
 limit $N\rw \y$ of \eqref{spCT} and recall that $|t|<1$, so that all the factors $e$, $e'$ and $b$ reduce to 1.  The proposition will follow 
after establishing that
\begin{equation}\label{2norms}
\lim_{N \rightarrow \infty} \frac{ \L P_\La , P_\La \R_{N,q,t}^{\mathscr R} }{\L 1, 1\R_{N,q,t}} = 
 (-1)^{\binom{m}{2} } \lim_{N \rightarrow \infty} \frac{   d'_{ ( (\La^a)^R , \La^s)}   d_{(\La^s)^R}}{  d_{(  \La^a,(\La^s)^R  )} h_{\La^s} }
= (-1)^{\binom{m}{2} } \frac{h^\uparrow_\La}{h^\downarrow_\La}.
\end{equation}
The last equality is quite long to prove. The details are thus relegated to
the appendix. 
\end{proof}

\section{Macdonald superpolynomials and generalized $(q,t)$-Kostka coefficients}\label{Kos}

In this section, we review the super-extension of the Macdonald positivity conjecture, exhibit symmetries of the generalized 
$(q,t)$-Kostka coefficients and  present two new conjectures 
related to these coefficients.

\subsection{Generalized $(q,t)$-Kostka coefficients}\label{gkos}
The 
generalized $(q,t)$-Kostka coefficients are defined from the integral form of the Macdonald 
superpolynomials:
\begin{equation} \label{intform}
J_\La = {h^\downarrow_\La}\, P_\La
\end{equation} 
(where ${h^\downarrow_\La}$ is defined in \eqref{defhupdown}).  It was conjectured
in \cite{BDLM} that the coefficients in the monomial expansion of 
$J_\La$ are polynomials in $q$ and $t$ with integer coefficients.

We next introduce the Schur superpolynomials \cite{BDLM}
\begin{equation}\label{schur}
s_\Lambda(x,\theta)=P_\Lambda(x,\ta;0,0),
\end{equation}
and
their deformation
\begin{equation}\label{defM}
 S_\Lambda (x,\ta;t) = \, \varphi \bigl(s_\Lambda (x,\ta)\bigr),
 \end{equation}
where  $\varphi$  stands for the endomorphism of
$\mathbb Q(q,t)[p_1,p_2,p_3,\dots;\tilde p_0,\tilde p_1,\tilde p_2,\dots]$ 
defined by its action on the power-sums as
\begin{equation}
\varphi(p_n)=(1-t^n)p_n\qquad\text{and}\qquad 
\varphi(\tilde p_n)=\tilde p_n.
 \end{equation}

\begin{remark} As was commented in  \cite{BDLM}, 
the existence of the limiting case
$s_\Lambda(x,\theta)=P_\Lambda(x,\ta;0,0)$ does not follow{ from the existence 
of a solution from Theorem \ref{theo1} }
since the scalar product is degenerate when $q=t=0$.  A better approach
is to consider the limit $q=t\to 0$ in Definition~\ref{PvsE}.   This is presented in Appendix \ref{Key}.  Specifically, we prove the existence of the Schur superpolynomials by using the fact that in the limit $q=t\to 0$,  
a non-symmetric Macdonald polynomial tends to a Key polynomial (or Demazure character) \cite{Ion,LasS}.
\end{remark}

 {We now recall the following striking version of the Macdonald positivity conjecture,  formulated in \cite{BDLM}.}

\begin{conjecture}\label{ConjMac}
 The coefficients $ K_{\Omega \Lambda} (q,t)$ in the expansion of the
integral form of the Macdonald superpolynomials
\begin{equation}
 J_\Lambda (x,\ta;q,t)  = \sum_\Omega K_{\Omega \Lambda} (q,t) \,S_\Omega(x,\ta;t)
 \end{equation}
are polynomials in $q$ and $t$ with nonnegative  integer coefficients.\end{conjecture}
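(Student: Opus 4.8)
The plan is to separate the statement into an integrality claim---that each $K_{\Omega\Lambda}(q,t)$ lies in $\mathbb{Z}[q,t]$---and the genuinely deep nonnegativity claim.

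For integrality, I would start from the integral form \eqref{intform}, $J_\Lambda = h^\downarrow_\Lambda P_\Lambda$, together with the definition \eqref{defM} of the deformed Schur superpolynomials $S_\Omega = \varphi(s_\Omega)$. By Proposition~\ref{propuni} the $P_\Lambda$ are unitriangular in the monomial basis with coefficients in $\mathbb{Q}(q,t)$, so the transition matrix to the $S_\Omega$ basis is again triangular with entries in $\mathbb{Q}(q,t)$; the content of integrality is that the prefactor $h^\downarrow_\Lambda$ clears all denominators. Following Macdonald's classical argument, the natural route is to use the commuting eigenoperators $D^*(u;q,t)$ and $D^\circledast(u;q,t)$ of Proposition~\ref{LemmaSeki}, together with the self-adjointness of Theorem~\ref{propoadj} and the duality of Theorem~\ref{theoduality}, to bound the denominators that can occur, exactly as in the ordinary case. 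A more constructive alternative is to push a Haglund--Haiman--Loehr-type combinatorial formula for the non-symmetric polynomials $E_{\Lambda^R}$ through the (anti)symmetrization \eqref{PvsE}, producing an explicit expansion of $J_\Lambda$ whose monomial coefficients are manifestly in $\mathbb{Z}[q,t]$, and then triangularizing into the $S_\Omega$ basis.

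For nonnegativity---the main obstacle---no elementary argument is to be expected: this is the superspace analogue of the Macdonald positivity conjecture, whose ordinary case was settled by Haiman only through the geometry of the Hilbert scheme of points and the $n!$-theorem. I would therefore seek a representation-theoretic model, namely a bigraded module carrying an $S_n$-action and an exterior-algebra grading that records the fermionic degree $m$, whose Frobenius characteristic---suitably interpreted in superspace---equals $J_\Lambda$ expanded in the $S_\Omega$ basis. In such a model the $K_{\Omega\Lambda}(q,t)$ become graded multiplicities and are nonnegative by construction. A parallel combinatorial attack would be to establish a fillings-with-statistic formula for $J_\Lambda$ and then prove Schur-positivity by exhibiting a crystal structure or a charge-preserving bijection on the fillings.

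The hard part is, exactly as in the classical theory, the passage from monomial- (or non-symmetric-) positivity to positivity in the Schur basis: even with an explicit combinatorial formula in hand, Schur positivity of the $K_{\Omega\Lambda}(q,t)$ does not follow formally. The anticommuting variables compound this difficulty, since the relevant module must carry the extra exterior grading and the combinatorics of superpartitions is genuinely new. A realistic intermediate target is to reduce the $m=1$ sector to Haiman's theorem through the relationship conjectured in Section~\ref{2new}, and only then to attack arbitrary fermionic degree.
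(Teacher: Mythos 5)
The statement you are asked to prove is labelled a \emph{conjecture} in the paper, and the paper offers no proof of it: Section \ref{gkos} merely recalls it from the earlier work, records two observed symmetries of the coefficients $K_{\Omega\Lambda}(q,t)$, and tabulates examples. Even the integrality of the coefficients is only conjectural there (the paper states that it ``was conjectured'' that the monomial coefficients of $J_\Lambda$ lie in $\mathbb{Z}[q,t]$). So there is no proof in the paper against which to measure yours.

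More importantly, your proposal is not a proof either; it is a research plan. Every step is conditional or deferred: the integrality argument is described as ``the natural route'' without carrying out the denominator-clearing (and note that $h^\downarrow_\Lambda$ clearing denominators is exactly the nontrivial content, not a formality --- in the classical case this already requires the full machinery of Macdonald's duality and specialization results, or Haiman's theorem); the nonnegativity argument is explicitly acknowledged to require a currently nonexistent bigraded module or crystal structure; and the proposed reduction of the $m=1$ sector rests on Conjecture \ref{kos1}, which is itself unproven and in any case would let you deduce the \emph{classical} Kostka positivity from the super one, not the other way around. A candid assessment of the difficulty is not a substitute for an argument. If you want to contribute something provable here, the realistic targets are the two symmetry relations \eqref{sym1}--\eqref{sym2} (which the paper also leaves unproven), or the integrality claim via the explicit construction \eqref{PvsE} in terms of non-symmetric Macdonald polynomials; the positivity itself should be presented as the open problem it is.
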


The following symmetries have been observed: 
\begin{equation}\label{sym1}
K_{\Om \La}(q,t)=K_{\Om' \La'}(t,q),
 \end{equation}
(cf. \cite[eq. VI (8.15)]{Mac}) and
\begin{equation}\label{sym2}
K_{\Om\La}(q,t)=q^{\bar{n}(\La')} t^{\bar{n}(\La)}\, K_{\Om' \La}(q^{-1},t^{-1}) .
\end{equation}
 In the previous equation we used
\begin{equation}
{\bar{n}(\Lambda)}=n(\S{\La})-d(\La)\qquad\text{with}\qquad n(\la)=\sum_i(i-1)\la_i \, ,
 \end{equation}
where $\S \La$ is the skew diagram 
$\S \La=\Lambda^{\circledast}/\delta^{(m+1)}$, with $\delta^{(m)}$ the staircase
partition $(m-1,m-2,\dots,1,0)$, and where
$d(\La)$ is defined as follows: fill each square
 $s\in\B\La$ (defined immediately after \eqref{defhupdown}) by the number corresponding to the number of squares above  $s$ which are both in a fermionic row and a fermionic column (i.e., both ending with a circle); $d(\La)$ is then the sum of these entries.
For example, {\small
\begin{multline*} 
d((2,1,0;1,1,1))=6:\quad {\tableau[scY]{&&\bl\tcercle{}\\&\bl\tcercle{}\\2 \\2 \\2 \\ \bl\tcercle{} \\ }} \\ \quad
d((2,1,0;2,1))=4:\quad {\tableau[scY]{&&\bl\tcercle{}\\1&1  \\& \bl\tcercle{}\\2 \\ \bl\tcercle{} }} \quad
d((2,1,0;3))=0:\quad {\tableau[scY]{0&0&0\\&&\bl\tcercle{} \\& \bl\tcercle{}\\ \bl\tcercle{} }} \, .\end{multline*}}
(In a sense,  $d(\La)$ is dual to the quantity $\zeta_\La$ introduced in \cite{BDLM} to describe the specialization of $P_\La$).  The quantity $d({\La})$
vanishes for $m\leq1$, so that that 
expression \cite[eq. VI (8.14)]{Mac} is recovered when $m=0$.  Examples of $(q,t)$-Kostka coefficients are given in Tables \ref{tab11} to \ref{tab42}.  We stress that in these tables,   the prime stands for the matrix transpose, so that the matrix element found in row $\Lambda$ and column $\Omega$  gives the coefficient $K'_{\Lambda,\Omega}=K_{\Omega,\Lambda}$ of $S_\Omega$ in the modified Macdonald superpolynomial $J_\La$.  Note also that in the tables presented in \cite{BDLM}, 
the transpose symbol (prime) is missing; 
these correspond to Tables \ref{tab11} -- \ref{tab32} below.

An example illustrating the first symmetry property \eqref{sym1} is (cf. Table \ref{tab42})
\begin{equation}K_{(2,0;1,1)\,(2,0;2)}(q,t)=q+qt+q^2t=
K_{(3,0;1)\,(2,0;2)}(t,q),
\end{equation}
 since $(2,0;1,1)'=(3,0;1)$ and $(2,0;2)$ is self-conjugate. An example of the relation \eqref{sym2} is
\begin{equation} K_{(2,0;1,1)\,(1,0;2,1)}(q,t)=t+qt^2+qt^3=qt^5 K_{(3,0;1)\,(1,0;2,1)}(q^{-1},t^{-1})\, . 
\end{equation}
The factor $t^5$ can be checked as follows (the diagram of $\La$ is filled with numbers that add up to $d(\La)$ while the skew diagram $\S\La$, obtained from $\La^\cd$ by dropping the squares marked by a $\times$,   is filled by numbers that add up to $n(\mathcal S\La)$):
\begin{multline*}\La=(1,0;2,1):\quad {\tableau[scY]{&\\&\bl\tcercle{}\\1 \\ \bl\tcercle{} \\ }} 
\qquad\S\La:\quad {\tableau[scY]{\times &\times\\ \times&1\\ 2 \\ 3 \\ }} \\ \quad \text{$\therefore$ the power of $t$ is $6-1=5$}.\end{multline*}
Similarly, for $q$ factor, we have
\begin{multline*}\La'=(3,1;0):\quad {\tableau[scY]{0&&0&\bl\tcercle{}\\0& \bl\tcercle{} \\ }} 
\qquad\S\La':\quad {\tableau[scY]{\times&\times&0&0\\ \times&1\\ }} \\ \quad \text{$\therefore$ the power of $q$ is $1-0=1$}.\end{multline*}

\begin{table}[ht]
\caption{  $K(q,t)'$ for degree $(1|1)$.  
} 
\label{tab11}
\begin{center}
\begin{tabular}{c| c |c | } 
 & $(1;\,) $ & $(0;1)$ \\ \hline
$(1;)$ & $1$ & $q$ \\ \hline
$(0;1\,)$ & $t$ & $1$ \\ \hline
\end{tabular}
\end{center}
\end{table}
\begin{table}[ht]
\caption{ $K(q,t)'$ for  degree $(2|1)$.} 
\label{tab21}
\begin{center}
\begin{tabular}{c| c |c | c | c | } 
 & $(2;\,) $ & $(0;2 )$ & $(1;1 )$ & $(0;1,1 )$ \\ \hline
$(2;\,)$ & $1$ & $q^2$ & $q $ & $q^3 $ \\ \hline
$(0;2)$ & $t$ & $ 1$ & $qt$ & $q$\\ \hline
$(1;1)$ & $t$ & $ qt$ & $1$ & $q$\\ \hline
$(0;1,1)$ & $t^3$ & $ t$ & $t^2$ & $1$\\ \hline
\end{tabular}
\end{center}
\end{table}
\begin{table}[ht]
\caption{$K(q,t)'$ for   degree $(2|2)$.} 
\label{tab22}
\begin{center}
\begin{tabular}{c| c |c | } 
 & $(2,0;\,) $ & $(1,0;1)$ \\ \hline
$(2,0;\,)$ & $1$ & $q$ \\ \hline
$(1,0;1)$ & $t$ & $1$ \\ \hline
\end{tabular}
\end{center}
\end{table}
\begin{table}[ht]
\caption{ $K(q,t)'$ for  degree $(3|1)$.} 
\label{tab31}
\begin{center}
\begin{tabular}{c| c |c | c | c | c| c| c|} 
                  & $(3;\,) $ & $(0;3 )$ & $(2;1 )$ & $(1;2 )$ & $(0;2,1 )$ & $(1;1,1 )$ & $(0;1,1,1 )$ \\ \hline
$(3;\,)$ & $1$ & $q^3$ & $q+q^2 $ & $q^2+q^4 $  & $q^4+q^5$ & $q^3$ & $q^6$  \\ \hline
$(0;3)$ & $t$ & $ 1$ & $qt + q^2t$ & $q+q^2t$  & $q+q^2$  & $q^3t$ & $q^3$ \\ \hline
$(2;1)$ & $t$ & $ q^2t$ & $1+qt$ & $q+q^2t$  & $q^2+q^3t$  & $q$ & $q^3$\\ \hline
$(1;2)$ & $t^2$ & $ qt$ & $t+ qt^2$ & $1+q^2 t^2$ & $q+q^2t$  & $qt$  & $q^2$\\ \hline
$(0;2,1)$ & $t^3$ & $ t$ & $t^2+qt^3$ & $t+qt^2$ & $1+qt$  & $qt^2$  & $q$\\ \hline
$(1;1,1)$ & $t^3$ & $ qt^3$ & $t+t^2$ & $t+qt^2$ & $qt+qt^2$  & $1$  & $q$\\ \hline
$(0;1,1,1)$ & $t^6$ & $ t^3$ & $t^4+t^5$ & $t^2+t^4$ & $t+t^2$  & $t^3$  & $1$\\ \hline
\end{tabular}
\end{center}
\end{table}
\begin{table}[ht]
\caption{ $K (q,t)'$ for  degree $(3|2)$.} 
\label{tab32}
\begin{center}
\begin{tabular}{c| c |c | c | c |c| } 
 & $(3,0;\,) $ & $(2,1;\, )$ & $(2,0;1 )$ & $(1,0;2 )$ & $(1,0;1,1 )$ \\ \hline
$(3,0;\,)$ & $1$ & $q$ & $q+q^2 $ & $q^2 $  & $q^3$ \\ \hline
$(2,1;\,)$ & $qt$ & $ 1$ & $q+q^2t$ & $q^3t$ & $q^2$\\ \hline
$(2,0;1)$ & $t$ & $ t$ & $1+qt$ & $q$ & $q$\\ \hline
$(1,0;2)$ & $t^2$ & $ qt^3$ & $t+qt^2$ & $1$ & $qt$\\ \hline
$(1,0;1,1 )$ & $t^3$ & $ t^2$ & $t+t^2$ & $t$ & $1$\\ \hline
\end{tabular}
\end{center}
\end{table}

\footnotesize{

\begin{table}[ht]
\caption{ $K (q,t)'$ for  degree $(4|1)$.}
\label{tab41}
\begin{center}
\begin{tabular}{l  | c | c | c | c |c | c |} 
 &$(4;)$  & $(0;4)$ & $(3;1)$ & $(1;3)$  &  $(0;3, 1)$ & $(2;2)$   \\ \hline
$(4;)$  &  $1$ & $q^4 $& $q+q^2+q^3 $& $q^3+q^5+q^6 $& $q^5+q^6+q^7$ & $q^2+q^4$  \\ \hline
$(0;4)$ & $t$ & $1$ & $qt+q^2t+q^3t$ & $q+q^2+q^3t$ & $q+q^2+q^3$ & $q^2t+q^4t$ \\ \hline
$(3;1)$ & $t$ & $q^3t$ & $1+qt+q^2t$ & $q^2+q^3t+q^4t$ & $q^3+q^4t+q^5t$ & $q+q^2t$ \\ \hline
$(1;3)$  & $t^2$ & $qt$ & $t+qt^2+q^2t^2$ & $1+q^2t+q^3t^2$ & $q+q^2t+q^3t$ & $qt+q^2t^2$  \\ \hline
$(0;3, 1)$ & $t^3$ & $t$ & $t^2+qt^3+q^2t^3$ & $t+qt+q^2t^2$ & $1+qt+q^2t$ & $qt^2+q^2t^3$ \\ \hline
$(2;2)$  & $t^2$ & $q^2t^2$ & $t+qt+qt^2$ & $qt+q^2t+q^3t^2$ & $q^2t+q^3t+q^3t^2$ & $1+q^2t^2$ \\ \hline
$(0;2, 2)$ & $t^4$ & $t^2$ & $t^3+qt^3+qt^4$ & $t+qt^2+qt^3$ & $t+qt+qt^2$ & $t^2+q^2t^4$ \\ \hline
$(2;1, 1)$  &$ t^3$ & $q^2t^3$ & $t+t^2+qt^3$ & $qt+q^2t^2+q^2t^3$ & $q^2t+q^2t^2+q^3t^3$ & $t+qt^2$\\ \hline
$(1;2, 1)$  & $t^4$ & $qt^3$ & $t^2+t^3+qt^4$ & $t+qt^2+q^2t^4$ & $qt+qt^2+q^2t^3$ & $t^2+qt^3$ \\ \hline
$(0;2, 1, 1)$ & $t^6$ & $t^3$ & $t^4+t^5+qt^6$ & $t^2+t^3+qt^4$ & $t+t^2+qt^3$ & $t^4+qt^5$ \\ \hline
$(1;1^3)$  & $ t^6$ & $qt^6$ & $t^3+t^4+t^5$ & $t^3+qt^4+qt^5$ & $qt^3+qt^4+qt^5$ & $t^2+t^4$ \\ \hline
$(0;1^4)$ & $t^{10}$ & $t^6$ & $t^7+t^8+t^9$ & $t^4+t^5+t^7$ & $t^3+t^4+t^5$ & $t^6+t^8$ \\ \hline
\end{tabular}

\vspace{1cm}

\begin{tabular}{l  | c | c | c | c |c | c | } 
 & $(0;2, 2)$ &  $(2;1, 1)$  & $(1;2, 1)$  & $(0;2, 1, 1)$  & $(1;1^3)$  &  $(0;1^4)$ \\ \hline
$(4;)$  &  $q^6+q^8$ & $q^3+q^4+q^5$& $q^4+q^5+q^7$ & $q^7+q^8+q^9$ &$ q^6$& $q^{10}$ \\ \hline
$(0;4)$ & $q^2+q^4$ & $q^3t+q^4t+q^5t$ & $q^3+q^4t+q^5t$ & $q^3+q^4+q^5$ & $ q^6t$ &$ q^6$ \\ \hline
$(3;1)$ & $q^4+q^5t$ & $q+q^2+q^3t$ & $q^2+q^3+q^4t$ & $q^4+q^5+q^6t$ & $q^3 $&$ q^6$\\ \hline
$(1;3)$  & $q^2+q^3t$ & $qt+q^2t+q^3t^2$ & $q+q^2t+q^4t^2$ & $q^2+q^3+q^4t$ & $q^3t$ & $q^4$\\ \hline
$(0;3, 1)$ & $q+q^2t$ & $qt^2+q^2t^2+q^3t^3$ & $qt+q^2t^2+q^3t^2$ & $q+q^2+q^3t$ & $q^3t^2$ & $q^3$ \\ \hline
$(2;2)$  & $q^2+q^4t^2$ & $q+qt+q^2t$ & $q+q^2t+q^3t$ & $q^3+q^3t+q^4t$ & $q^2$ & $q^4$ \\ \hline
$(0;2, 2)$ & $1+q^2t^2$ & $qt^2+qt^3+q^2t^3$ & $qt+qt^2+q^2t^3$ & $q+qt+q^2t$ & $q^2t^2$ & $q^2$\\ \hline
$(2;1, 1)$  & $q^2t+q^3t^2$ & $1+qt+qt^2$ & $q+qt+q^2t^2$ & $q^2+q^3t+q^3t^2$ & $q$ & $q^3$ \\ \hline
$(1;2, 1)$  & $qt+q^2t^2$ & $t+qt^2+qt^3$ & $1+qt^2+q^2t^3$ & $q+q^2t+q^2t^2$  & $qt$ & $q^2$ \\ \hline
$(0;2, 1, 1)$ & $t+qt^2$ & $t^3+qt^4+qt^5$ & $t^2+qt^3+qt^4$ & $1+qt+qt^2$ & $qt^3$ & $q$ \\ \hline
$(1;1^3)$  & $qt^2+qt^4$ & $t+t^2+t^3$ & $t+t^2+qt^3$ & $qt+qt^2+qt^3$ & $1$ & $q$ \\ \hline
$(0;1^4)$ & $t^2+t^4$ & $t^5+t^6+t^7$ & $t^3+t^5+t^6$ & $t+t^2+t^3$ & $t^4$ & $1$\\ \hline
\end{tabular}
\vspace{0.5cm}

\end{center}
\end{table}
}

\begin{landscape}
\vspace{1cm}
{{\footnotesize
\begin{table}[ht]
\caption{ $K (q,t)'$ for  degree $(4|2)$.} 
\label{tab42}
\begin{tabular}{l | c | c | c | c | c | c | c | c | c |} 
  & $(4, 0;)$ & $(3, 1;)$ & $(3, 0; 1)$ & $(1, 0;3)$ & $(2, 0;2)$ & $(2, 1;1)$ & $(2, 0;1, 1)$ & $(1, 0;2, 1)$ & $(1, 0;1^3)$  \\ \hline
$(4, 0;)$ & $1$ &$ q+q^2$ & $q+q^2+q^3$ & $q^3$ & $q^2+q^4$ & $q^3$ & $q^3+q^4+q^5$ & $q^4+q^5$ & $q^6$ \\ \hline
$(3, 1;)$ & $qt$ & $1+q^2t$ & $q+q^2t+q^3t$ & $q^4t$ & $q^2+q^3t$ & $q$ & $q^2+q^3+q^4t$ & $q^3+q^5t$ & $q^4$  \\ \hline
$(3, 0; 1)$ &  $t$ & $t+qt$ & $1+qt+q^2t$ & $q^2$ & $q+q^2t$ & $qt$ & $q+q^2+q^3t$ & $q^2+q^3$ & $q^3$ \\ \hline
$(1, 0;3)$ & $t^2$ & $qt^2+q^2t^3$ & $t+qt^2+q^2t^2$ & $1$ & $qt+q^2t^2$ & $q^3t^3$ & $qt+q^2t+q^3t^2$ & $q+q^2t$ & $q^3t$  \\ \hline
$(2, 0;2)$ &   $t^2$ & $t+qt^2$ & $t+qt+qt^2$ & $qt$ & $1+q^2t^2$ & $qt$ & $q+qt+q^2t$ & $q+q^2t$ & $q^2$ \\ \hline
$(2, 1;1)$ &   $qt^3$ & $t+qt^2$ & $qt+qt^2+q^2t^3$ & $q^3t^3$ & $qt+q^2t^2$ & $1$ & $q+q^2t+q^2t^2$ & $q^2t+q^3t^2$ & $q^2$\\ \hline
$(2, 0;1, 1)$ &   $t^3$ & $t^2+t^3$ & $t+t^2+qt^3$ & $qt$ & $t+qt^2$ & $t^2$ & $1+qt+qt^2$ & $q+qt$ & $q$\\ \hline
$(1, 0;2, 1)$ & $t^4$ & $t^3+qt^5$ & $t^2+t^3+qt^4$ & $t$ & $t^2+qt^3$ & $qt^4$ & $t+qt^2+qt^3$ & $1+qt^2$ & $qt$  \\ \hline
$(1, 0;1^3)$ &  $t^6$ & $t^4+t^5$ & $t^3+t^4+t^5$ & $t^3$ & $t^2+t^4$ & $t^3$ & $t+t^2+t^3$ & $t+t^2$ & $1$\\ \hline
\end{tabular}
\end{table}
}}
\end{landscape}

\subsection{Two new  conjectures for the Kostka coefficients}\label{2new}

We conclude this section with the formulation of two remarkable conjectures that relate the generalized coefficients $K_{\Omega \Lambda}(q,t)$ of fermionic degree $m=1$ 
to the usual $(q,t)$-Kostka coefficients.

\begin{conjecture}\label{kos1}
Let $\Lambda$ be a superpartition of fermionic degree $m=1$, and let
$J_{\Lambda}$ be the integral form of the 
Macdonald superpolynomial.  Let also $\psi$
be the linear application that maps $S_{\Omega}$ to $S_{\Omega^{\circledast}}$.
Then
\begin{equation} 
\psi(J_{\Lambda})= J_{\Lambda^{\circledast}}.
\end{equation} 
\end{conjecture}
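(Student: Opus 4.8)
The plan is to recast the statement as an identity among Kostka coefficients and then to characterize $\psi(J_\Lambda)$ by an eigenvalue problem. Expanding $J_\Lambda=\sum_{\Omega}K_{\Omega\Lambda}(q,t)\,S_\Omega$ and applying $\psi$ gives $\psi(J_\Lambda)=\sum_{\Omega}K_{\Omega\Lambda}(q,t)\,S_{\Omega^\circledast}$, so the conjecture is equivalent to
\[
\sum_{\Omega:\,\Omega^\circledast=\mu}K_{\Omega\Lambda}(q,t)=K_{\mu\,\Lambda^\circledast}(q,t)\qquad\text{for every partition }\mu,
\]
where on the left $\Omega$ runs over superpartitions of fermionic degree $1$ and on the right $\mu$ is an ordinary partition (fermionic degree $0$). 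First I would record three structural facts: (F1) since $J_\Lambda$ and each $S_\Omega$ are triangular in the monomial basis with the same leading term, one has $J_\Lambda=S_\Lambda+\sum_{\Omega<\Lambda}K_{\Omega\Lambda}S_\Omega$; (F2) the map $\Omega\mapsto\Omega^\circledast$ is order preserving, because $\Omega\le\Lambda$ forces $\Omega^\circledast\le\Lambda^\circledast$ by the very definition \eqref{eqorder1} of the dominance order; and (F3) by Proposition~\ref{LemmaSeki} the superpolynomial $J_\Lambda$ (fermionic degree $1$) and the ordinary integral form $J_{\Lambda^\circledast}$ (fermionic degree $0$, for which $\circledast$ and $*$ coincide) are \emph{both} eigenfunctions of $D^\circledast(u;q,t)$ with the \emph{same} eigenvalue $\varepsilon_{\Lambda^\circledast}(u;q,t)$.

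Facts (F1) and (F2) already show that $\psi(J_\Lambda)=\sum_{\Omega\le\Lambda}K_{\Omega\Lambda}S_{\Omega^\circledast}$ is triangular with respect to dominance, with top term $S_{\Lambda^\circledast}$ carrying the coefficient $\sum_{\Omega:\,\Omega^\circledast=\Lambda^\circledast}K_{\Omega\Lambda}$. The strategy is then to invoke the uniqueness of the monomial-triangular $D^\circledast$-eigenfunction with eigenvalue $\varepsilon_{\Lambda^\circledast}$ (this is $P_{\Lambda^\circledast}$, hence $J_{\Lambda^\circledast}$ up to the scalar $h^\downarrow_{\Lambda^\circledast}$; note that $D^\circledast$ restricts on the bosonic sector to the ordinary Macdonald operator, whose eigenvalues separate partitions). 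Concretely, it suffices to prove two things. The first is the \emph{intertwining property}, namely $D^\circledast\psi=\psi D^\circledast$ as a map from the degree-$1$ sector to the degree-$0$ sector; combined with (F3) this makes $\psi(J_\Lambda)$ a $D^\circledast$-eigenfunction of eigenvalue $\varepsilon_{\Lambda^\circledast}$, hence proportional to $P_{\Lambda^\circledast}$. The second is the \emph{normalization}, that the top coefficient $\sum_{\Omega:\,\Omega^\circledast=\Lambda^\circledast}K_{\Omega\Lambda}$ equals $1$, which by (F1) amounts to the vanishing $K_{\Omega\Lambda}=0$ whenever $\Omega^\circledast=\Lambda^\circledast$ but $\Omega\neq\Lambda$ (the competing $\Omega$ being obtained from $\Lambda^\circledast$ by circling a different corner).

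The hard part will be the intertwining property, and this is where I expect the real obstacle to lie. The difficulty is that $\psi$ is defined through the deformed Schur basis $\{S_\Omega\}$, which is not an eigenbasis of $D^\circledast$, so there is no immediate handle on $D^\circledast\psi$. What is needed is an alternative, operator-theoretic description of $\psi$ that commutes with $D^\circledast$ by construction. I would search for such a description among the natural maps that lower the fermionic degree by one (an adjoint of multiplication by $\tilde p_0=\sum_i\theta_i$, or a suitably dressed $\theta$-derivation), and then match it against the combinatorial prescription $S_\Omega\mapsto S_{\Omega^\circledast}$ using the action \eqref{actionomegaqt} of $\Omega_{q,t}$ together with the non-symmetric definition \eqref{PvsE} specialized to $m=1$ (where $A_1$ is trivial and $\Lambda^R=(\Lambda^a_1,\ \text{reverse}(\Lambda^s))$).

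A complementary route, useful especially for the normalization step, is to establish the equivalent Kostka identity first in the limit $q=t=0$ — where $J_\Lambda\to s_\Lambda$, $h^\downarrow_\Lambda\to 1$, and $\psi(s_\Lambda)=s_{\Lambda^\circledast}$ holds by the very definition of $\psi$ — and then to propagate it to generic $(q,t)$ through the eigenfunction characterization, by checking that both members obey the same Pieri-type/eigenvalue recursion. In every approach the concrete combinatorial core to be delivered is the vanishing $K_{\Omega\Lambda}=0$ for $\Omega\neq\Lambda$ with $\Omega^\circledast=\Lambda^\circledast$; proving this refined triangularity of the super $(q,t)$-Kostka matrix is, in my view, the crux on which the whole statement rests.
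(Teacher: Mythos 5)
First, a point of comparison: the paper offers no proof of this statement --- it is stated and left as Conjecture~\ref{kos1}, supported only by the equivalent reformulation \eqref{relatekostka} and the worked examples \eqref{ex1}--\eqref{ex2}. Your opening reduction, rewriting $\psi(J_\Lambda)=J_{\Lambda^\circledast}$ as the Kostka identity $\sum_{\Omega:\,\Omega^\circledast=\mu}K_{\Omega\Lambda}(q,t)=K_{\mu\,\Lambda^\circledast}(q,t)$, is exactly the paper's own restatement, and the eigenfunction/intertwining strategy via $D^\circledast$ is a sensible direction that the paper does not pursue. So there is no paper proof to match; the only question is whether your outline closes the gap, and it does not.

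The concrete problem is your fact (F1) and everything downstream of it. The diagonal super-Kostka coefficient $K_{\Lambda\Lambda}(q,t)$ is not $1$: already for $\Lambda=(2;1)$ one reads off $K_{(2;1)(2;1)}=1+qt$ from \eqref{ex1} (or Table~\ref{tab31}), because $J_\Lambda=h^\downarrow_\Lambda P_\Lambda$ carries the normalization $h^\downarrow_\Lambda$ and $S_\Omega$ is not monic in the monomials. Consequently your proposed normalization step --- that $\sum_{\Omega:\,\Omega^\circledast=\Lambda^\circledast}K_{\Omega\Lambda}=1$, equivalently that $K_{\Omega\Lambda}=0$ whenever $\Omega^\circledast=\Lambda^\circledast$ but $\Omega\neq\Lambda$ --- is false, and is directly refuted by the paper's own example: for $\Lambda=(2;1)$ and $\Omega=(0;3)$ one has $\Omega^\circledast=\Lambda^\circledast=(3,1)$ yet $K_{(0;3)(2;1)}=q^2t\neq0$. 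The conjecture instead asserts $(1+qt)+q^2t=K_{(3,1)(3,1)}(q,t)$, which is precisely the $\mu=\Lambda^\circledast$ instance of the identity to be proved, so the ``top coefficient'' gives no independent normalization and your stated crux is the wrong target. The other load-bearing step, the intertwining $D^\circledast\psi=\psi D^\circledast$ (now between sectors of different total degree, since $|\Omega^\circledast|=|\Omega^*|+1$ in the $m=1$ sector), is only wished for, not established; you correctly flag it as the obstacle but offer no mechanism for it. In sum, the proposal identifies a plausible line of attack on an open conjecture but proves nothing beyond what the paper already states, and its central combinatorial claim is contradicted by the paper's data.
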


This conjecture implies that the usual $(q,t)$-Kostka coefficient 
$K_{\mu \la}(q,t)$ of two partitions $\mu,\la$ can be calculated 
from their lower-degree super-relatives as
\begin{equation} \label{relatekostka}
K_{\mu \la}(q,t) = \sum_{\Omega \, | \, \Om^\cd = \mu} K_{\Omega \Lambda }(q,t)\, ,
\end{equation} 
where $\Lambda$ is any superpartition that can be obtained from
$\lambda$ by replacing a square by a circle and the 
sum is over all $\Omega$'s that can be obtained from
$\mu$  by replacing a square by a circle. Moreover, the expression for the sum on the right-hand side is independent of the choice of $\La$.
We thus relate a Kostka coefficient of degree $(n|0)$ to a sum of degree $(n\!-\!1|1)$ Kostka coefficients. 
In other words, the $(n\!-\!1|1)$ Kostka coefficients provide a refinement 
of the usual $(q,t)$-Kostka coefficients.

For instance, consider $\la=\mu=(3,1)$. There are two  ways of replacing a square by a circle:
\begin{equation} 
 {\tableau[scY]{&&\\\\ }}\;\; \Rw\; \;
  {\tableau[scY]{&&\bl\tcercle{}\\ \\ }} \; \quad
  \text{or}\;\quad  {\tableau[scY]{&&\\ \bl\tcercle{} \\ }}\;.
 \end{equation}
Choosing $\La=(2;1)$, we have  \cite{BDLM}
\begin{multline} \label{ex1}
J_{(2;1)}=tS_{(3; \,)}+q^2tS_{(0;3)}+   ( 1+qt)S_{(2;1)}  + q(1+qt)S_{(1;2)}\\    +q^2(1+qt)S_{(0;2,1)}+qS_{(1;1,1)}+q^3S_{(0;1,1,1)}\, ,
\end{multline}  
so that, using \eqref{relatekostka},
\begin{equation} 
K_{(2;1)\,(2;1)}+K_{(0;3)\,(2;1)}=(1+qt)+q^2t=K_{(3,1)\,(3,1)}.
\end{equation}  
The same result follows by taking $\La=(0;3)$:
\begin{multline} \label{ex2}
J_{(0;3)}=tS_{(3;\,)}+S_{(0;3)}+  qt( 1+q)S_{(2;1)}   + q(1+qt)S_{(1;2)}   \\ +q(1+q)S_{(0;2,1)}+ q^3tS_{(1;1,1)} +q^3S_{(0;1,1,1)}\, ,
\end{multline}  
which implies that\begin{equation} 
K_{(2;1)\,(0;3)}+K_{(0;3)\,(0;3)}=(qt+q^2t)+1=K_{(3,1)\,(3,1)} \, .
\end{equation}  

To formulate the second conjecture, we need to introduce the notion of a concatenable superpartition, defined as one for which $\La^a_m\geq \La_1^s$. Such a superpartition can be transformed into a partition $\la$ by removing the semi-coma: $\la= (\Lambda^a,\Lambda^s)$. For instance, $(5,3;2,1,1)$ is concatenable and the corresponding partition is $\la=(5,3,2,1,1)$. 

\begin{conjecture}  Let $\Lambda$ be a concatenable superpartition of fermionic
degree $m=1$, and let $\lambda$ be its corresponding partition. Let also $\phi$ be the linear application that maps $S_{\Omega}$ to $S_\mu$ 
if $\Omega$ is concatenable (and $\mu$ its correspoding partition)
 and to zero otherwise. 
Then
\begin{equation} 
\phi(J_{\Lambda})= J_{\lambda}
\end{equation}  
\end{conjecture}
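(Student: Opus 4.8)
The plan is to first transfer the conjecture from the deformed Schur basis, on which $\phi$ is defined, to an identity of generalized Kostka coefficients, and then to prove that identity by an eigenoperator argument. Since $\varphi$ is the power-sum homomorphism defining $S_\Om=\varphi(s_\Om)$, one checks directly that $\phi\circ\varphi=\varphi\circ\tilde\phi$, where $\tilde\phi$ is the analogous map on Schur superpolynomials ($\tilde\phi(s_\Om)=s_\mu$ if $\Om$ is concatenable with associated partition $\mu$, and $0$ otherwise). Writing $J_\La=\sum_\Om K_{\Om\La}(q,t)\,S_\Om=\varphi\bigl(\sum_\Om K_{\Om\La}(q,t)\,s_\Om\bigr)$, and using that among superpartitions of fermionic degree $1$ concatenability sets up a bijection $\Om\leftrightarrow\mu$ via $\mu=(\Om^a_1,\Om^s)$ (which preserves total degree), the statement $\phi(J_\La)=J_\la$ becomes equivalent to
\begin{equation}
K_{\Om\La}(q,t)=K_{\mu\la}(q,t)\qquad\text{for every concatenable pair }\Om\leftrightarrow\mu,
\end{equation}
with $\La\leftrightarrow\la$ the concatenable pair attached to the given data.

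The key structural observation is that for a concatenable $\La$ of fermionic degree $1$ the circle sits at the end of the top row, so that $\La^*=\la$ and $\La^\cd=\la+e_1$ (one box added to the first row). Consequently, by Proposition~\ref{LemmaSeki}, $D^*(u;q,t)\,P_\La=\varepsilon_{\La^*}(u;q,t)\,P_\La=\varepsilon_\la(u;q,t)\,P_\La$, i.e. $P_\La$ carries exactly the eigenvalue of the ordinary Macdonald operator on $P_\la$. This suggests proving that $\phi$ intertwines the fermionic-degree-$1$ block of $D^*(u;q,t)$ with its fermionic-degree-$0$ block (the ordinary Macdonald operator $D^*|_{m=0}$), namely $\phi\circ D^*=D^*|_{m=0}\circ\phi$ as maps from $\mathscr R^{(n|1)}$ to $\mathscr R^{(n|0)}$. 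Granting this, $D^*|_{m=0}\,\phi(P_\La)=\varepsilon_\la(u;q,t)\,\phi(P_\La)$ forces $\phi(P_\La)=c\,P_\la$ for a scalar $c$, since the $\varepsilon_\la$ separate partitions. The scalar is then pinned down from $\phi(J_\La)=c\,h^\downarrow_\La\,P_\la$ by comparing the coefficient of $S_\la$ on both sides: via $\phi(S_\La)=S_\la$ and the bijection, this coefficient equals $K_{\La\La}$, to be matched with the coefficient $K_{\la\la}$ of $S_\la$ in $J_\la$; equivalently $c=h^\downarrow_\la/h^\downarrow_\La$, the two normalizations differing precisely by the top-row factors produced by $\La^\cd=\la+e_1$. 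This yields $\phi(J_\La)=J_\la$.

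I expect the intertwining $\phi\circ D^*=D^*|_{m=0}\circ\phi$ to be the main obstacle. The difficulty is that $\phi$ is defined combinatorially on the deformed Schur basis whereas $D^*$ is built from the Cherednik operators through the non-symmetric expansion \eqref{PvsE}; reconciling the two requires showing that $D^*$ respects the splitting of $\mathscr R^{(n|1)}$ into concatenable and non-concatenable support modulo $\ker\phi$, and that on the concatenable part its action reproduces $D^*|_{m=0}$. A complementary line worth developing exploits the relation $\Om^\cd=\mu+e_1$ valid for concatenable $\Om$: it ties the present statement to Conjecture~\ref{kos1} (whose map $\psi$ fills the circle, producing $\Om^\cd$) through a single box added to the first row, so a Pieri-type recursion in the first row for both the superspace and ordinary integral forms would deduce one conjecture from the other. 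In either route the genuine hard point is that no manifestly positive combinatorial formula for the $K_{\Om\La}(q,t)$ is available --- their positivity being itself only Conjecture~\ref{ConjMac} --- so a fully rigorous proof will likely demand such a Pieri rule in superspace or a representation-theoretic model in the spirit of the Garsia--Haiman approach; a useful preliminary is to settle the base cases $q=t=0$ (Schur superpolynomials) and the Hall--Littlewood specializations, and then to propagate them with the eigenoperator characterization of Section~\ref{simp}.
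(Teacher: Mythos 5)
This statement is one of the two \emph{new conjectures} of Section \ref{2new}; the paper offers no proof of it (only the consistency check against the tables and the observation that it implies $K_{\La\Om}=K_{\la\mu}$ for concatenable pairs), so there is no ``paper's proof'' to compare against, and your proposal must stand on its own. It does not: it is a strategy outline rather than a proof, and you say as much. Your preliminary reductions are sound --- the transfer from the $S_\Om$ basis to the statement $K_{\Om\La}(q,t)=K_{\mu\la}(q,t)$ is correct because, for $m=1$, concatenation is a degree-preserving bijection between concatenable superpartitions and all partitions of $n$ (the unique concatenable preimage of $\mu$ is $(\mu_1;\mu_2,\mu_3,\dots)$), and the observation that $\La^*=\la$, $\La^\cd=\la+e_1$ for concatenable $\La$ is also correct. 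But everything then hinges on the asserted intertwining $\phi\circ D^*=D^*|_{m=0}\circ\phi$, which you do not prove and which carries essentially the entire content of the conjecture.

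Concretely, the gap is this: $D^*(u;q,t)$ restricted to the $m=1$ sector has eigenvalue $\varepsilon_{\Om^*}(u;q,t)$ on \emph{every} $P_\Om$, concatenable or not, so the eigenvalue data by itself cannot see concatenability. For the intertwining to hold, $\ker\phi$ --- the span of the $S_\Om$ with $\Om$ non-concatenable --- would have to be $D^*$-stable, and nothing in the paper's machinery (where $D^*$ is built from Cherednik operators acting on the non-symmetric expansion \eqref{PvsE}, not on the deformed Schur basis) suggests why that should be true; establishing it would amount to a statement about the $S_\Om$-expansion of $D^* S_\Om$ that is at least as deep as the Kostka identity you are trying to prove. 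The scalar normalization step has a similar circularity: pinning down $c$ via the coefficient of $S_\la$ requires relating $K_{\La\La}$ to $K_{\la\la}$, which is an instance of the identity under proof, and $\phi$ is not defined on monomials so the leading-coefficient route is not available either. Your closing paragraph honestly flags all of this, so the proposal should be regarded as a reasonable research plan for an open problem, not as a proof.
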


If $\La$ and $\Om$ are two $m=1$ concatenable superpartitions whose corresponding partitions are $\la$ and $\mu$, this implies that: 
\begin{equation} 
K_{\La \Om}=K_{\la\mu}.
\end{equation}  
For instance, considering $\phi(J_{(2;1)})$ given in \eqref{ex1}, one recovers the expression of $J_{(2,1)}$:
\begin{equation} \label{ex1p}J_{(2,1)}=qS_{(1,1,1)}+ ( qt+1)S_{(2,1)}+tS_{(3)}.
\end{equation}  

\begin{appendix}

\section{Schur polynomials in superspace}\label{Key}

{In this section, we prove that the limits of the Macdonald polynomials $P_\La(x;q,t)$, as $q=t\to0$ and $q=t\to\infty$, are well defined; they provide two new families of Schur polynomials in superspace, namely $s_\La(x)$ and $\bar s_\La(x)$, whose existence was only conjectured in \cite{BDLM}.  The proof almost immediately follows from \eqref{PvsE} (see lemma \ref{propP0} below), which expresses a Macdonald superpolynomial in terms of non-symmetric Macdonald polynomials $E_\eta(x;q,t)$, and from  Ion's work \cite{Ion,Ion2}, which shows that a non-symmetric Macdonald polynomial (for an affine root system) can be interpreted, when $t\to 0,\infty$, as the character of a certain Demazure module, and more particularly, proves the regularity of $E_\eta(x;q,t)$ as $q=t\to 0,\infty$.   

It is worth noting that  the characters of the Demazure modules were also studied from an algebro-combinatorial point of view  by Fu and Lascoux \cite{Lascoux}.   They showed that the character formulas can be written in terms of key polynomials $\Ke_\eta(x)$ and $\hat \Ke_\eta(x)$ (see \cite[Section 4]{Lascoux} for a precise definition).  Combining the results of \cite{Ion2} and \cite{Lascoux} on the irreducible root system of type $A$, one readily obtains  the explicit relation between the limiting non-symmetric Macdonald polynomials (of type $A$) and the key polynomials:  
\begin{equation} \label{Einkey}
E_\eta(x;0,0) = \widehat \Ke_\eta(x), \qquad  E_\eta(x;\infty,\infty) = \Ke_{\omega_{\mathbf n} \eta}(\omega_{\mathbf n} x)\, ,
\end{equation} 
where $\eta=(\eta_1, \eta_2, \ldots, \eta_n)$ is a composition, $  \omega_{\mathbf n}\eta = (\eta_n, \eta_{n-1}, \ldots, \eta_1)$ and $(\omega_{\mathbf n} x)=(x_n,\ldots,x_1)$.}

\begin{lemma}\label{propP0}
The Macdonald superpolynomials $P_\La(x,\theta;q,t)$ can also be written in terms of the non-symmetric Macdonald polynomials as
\begin{equation}\label{P0}
P_{\La}= c'_{\La}(t) \sum_{\sigma \in S_N/(S_m \times S_{m^c})} \mathcal{K}_\sigma \left( \theta_1 \cdots \theta_m  A_m U^{+}_{m^c} E_{( (\La^a)^R, \La^s)} \right)\,,
\end{equation}
where 
\begin{equation}
c'_\La(t)=\frac{(-1)^{\binom m 2}}{f_{\La^s}(t)}\, .
\end{equation}
\end{lemma}
\begin{proof}
This expression differs from \eqref{PvsE} in that we consider here the composition obtained from the concatenation of $((\La^a)^R, \La^s)$ instead of its (fully) reversed version. To fix the normalization coefficient, we follow exactly the proof of Proposition \ref{propuni} except that equation \eqref{fixcoeff} is now replaced by
\begin{equation}
[ x^\La ] \quad \sum_{w \in S_{m^c}} T_w x^\La =  \sum_{w \in S_{m^c} \,|\, w(\La)=\La} t^{\ell(w)}.
\end{equation}
We need to consider all those permutations that leaves $\La^s$ fixed, which gives
\begin{equation}
[x^\La] \quad \sum_{w \in S_{m^c}} T_w x^\La =  \prod_{i=0}^{\La^s_1 }  \left(  \sum_{w^{(i)} \in S_{n_{\La^s}(i)}  } t^{\ell(w^{(i)})} \right) = \prod_{i=0}^{\La^s_1} [n_{\La^s}(i)]_t! = f_{\La^s}(t).
\end{equation}
We thus have, as expected, 
\begin{equation}
[m_\La]\,  P_\La =  c'_\La(t) (-1)^{\binom m 2} f_{\La^s}(t) =1\, .
\end{equation}
\end{proof}

\begin{proposition}
The  Schur superpolynomials, 
\begin{equation}
s_\La(x,\theta) := P_\La(x,\theta;0,0), \quad \text{and} \quad \bar s_\La(x,\theta) = P_\La(x,\theta;\infty,\infty),  
\end{equation}
are well-defined symmetric polynomials in superspace.  Furthermore, the Schur superpolynomials can be expressed in terms of the  key polynomials (see \eqref{sinkey} and \eqref{sbarinkey} below).
\end{proposition}
\begin{proof}We first set $q=t$ in \eqref{P0} and take the limit $t\to 0$.  Given that  $E_\eta(x;0,0)$ is well-defined, we only need to evaluate $c'_{\La}(t)$ and $U^{+}_{m^c}$  as $t\to 0$.  Giving that  $[0]_t! =1$ and  
$\lim_{t\rightarrow 0} \, [n]_t = \lim_{t\rightarrow 0} \frac{1-t^n}{1-t}= 1 $, 
we have $f_{\La^s}(0) = 1$.  Moreover, 
\begin{equation}
\lim_{t\to 0}T_i = \frac{-x_{i+1}}{x_i-x_{i+1}}(K_{i,i+1}-1)=\widehat \pi_i\,.
\end{equation}
Consequently, 
\begin{multline*} \lim_{t\rightarrow 0} P_\La(x,\theta;t,t)\\ = \lim_{t\rightarrow 0}c'_\La(t) \sum_{\sigma \in S_N/(  S_m \times S_{m^c} )} \K_\sigma \theta_1 \cdots \theta_m A_m \sum_{w\in S_{m^c}}T_w(t) E_{((\La^a)^R,\La^s)}(x;t,t)\\=   (-1)^{\binom m 2}  \sum_{\sigma \in S_N/(  S_m \times S_{m^c} )} \K_\sigma \theta_1 \cdots \theta_m A_m \widehat \Pi_{m^c} \widehat E_{((\La^a)^R,\La^s)}(x;0,0)
\end{multline*} 
where  $\widehat \Pi_{m^c}=\sum_{w\in S_{m^c}} \widehat \pi_w$.  Thus, according to \eqref{Einkey}, 
\begin{equation} P_\La(x,\theta;0,0)=(-1)^{\binom m 2}  \sum_{\sigma \in S_N/(  S_m \times S_{m^c} )} \K_\sigma \theta_1 \cdots \theta_m A_m \widehat \Pi_{m^c} \widehat \Ke_{((\La^a)^R,\La^s)}(x)\, . \label{sinkey} \end{equation}
    This shows that the Schur superpolynomial $s_\La(x,\theta)$ is well-defined.   

For the second family of Schur superpolynomials $\bar s_\La(x,\theta)$, let us first write the  $t$-symmetrizer  $U^+$ as follows (see \cite[eq. (2.25)]{Mar}):
\begin{equation*}
U^+_{m^c}f(x) = \frac{1}{\Delta_{m^c}} A_{m^c} \Big( \prod_{m< i<j} (x_i-tx_j) f(x) \Big) = \frac{t^{n(n-1)/2}}{\Delta_{m^c}} A_{m^c} \Big( \prod_{m< i<j} (\frac{1}{t} x_i-x_j) f(x) \Big) \, ,
\end{equation*}
where $n=N-m$.   Then, we  rewrite $f_{\La^s}(t)$ as 
\begin{multline*}
f_{\La^s}(t) = \prod_{i=1}^{\La^s_1} [n_{\La^s}(i)]_t!= \prod_{i=1}^{\La^s_1} t^{n_{\La^s}(i) (n_{\La^s}(i)-1)/2}  [n_{\La^s}(i)]_{1/t}!\\ = t^{   \sum_i n_{\La^s}(i)(n_{\La^s}(i)-1)/2   } f_{\La^s}(1/t)\, . 
\end{multline*}
For $\mu=(\mu_1, \mu_2, \ldots, \mu_N)$,  with $\mu_N\geq 0$, we define
\begin{equation}
\text{inv}(\mu)=N(N-1)/2 - \sum_i n_\mu(i)(n_\mu(i)-1)/2 \,  .
\end{equation}
In words, given all possible pairs of parts of $\mu$, if we subtract all the pairs formed between repeated parts, we get the inversion number.  By combining these formulas  with \eqref{PvsE} and \eqref{clambda}, we get
\begin{multline*} \lim_{t\rightarrow \infty} P_\La(x;t,t)\\
= \lim_{t \rightarrow \infty} c_\La(t) \sum_{\sigma \in S_N/(S_m \times S_{m^c})} \K_\sigma \theta_1 \cdots \theta_m A_m U^+_{m^c}E_{\La^R}(x;t,t)\qquad \qquad \qquad 
\\=\lim_{t \rightarrow \infty}  \frac{(-1)^{\binom m 2}  t^{n(n-1)/2} }{  t^{ n(n-1)/2-\text{inv}(\La^s)  }  f_{\La^s}(1/t)  t^{\text{inv}(\La^s)}  } \qquad \qquad\qquad \qquad \qquad\qquad    \\
\qquad \qquad \times \lim_{t \rightarrow \infty} \sum_{\sigma \in S_N/(S_m \times S_{m^c})} \K_\sigma \theta_1 \cdots \theta_m A_m \frac{1}{\Delta_{m^c}} A_{m^c} \prod_{m< i<j} (\frac{1}{t} x_i-x_j) E_{\La^R}(x;t,t)\, 
\end{multline*} which leads to 
\begin{multline*} \lim_{t\rightarrow \infty} P_\La(x;t,t)=(-1)^{\binom m 2} \sum_{\sigma \in S_N/(S_m \times S_{m^c})} \K_\sigma \theta_1 \cdots \theta_m A_m \frac{1}{\Delta_{m^c}}  A_{m^c} (-1)^{\binom n 2} \\ x_{m+1}^0 x_{m+2}^1 \cdots x_N^{n-1} E_{\La^R}(x;\infty,\infty) \, .
\end{multline*}
Finally, using \eqref{Einkey} and $\delta^{(n)}=(n-1, \ldots, 1,0, \ldots)$, we obtain  
\begin{multline} P_\La(x;\infty,\infty)= \\
 (-1)^{\binom m 2+\binom n 2}\sum_{\sigma \in S_N/(S_m \times S_{m^c})} \K_\sigma \theta_1 \cdots \theta_m A_m \frac{1}{\Delta_{m^c}} A_{m^c} (\omega x)^{\delta^{(n)}} \Ke_{( \La^s, \La^a )}(\omega x)\, , \label{sbarinkey}
\end{multline}
which shows that $\bar s_\La(x,\theta)$ is well-defined. 
\end{proof}

We recall that the Macdonald superpolynomials  are stable with respect to the number $N$ of indeterminates.  The same stability holds for  $s_\La=s_\La(x,\theta)$  and $\bar s_\La=\bar s_\La(x,\theta)$, since they are limits of the Macdonald superpolynomials. This allows us to consider $s_\La $  and $\bar s_\La $ as elements of the algebra $\mathscr{R}$ of  symmetric functions  in superspace (see Section \ref{ortho}).   The next corollary shows that within this context,  there is a natural duality between  $\bar s_\La $and $s_\La $.    

\begin{corollary}   Let  $\LL\,|\,\RR$ and $\omega$ respectively  denote the scalar product  $\LL\,|\,\RR_{q,t}$  with $q=t=1$, and  the homomorphism defined in \eqref{actionomega}.  Moreover, let \beq s^*_\La=\omega \bar s_{\La'}\, .\eeq  Then, \beq  \label{dualschur} \LL s_\Lambda^*\, |\,s_\Om  \RR_{}=\delta_{\La,\Om}\, .\eeq Equivalently,
\beq \prod_{i,j}(1-x_iy_j-\theta_i\phi_j)^{-1}=\sum_\La (-1)^{\binom{m}{2}}s^*_\La(x,\theta)\, s_\La(y,\phi)\, .
\eeq 
\end{corollary}
\begin{proof} The first result is obtained by taking the limit $q,t\to0$ in \eqref{dual2}. The second result follows from \eqref{dualschur} and \cite[Lemma 37]{DLMjaco}.\end{proof}

\section{Proof of a combinatorial identity}

The equivalence, as $N$ goes to $\y$, 
between the two norms in \eqref{2norms} relies on the equality 
\begin{equation}\label{ident}
{ \lim_{N\rw\y}  \frac{d'_{({\La^a}^R, \La^s)}\,d_{{\La^s}^R} }
{d_{(\La^a,{\La^s}^R)}}=\, \frac{{h_{\La^s}}\,h^\uparrow_\La}{h^\downarrow_\La} .}
\end{equation}
The limit needs to appear on the
right-hand side since there is a residual dependence upon $N$ in the ratio
$d_{{\La^s}^R}  /{d_{(\La^a,{\La^s}^R)}} $  (the reversed  partition ${\La^s}^R$ 
contains  
$N-m-\ell(\La^s)$ zeros).

We first introduce, as in \cite{Knop},
a convenient decomposition of the leg-lengths of a composition $\g$:
\begin{equation}\label{defl}l(s) = l^\uw(s)+l^\dw(s)\end{equation}
with
\begin{align} \label{defla}
l^\uw(s)&=\# \{k=1,...,i-1 \, | \, j \leq \gamma_k+1\leq \gamma_i \}\, ,\nonumber\\
 l^\dw(s)&= \# \{k=i+1,...,N \, |\,  j \leq\gamma_k \leq \gamma_i \} \, .
\end{align}
In order to better visualize expressions 
$l^\uw(s)$ and $l^\dw(s)$, we put 
(as in \cite{LLN}) a symbol at the end of each row of $\g$, 
here trading the French hexagon for a  triangle, e.g.,
\begin{equation}\gamma=(0,0,1,3,3) \quad \longrightarrow \quad {\tableau[scY]{ \bl \triangle \\\bl \triangle \\ &\bl \triangle \\&& & \bl \triangle\\&&&\bl \triangle}}\, .
\end{equation}
 For $s=(i,j)$,  $l^\uw(s)$ is given by the number of triangles above row $i$
in columns
$j'$
for $j\leq j'\leq \g_i$ (e.g.,  $l^\uw((4,1))=3$ in the above example), while $l^\dw(s)$ 
 is given by the number of triangles below row $i$ in columns
$j'$ for $j+1\leq j'\leq \g_i+1$
(e.g., $l^\dw((4,1))=1$ in the above example). 

\begin{lemma} Identity \eqref{ident} is equivalent to
\begin{equation}\label{identa}
{ \frac{d'_{({\La^a}^R, \La^s)}}{{h_{\La^s}}}  \left[\frac{d_{{\La^s}^R} }
{d_{(\La^a,{\La^s}^R)}}\right]_{0}
=\frac{ \,h^\uparrow_\La} {h^\downarrow_\La} }\, ,
\end{equation}
where
\begin{equation}
 \left[\frac{d_{{\La^s}^R} }
{d_{(\La^a,{\La^s}^R)}}\right]_{0} =
 \frac{d_{{\La^s}^R} } 
{d_{(\La^a,{\La^s}^R)}} 
\left( \prod_{(i,1)\in\La^s} \frac{d_{{\La^s}^R}(s) } {d_{(\La^a,{\La^s}^R)}(s)} 
\right)^{-1} \, ,
\end{equation}
that is, the products corresponding the the cells in the first columns of
$\Lambda^s$ were removed from $d_{{\La^s}^R} / d_{(\La^a,{\La^s}^R)}$. 
\end{lemma}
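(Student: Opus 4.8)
The plan is to prove the single relation
\begin{equation}
\lim_{N\rw\y}\frac{d'_{({\La^a}^R,\La^s)}\,d_{{\La^s}^R}}{d_{(\La^a,{\La^s}^R)}}
= h_{\La^s}\;\frac{d'_{({\La^a}^R,\La^s)}}{h_{\La^s}}\left[\frac{d_{{\La^s}^R}}{d_{(\La^a,{\La^s}^R)}}\right]_{0},
\end{equation}
from which the lemma is immediate: the right-hand side of \eqref{ident} is exactly $h_{\La^s}$ times the right-hand side of \eqref{identa}, so the displayed identity shows that \eqref{ident} holds if and only if \eqref{identa} does. By the definition of $[\,\cdot\,]_{0}$, the quotient $d_{{\La^s}^R}/d_{(\La^a,{\La^s}^R)}$ equals the bracketed expression times the product of the single-cell ratios $d_{{\La^s}^R}(s)/d_{(\La^a,{\La^s}^R)}(s)$ over the cells $s=(i,1)$ in the first column of $\La^s$. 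Thus the displayed relation reduces to two claims: that $d'_{({\La^a}^R,\La^s)}$ and the bracketed quotient are independent of $N$, and that the first-column product tends to $1$ as $N\rw\y$.

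For the independence claim I would invoke the decomposition $l(s)=l^\uw(s)+l^\dw(s)$ of \eqref{defl}--\eqref{defla} and track the block of $O(N)$ padding zeros, which is the only $N$-dependent feature of these compositions. In $({\La^a}^R,\La^s)$ the padding zeros lie \emph{below} every nonzero row; since a zero row $\gamma_k=0$ below a cell $s=(i,j)$ is counted in $l^\dw(s)$ only if $j\le0$, it never contributes, so all arm- and leg-lengths entering $d'_{({\La^a}^R,\La^s)}$ are stable in $N$. In $(\La^a,{\La^s}^R)$ and in ${\La^s}^R$ the padding zeros instead lie \emph{above} the reversed nonzero $\La^s$-rows, and a zero row above $s=(i,j)$ enters $l^\uw(s)$ only when $j-1\le0$, i.e.\ only for first-column cells $j=1$. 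Hence every cell with $j\ge2$ has $N$-independent leg-length, which shows the bracketed quotient is constant in $N$ and localizes the entire $N$-dependence to the first-column factors.

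For the first-column product I would observe that a cell $s=(i,1)$ in a $\La^s$-row of length $r$ has $a(s)+1=r$ in both ${\La^s}^R$ and $(\La^a,{\La^s}^R)$, while its two leg-lengths differ only by the fixed number $\#\{\,j:\La^a_j<r\,\}$ contributed by the $\La^a$-rows lying above $s$ in the concatenation. Both leg-lengths contain the full block of padding zeros in their $l^\uw$-count, so both diverge as $N\rw\y$; since $|t|<1$, each factor $1-q^{r}t^{l(s)+1}\rw1$, and therefore the finite product $\prod_{(i,1)\in\La^s}d_{{\La^s}^R}(s)/d_{(\La^a,{\La^s}^R)}(s)\rw1$. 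Assembling the three steps gives the displayed relation and hence the equivalence. I expect the main obstacle to be the careful bookkeeping of leg-lengths under reversal and concatenation — in particular the verification that a padding zero is registered in $l^\uw$ for precisely the first-column cells — where the triangle diagram accompanying \eqref{defla} should serve as the safest guide.
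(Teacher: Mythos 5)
Your proof is correct and follows essentially the same route as the paper: isolate the first-column factors as the only $N$-dependent part of $d_{{\La^s}^R}/d_{(\La^a,{\La^s}^R)}$, then argue that each such ratio of factors $1-q^{a(s)+1}t^{l(s)+1}$ tends to $1$ because both leg-lengths diverge with $N$ (using $|t|<1$). Your leg-length bookkeeping via $l^{\uw}$/$l^{\dw}$, showing that padding zeros register only for first-column cells, makes explicit a step the paper merely asserts as "rather clear" and illustrates with an example, but the argument is the same.
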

\begin{proof}
We first isolate the part of $d_{{\La^s}^R} / d_{(\La^a,{\La^s}^R)}$
that depends upon $N$:
\begin{equation}
  \lim_{N\rw\y}  \frac{d_{{\La^s}^R} }
{d_{(\La^a,{\La^s}^R)}}=   \left[\frac{d_{{\La^s}^R} }
{d_{(\La^a,{\La^s}^R)}}\right]_0
\; \lim_{N\rw\y} \prod_{(i,1)\in\La^s} \frac{d_{{\La^s}^R}(s) } {d_{(\La^a,{\La^s}^R)}(s)}.
\end{equation}
It thus suffices to prove that
\begin{equation}\label{ratio}
\lim_{N\rw\y} \prod_{(i,1)\in\La^s} \frac{d_{{\La^s}^R}(s) } {d_{(\La^a,{\La^s}^R)}(s)}=  1\, .\end{equation}
The last equality is rather clear: it is a  ratio of terms of the form 
$1-q^{a(s)+1}t^{l(s)+1}$ for which the leg-length $l(s)$ tends to infinity
as $N$ goes to infinity (the number of triangles above $(i,1)  \in
\Lambda^s$ goes to infinity).
Take for instance the case $\La=(2,1;2,1)$ and introduce $M$ zeros represented by $\triangle^M$:
\begin{equation}
(2,1,0^M,1,2):\quad {\tableau[scY]{&&\bl\triangle\\& \bl\triangle\\ \bl \,\,\triangle^M \\ &\bl \triangle \\&&\bl \triangle\\}}\, ,
\qquad\qquad\qquad (0^M,1,2):\quad {\tableau[scY]{ \bl \,\,\triangle^M \\ &\bl \triangle \\&&\bl \triangle\\}}\, .
\end{equation}
The ratio on the left-hand side becomes
\begin{equation}
\lim_{M\rw\y}\ \prod_{(i,1)\in\La^s} \frac{d_{{\La^s}^R}(s) } {d_{(\La^a,{\La^s}^R)}(s)}=
\lim_{M\rw\y} \frac{(1-q t^{M+1})(1-q^{2}t^{M+2})}{(1-q t^{M+1})(1-q^{2}t^{M+3})} =1.
\end{equation}
\end{proof}

\begin{proposition} \label{con1} Identity \eqref{identa} holds.
\end{proposition}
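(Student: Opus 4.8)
The plan is to prove \eqref{identa} by expanding every factor into a product of binomials $1-q^{A}t^{L}$ and matching the numerator and denominator of the two sides cell by cell. Since $d_\gamma$, $d'_\gamma$ and $h_\gamma$ are, by \eqref{debhqt}, products over the cells $s$ of a composition $\gamma$ of the binomials $1-q^{a(s)+1}t^{l(s)+1}$, $1-q^{a(s)+1}t^{l(s)}$ and $1-q^{a(s)}t^{l(s)+1}$ respectively, the left-hand side of \eqref{identa} is itself a ratio of such binomials, and it suffices to show that the multiset of pairs $(A,L)$ appearing in its numerator (resp.\ denominator) coincides with the one produced by $h^\uparrow_\La$ (resp.\ $h^\downarrow_\La$). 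The essential tool throughout is the leg decomposition $l(s)=l^{\uparrow}(s)+l^{\downarrow}(s)$ of \eqref{defl}--\eqref{defla}, read off from the positions of the row-ends via the triangle bookkeeping introduced above; this is what converts the composition leg-lengths into the partition leg-lengths $l_{\La^*}(s)$ and $l_{\La^\cd}(s)$ occurring on the right.

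The first step is to classify cells. I would split the cells of the diagram of $\La^*$ into bosonic cells (those in a row coming from $\La^s$) and fermionic cells (those in a row coming from $\La^a$), and I would further split the fermionic cells according to whether they lie in a fermionic column (a column carrying a circle) or not. For a bosonic cell the row-internal arm-length is unchanged between $\La^*$ and $\La^\cd$, so $a_{\La^\cd}(s)=a_{\La^*}(s)$, and $h^\downarrow_\La$ simplifies to $\prod_{s\in\B(\La)}(1-q^{a_{\La^*}(s)}t^{l_{\La^*}(s)+1})$; only the leg-lengths genuinely differ between the two diagrams, and they do so by the number of fermionic rows whose circle falls in the relevant column, a quantity the triangle picture makes explicit.

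With this dictionary, I would reorganize the left-hand side as $\bigl(d'_{((\La^a)^R,\La^s)}\,d_{(\La^s)^R}\bigr)\big/\bigl(h_{\La^s}\,d_{(\La^a,(\La^s)^R)}\bigr)$, having already used \eqref{ratio} to discard the first-column cells so that the remaining factors of $d_{(\La^s)^R}/d_{(\La^a,(\La^s)^R)}$ telescope. The two $d$-compositions differ only by the block of fermionic rows $\La^a$ placed on top of the bosonic block, so their ratio contributes exactly the shift that turns the leg-lengths measured inside $\La^s$ into leg-lengths measured inside $\La^*$ and $\La^\cd$; combined with $d'_{((\La^a)^R,\La^s)}/h_{\La^s}$ this recovers, for each bosonic cell, the factor $(1-q^{a_{\La^*}(s)+1}t^{l_{\La^\cd}(s)})/(1-q^{a_{\La^*}(s)}t^{l_{\La^*}(s)+1})$ of $h^\uparrow_\La/h^\downarrow_\La$. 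The fermionic cells lying in bosonic columns are supplied by the fermionic rows of $d'_{((\La^a)^R,\La^s)}$ and $d_{(\La^a,(\La^s)^R)}$, and --- this is the point of the restriction to $\B(\La)$ --- the fermionic cells lying in fermionic columns produce factors that cancel between numerator and denominator and therefore drop out, precisely matching their exclusion from $\B(\La)$.

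The main obstacle is the leg-length bookkeeping, and I expect it to absorb most of the work. After the first-column cells have been removed the remaining expression is genuinely independent of $N$ (the padding zeros contribute to a leg-length only through the first column), so the matching can be carried out once and for all. One must verify that the interleaving of the fermionic rows $\La^a$ among the bosonic rows $\La^s$ --- increasing in $((\La^a)^R,\La^s)$ but decreasing in $(\La^a,(\La^s)^R)$ --- produces exactly the powers of $t$ recorded by $l_{\La^*}$ and $l_{\La^\cd}$, and that the apparent mismatch in cardinality (the compositions carry all $|\La^*|=n$ cells, whereas $\B(\La)$ is a proper subset) is resolved by the cancellation of the excluded cells rather than by coincidence. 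Once the fermionic/bosonic incidences in rows and columns are tracked correctly through \eqref{defla}, every binomial of the left-hand side is paired with one of the right, and \eqref{identa}, hence \eqref{ident}, follows.
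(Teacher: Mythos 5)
Your proposal is a plan rather than a proof: every step that carries actual content is deferred with ``I would split\dots'', ``one must verify\dots'', ``I expect it to absorb most of the work.'' The identity \eqref{identa} is an equality of products of binomials $1-q^At^L$, and such factors only interact by \emph{exact} cancellation of the pair $(A,L)$; there is no meaningful sense in which the ratio $d_{(\La^s)^R}/d_{(\La^a,(\La^s)^R)}$ ``contributes a shift that turns the leg-lengths measured inside $\La^s$ into leg-lengths measured inside $\La^*$ and $\La^\cd$'' --- you cannot shift an exponent by multiplying by another binomial. What you would actually have to prove is a precise multiset identity: after discarding the first-column cells, the numerator of the left-hand side carries $n+|\La^s|-\ell(\La^s)$ factors against the $|\B(\La)|$ factors of $h^\uparrow_\La$, so a large collection of factors must cancel in pairs, and the surviving ones must reproduce exactly the exponents $\bigl(a_{\La^*}(s)+1,\,l_{\La^\cd}(s)\bigr)$ and $\bigl(a_{\La^\cd}(s),\,l_{\La^*}(s)+1\bigr)$ for $s\in\B(\La)$. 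That verification --- tracking how the legs of the two differently ordered compositions $((\La^a)^R,\La^s)$ and $(\La^a,(\La^s)^R)$ decompose via \eqref{defla} and reassemble into $l_{\La^\cd}$ and $l_{\La^*}$, and checking that the cells excluded from $\B(\La)$ genuinely cancel --- is the entire proposition; none of it is carried out.

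For comparison, the paper avoids the global bookkeeping altogether by inducting on the fermionic rows: the identity is immediate when $\La^a=\emptyset$ (both sides reduce to $d'_{\La^s}$), and one then adds a single fermionic row of length $b>\La^a_1$ and shows that the multiplicative variations $\mathbf{\Delta}$ of the two sides agree. This splits into four local checks (the change in the legs of the long bosonic rows, and the contribution of the new row, for each of $d'_{((\La^a)^R,\La^s)}$ versus $\bar h^\uparrow_\La$ and $[d_{(\La^a,(\La^s)^R)}]_0$ versus $\bar h^\downarrow_\La$, where $\bar h^{\uparrow\downarrow}$ extend the products to all cells so that their ratio still equals $h^\uparrow_\La/h^\downarrow_\La$). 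Each check involves only cells in columns $j\le b+1$, which makes the leg-length comparison tractable. A direct multiset matching along the lines you sketch may well be possible, but until the exponent identities are actually established your argument has no more content than the statement itself.
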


\begin{proof}
First, observe that the identity \eqref{ident} is satisfied identically (for all $N$ actually) when 
$\La^a=\emptyset$ and $\La^s=\mu$, since
\begin{equation}
\text{l.h.s. \eqref{ident}}\,=  \lim_{N\rw\y}\frac{d'_{ \mu}\,d_{{\mu}^R} }{d_{{\mu}^R}}= d'_\mu \qquad {\rm and} \qquad 
\text{r.h.s. \eqref{ident}}\, =\frac{{h_{\mu}}\,h^\uparrow_\mu}{h^\downarrow_\mu}=\frac{{h_{\mu}}\,d'_\mu}{h_\mu}=d'_\mu
.\end{equation}
The result is thus true in the case $\La^a=\emptyset$. 
We thus suppose by induction that
\eqref{identa} holds for some  $\La=(\La^a;\La^s)$ and the aim is to prove that \eqref{identa} is still valid if we add a fermionic row $b>\La^a_1$ to
obtain $\tilde \Lambda=({\tilde \Lambda}^a;\Lambda^s)$, where 
$\tilde {\Lambda}^a=(b,\Lambda_1^a,\Lambda_2^a,\dots)$.
Defining
\begin{equation}
{\mathbf \Delta} F(\La) =\frac{F(\Lt)}{F(\La)}\, ,
\end{equation}
it thus suffices to demonstrate that
\begin{equation}\label{cgcd}
{\mathbf \Delta} \,\text{l.h.s.}\,\eqref{identa}= 
{\mathbf \Delta} \,\text{r.h.s.}\,\eqref{identa} \, .
 \end{equation}
Given that $\Lt^s=\La^s$, the factors $d_{{\La^s}^R} $ and $h_{{\La^s}} $ are not affected by the transformation $\La\rw\Lt$.  Hence
\begin{equation}
{\mathbf \Delta} \,\text{l.h.s.}\,\eqref{identa}= 
{\mathbf \Delta}
\left( \frac{d'_{({\La^a}^R, \La^s)}}{ \left[
{d_{(\La^a,{\La^s}^R)}}\right]_0
 } \right)
\end{equation}
Finally, \eqref{cgcd} will follow from the two relations
\begin{equation}
\label{2rel}
(1):\,{\mathbf \Delta}\,
d'_{({\La^a}^R, \La^s)}= {\mathbf \Delta} { \,\bar h^\uparrow_\La} \qquad 
{\rm and} \qquad
(2):\, {\mathbf \Delta} \left[
{d_{(\La^a,{\La^s}^R)}}\right]_0
={\mathbf \Delta} {??bar h^\downarrow_\La} 
\end{equation}
where $\bar h^{\uw \dw}$ is defined as $h^{\uw \dw}$ except that the product runs over all squares of $\La$:
\begin{equation}\bar h^\uparrow_\La=\prod_{s\in\La} (1-q^{a_{\La^*}(s)+1}t^{{l}_{\Lambda^{\circledast}}(s)})\qquad\text{and}\qquad\bar h^\downarrow_\La=\prod_{s\in\La}(1- q^{{a}_{\Lambda^{\circledast}}(s)}t^{l_{\La^*}(s)+1}).\end{equation} 
Observe that,
\begin{equation}
\frac{\bar h^\uparrow_\La}{\bar h^\downarrow_\La}=\frac{h^\uparrow_\La}{h^\downarrow_\La}
\end{equation}
and thus, as claimed, we only have to prove the two
 relations \eqref{2rel}.

There are two types of contributions to ${\mathbf \Delta}$: those corresponding to
the modifications to the leg-lengths of the squares of $\La^s$ in rows of length  larger than $b$, denoted by 
${\mathbf \Delta}_1$, and those corresponding to
the squares of the added fermionic row, denoted by ${\mathbf \Delta}_2$. They will be treated separately.

Consider first the variation
${\mathbf \Delta}_1$ 
and the relation (1). In this case, $d'_\gamma \rightarrow d'_{\tilde \gamma}$ where
  $\gamma=({\La^a}^ R,\La^s)=$ and $\tilde \gamma=({\La^a}^R,b,\La^s)
=({{(\tilde \La}^a)^R},\La^s)$. 
  The leg-lengths of the squares in columns $1\leq j\leq b+1$ and rows    $\gamma_i>b$ is increased by 1 by adding the new fermionic row.
We have thus:
\begin{equation}
\label{rel11}\,{\mathbf \Delta}_1\,
d'_{\gamma}= \prod_{\substack{s=(i,j)\in\gamma\\\gamma_i>b\\1\leq j\leq b+1}}\frac{1-q^{a_\gamma(s)+1}t^{l_\gamma(s)+1}}{1-q^{a_\gamma(s)+1}t^{l_\gamma(s)}}
\end{equation}
where we indicated explicitly with respect to which diagram the arm- and leg-lengths are calculated. 
Consider now the corresponding variation of $h^\uw$. Since the expression of $\bar h^\uw$ involves $l_{\La^\cd}$, 
the addition of the fermionic row of size $b$ increases by 1 the leg-length of  the squares in column $1\leq j\leq b+1$ of rows  $\La_i^s>b$:
\begin{equation}
\label{rel12}\,{\mathbf \Delta}_1\,
\bar h^\uw_\La= \prod_{\substack{s=(i,j)\in\La^s \\\La_i^s>b\\1\leq j\leq b+1}}\frac{1-q^{a_{\La^*}(s)+1}t^{l_{\La^\cd}(s)+1}}{1-q^{a_{\La^*}(s)+1}t^{l_{\La^\cd}(s)} }.
\end{equation}
In order to compare expressions  \eqref{rel11} and \eqref{rel12}, 
we need to clarify the meaning of the entries in each product.
Note that in the first product, the rows
  $\gamma_i>b$  are such that $\gamma_i=\Lambda_{i'}^s$ for some $i'$.
 The product is thus over the same number of squares in the two  cases.
 Now, in reordering the rows of 
   $\gamma$ to get $\La^*$ (an operation that does not affect the arm-lengths), we readily see that $a_\gamma =a_{\La^*}$ in \eqref{rel11}. 
 Next, to  compare the leg-lengths, we  note that for $s=(i,j)$, the leg-length
    $l_\gamma(s)=l_\gamma^\dw(s)+l^\uw_\gamma(s)$ in \eqref{rel11} is such that
\begin{equation}
l_\gamma^\dw=\# \{k > i' \, |\, j \leq \La^s_k  \}
\end{equation}
and
\begin{equation}\label{eqlup1}
l_\gamma^\uw =\# \{k <i \, |\, j \leq \gamma_k +1\leq \gamma_i \} = \# \{ k \, |\, j \leq {\tilde \La}_k^a+1\} \, .
\end{equation}
The sum $l^\uw_\gamma+l^\dw_\gamma$ thus corrresponds exactly to the definition 
of $l_{\La^\cd}$ in \eqref{rel12}.
This demonstrates the equivalence of  relations \eqref{rel11} and \eqref{rel12}.

Consider next relation (2), focusing again on the contribution ${\mathbf \De}_1$.  Let us first obtain the variation resulting from $d_\eta\rightarrow d_{\tilde \eta} $, where  $\eta=(\La^a,{\La^s}^R)$ and $\tilde \eta = (b,\La^a,{\La^s}^R)$.  The analysis of ${\mathbf \Delta}_1 d_\eta$ is similar to the one above for ${\mathbf \Delta}_1 d_\gamma'$, except that we need to  keep in mind that the contribution of the first column is not considered anymore. We find
\begin{equation}
\label{rel21}\,{\mathbf \Delta}_1\,
 \left[
{d_{\eta}}\right]_0
= \prod_{\substack{s=(i,j)\in\eta\\\eta_i>b\\2\leq j\leq b+1}}
\frac{1-q^{a_\eta(s)+1}t^{l_\eta(s)+2}}{1-q^{a_\eta(s)+1}t^{l_\eta(s)+1}}.
\end{equation}
Let us now turn to the corrresponding
  variation in $\bar h^\dw$. The expression for $\bar h^\dw$
involves  $a_{\La^\cd}$ which is the same as $a_{\La^*}$ for a bosonic row. Also, since the leg-length entering in $\bar h^\uw$ is $l_{\La^*}$, which does not count the possible  circle at the end of the column, the addition of the fermionic row 
of length $b$ increases by 1 the leg-lengths of the squares in columns   $1\leq j\leq b$ (and not $b+1$) of the rows $\La^s_i>b$:
\begin{equation}
\label{rel22}\,{\mathbf \Delta}_1\,\bar 
h^\dw_\La= \prod_{\substack{s=(i,j)\in\La^s\\\La_i^s>b\\1\leq j\leq b}}
\frac{1-q^{a_{\La^*}(s)}t^{l_{\La^*}(s)+2}}{1-q^{a_{\La^*}(s)}t^{l_{\La^*}(s)+1}}.
\end{equation}
 Before we can do a direct  comparison between \eqref{rel21} and \eqref{rel22}, 
we have to perform the substitution $j\rw j+1$ in \eqref{rel21}.
Since $a_\eta(i,j+1)+1=a_\eta(i,j)$, this gives
\begin{equation}
\label{rel21bis}\,{\mathbf \Delta}_1\,
 \left[
{d_{\eta}}\right]_0
= \prod_{\substack{s=(i,j+1)\in\eta\\\eta_i>b\\1\leq j\leq b}}
\frac{1-q^{a_\eta(i,j)}t^{l_\eta(s)+2}}{1-q^{a_\eta(i,j)}t^{l_\eta(s)+1}} \, .
\end{equation}
Letting $\eta_i = \Lambda_{i'}^s$, we see that the powers of $q$ are identical 
in both expressions.
Moreover, 
since the block 
 $\La^a$ still lies in the upper position of the composition, 
we can write the leg-length $l_\eta(i,j+1)$, using eq. \eqref{defla}, as 
\begin{equation}
 l_\eta(i,j+1)= \#\{ k \, | \, j \leq \La^a_k   \} + \# \{k>i' \, |\,  j \leq \La^s_k  < \Lambda_{i'}^s \} + \#\{k<i' \, |\, \La^s_k=\La^s_{i'}\}  \, , 
\end{equation}
which  corresponds, up to a reordering of the parts of $\Lambda^s$
of the same size as $\Lambda^s_{i'}$, 
precisely to the expression  of $l_{\La^*}$ in \eqref{rel22}.
We have thus verified the equivalence of \eqref{rel21} and \eqref{rel22}.

Consider now the variation ${\mathbf \De}_2$, namely, the  contribution of the squares of the added fermionic row of size $b$.
As before, we let $\gamma=({\La^a}^ R,\La^s)$, $\tilde \gamma=({\La^a}^ R,b,\La^s)$ and
$\tilde \Lambda=({\tilde \Lambda}^a;\Lambda^s)$ with 
$\tilde {\Lambda}^a=(b,\Lambda_1^a,\Lambda_2^a,\dots)$.  If the added fermionic row of size $b$ corresponds  to $\tilde \gamma_i$ (resp. 
$\tilde \Lambda_{i'}$) in $\tilde \gamma$ (resp.
$\tilde \Lambda$), the verification of  relation (1) amounts to compare
\begin{equation}
\label{re11}
{\mathbf \Delta}_2\, d'_{\gamma}= \prod_{1\leq j\leq b}
[1-q^{b-j+1}t^{l_{\tilde \gamma}(i,j)}]\quad \text{and}\quad
{\mathbf \Delta}_2\, \bar h^\uw_{\La}= \prod_{1\leq j\leq b}
[1-q^{b-j+1}t^{l_{\tilde \La^\cd}(i',j)}] \, .
\end{equation}
The powers of $q$ are manifestly the same in the two 
  contributions.
The expression for the leg-length on the l.h.s. is
\begin{equation}
l_{\tilde \gamma}(i,j)=l^\uw(i,j)+l^\dw(i,j)\, ,
\end{equation}
where 
\beq    l^\uw(i,j) =\#\{k\, |\, j \leq \La^a_k+1\}\quad \text{and}\quad l^\dw(i,j)=\#\{k\,|\, j\leq \La^s_k\leq b\}\, .\eeq
This form of $l_{\tilde \gamma}(i,j)$ is clearly the same as the expression of
 $l_{\tilde \La^\cd}(i',j)$ on the r.h.s, which demonstrates the equivalence  of the two variations in \eqref{re11}.

It only remains to establish the correctness of relation (2) under ${\mathbf \Delta}_2$.  Recall that 
$\eta=(\La^a,{\La^s}^R)$ and $\tilde \eta = (b,\La^a,{\La^s}^R)$.
If the added fermionic row of size $b$ corresponds  to $\tilde \Lambda_{i}$ in 
$\tilde \Lambda$, the verification of  relation (2) amounts to compare
\begin{equation}
\label{re12}
{\mathbf \Delta}_2\, d_{\eta}= 
\prod_{1\leq j\leq b}
[1-q^{b-j+1}t^{l_{\tilde \eta}(1,j)+1}]
\end{equation}
and \beq  
{\mathbf \Delta}_2\, h^\dw_{\La}=
 \prod_{1\leq j\leq b}[1-q^{b-j+1}t^{l_{\tilde \La^*}(i,j)+1}]\, ,\eeq
where we used the fact that $a_{{\tilde \Lambda}^\circledast}(i,j)=
a_{{\tilde \Lambda}^*}(i,j)+1$ given that row $i$ of $\tilde \Lambda$ is fermionic. 
 Again the powers of $q$ match.  We have
\begin{equation}
l_{\tilde \eta}(1,j)=l^\dw(1,j)=\#\{k \, | \, j \leq \La^a_k \leq b {\rm ~or~}  j\leq \La^s_k \leq b \}\, ,
\end{equation}
which corresponds to $l_{\tilde \La^*}(i,j)$ on the r.h.s.
This completes the demonstration of the two relations \eqref{2rel} and thus of Proposition \ref{con1}.
 \end{proof}

Let us consider an example: $\La=(0;4,1)$ and $\Lt=(2,0;4,1)$.
We have thus added a fermionic row of length 2. We ignore the zeros 
of $\La^s$ given that they contribute only to the first column 
which is removed.  
The triangle at the end of the added row is written in parenthesis and the  circle of the added row is marked by a $\times$:
\begin{equation}
\tilde \gamma=(0,2,4,1):\quad {\tableau[scY]{\bl\triangle\\&& \bl\,(\!\triangle\!)\\&&&& \bl \triangle \\ &\bl \triangle\\}}
\quad\quad
\tilde \eta= (2,0,1,4):\quad {\tableau[scY]{&&\bl\,(\!\triangle\!)\\ \bl\triangle\\&\bl \triangle \\ &&&& \bl \triangle\\}}
\quad\quad
\end{equation}
$$ \tilde \Lambda=(2,0;4,1):\quad {\tableau[scY]{&&&&\bl\\&&\bl\tcercle{$\times$}\\&\bl \\ \bl\tcercle{}\\ }}\, .$$
We have thus,  for the product of the two variations ${\mathbf \De}_1$ 
and ${\mathbf \De}_2$,
\begin{align}
\label{ex11}
&\frac{d'_{(0,2,4,1)} }{d'_{(0,4,1)}}= \frac{(1-q^4t^3)(1-q^3t)(1-q^2t)}{(1-q^4t^2)(1-q^3)(1-q^2)} \times (1-q^2t^2)(1-q)
=\frac{h^\uw_{(2,0;4,1)} }{h^\uw_{(0;4,1)}}
\\
&\frac{d_{(2,0,1,4)} }{d_{(0,1,4)}}= \frac{(1-q^3t^3)(1-q^2t^2)}{(1-q^3t^2)(1-q^2t)} \times(1-q^2t)(1-q)=
\frac{h^\dw_{(2,0;4,1)} }{h^\dw_{(0;4,1)}}. \nonumber
\end{align}

\end{appendix}

\end{document}